\theoremstyle{plain}
\newtheorem{theorem}{Theorem}
\newtheorem{lemma}{Lemma}
\newtheorem{proposition}[lemma]{Proposition}
\theoremstyle{definition}
\newtheorem*{definition}{Definition}
\renewcommand{\le}{\leqslant}
\renewcommand{\ge}{\geqslant}
\newcommand{\Oh}{\mathcal{O}}
\renewcommand{\Pr}[1]{\ensuremath{\operatorname{\mathbf{Pr}}\left[#1\right]}}
\newcommand{\Ex}[1]{\ensuremath{\operatorname{\mathbf{E}}\left[#1\right]}}
\newcommand{\Var}[1]{\ensuremath{\operatorname{\mathbf{Var}}\left[#1\right]}}
\newcommand{\diam}{\operatorname{diam}}
\newcommand{\polya}{\operatorname{Urn}}
\newcommand{\na}{N^{\ast}}
\title{Randomized Rumor Spreading in \\ Poorly Connected Small-World Networks\thanks{A preliminary version of this paper has appeared in proceedings of the 28th International Symposium on Distributed Computing (DISC 2014), pages 346--360.}}
\author{Abbas Mehrabian\thanks{Supported by the Vanier Canada Graduate Scholarships program. Part of this work was done while the first author was visiting Monash University, Australia.} \\
{\small Department of Combinatorics and Optimization,}\\ {\small University of Waterloo, Waterloo, Ontario, Canada} \\
{\small \texttt{amehrabi@uwaterloo.ca}}
\and  Ali Pourmiri \\
{\small Max Planck Institute for Informatics, Saarbr\"ucken, Germany} \\
{\small \texttt{pourmiri@mpi-inf.mpg.de}}
}
\begin{document}
\maketitle

\begin{abstract}
The {\sf Push-Pull} protocol is a well-studied round-robin rumor spreading protocol defined as follows: initially a node knows a rumor and wants to spread it to all nodes in a network quickly.
In each round, every informed node sends the rumor to a random neighbor, and every uninformed node contacts 
a random neighbor and gets the rumor from her if she knows it.
We analyze the behavior of this protocol on random $k$-trees, a class of power law graphs, which are small-world and have large clustering coefficients, built as follows:
initially we have a $k$-clique.
In every step a new node is born, a random $k$-clique of the current graph is chosen,
and the new node is joined to all nodes of the $k$-clique.
When $k\ge2$ is fixed, we show that if initially a random  node is aware of the rumor,
then with probability $1-o(1)$ after 
$\Oh\left( (\log n)^{1+{2}/{k}} \cdot \log \log n\cdot f(n) \right)$rounds the rumor propagates to $n-o(n)$ nodes,
where $n$ is the number of nodes and $f(n)$ is any slowly growing function.
Since these graphs have polynomially small conductance, vertex expansion $\Oh(1/n)$ and constant treewidth,
these results demonstrate that {\sf Push-Pull} can be efficient even on poorly connected  networks.

On the negative side, we prove that  with probability $1-o(1)$ the protocol needs at least $\Omega\left(n^{({k-1})/({k^2+k-1})}/f^2(n)\right)$ rounds to inform all nodes. 
This exponential dichotomy between time required for informing \emph{almost all} and \emph{all} nodes is striking. 
Our main contribution is to present, for the first time, a natural class of random graphs in which such a phenomenon can be observed.
Our technique for proving the upper bound successfully carries over to a closely related class of graphs, the random $k$-Apollonian networks, for which we prove an upper bound of 
$\Oh\left( (\log n) ^{c_k} \cdot \log \log n \cdot f(n) \right)$ rounds for informing $n-o(n)$ nodes with probability $1-o(1)$ when $k\ge3$ is fixed.
{Here, $c_k = (k^2-3)/(k-1)^2 < 1 + 2/k$.}

\textbf{Keywords:} randomized rumor spreading,
push-pull protocol,
random $k$-trees,
random $k$-Apollonian networks,
{urn models}.
\end{abstract}

\section{Introduction}

Randomized rumor spreading is an important primitive for information dissemination in networks and has
numerous applications in network science, ranging from spreading information in the WWW and Twitter to spreading viruses and diffusion of ideas in human communities (see~\cite{CLP09,DFF11,DFF12,DFF12b,FPS12}).
A well studied rumor spreading protocol is the \emph{\textsf{Push-Pull} protocol},
introduced by Demers, Greene, Hauser, Irish, Larson, Shenker, Sturgis, Swinehart, and Terry~\cite{DGH+87}.
Suppose that one node in a  network is aware of a piece of information, the `rumor.'
The protocol proceeds in rounds.
In each round, every \emph{informed} node contacts  a random neighbor and sends the rumor to it (`pushes' the rumor),
and every \emph{uninformed} nodes contacts a random neighbor and gets the rumor if the neighbor knows it (`pulls' the rumor).
Note that this is a synchronous protocol, e.g.\ a node that receives a rumor in a certain round cannot send it on in the same round.

A point to point communication network can be modelled as an undirected graph: the nodes represent the processors and the links represent communication channels between them. 
Studying rumor spreading has several applications to distributed computing in such networks, of which we mention just two. 
The first is in broadcasting algorithms: a single processor wants to broadcast a piece of information to all other processors in the network (see~\cite{broadcast_survey} for a survey). 
There are at least three advantages to 
the {\textsf{Push-Pull} protocol}:
it is simple (each node makes a simple local decision in each round; no knowledge of the global topology is needed; no state is maintained), scalable (the protocol is independent of the size of network: it does not grow more complex as the network grows) and robust (the protocol tolerates random node/link failures without the use of error recovery mechanisms, see~\cite{FPRU90}).
A second application comes from the maintenance of
databases replicated at many sites, e.g., yellow pages, name servers, or server directories. 
There are updates injected at various nodes, and these updates must propagate to all nodes in the network. 
In each round, a processor communicates with a random neighbor and they share any new information, so that eventually all copies of the database converge to the same contents. See~\cite{DGH+87} for details.
Other than the aforementioned applications, rumor spreading protocols have successfully been applied in various contexts such as resource discovery~\cite{Harchol-Balter1999}, distributed averaging~\cite{Boyd2006}, data aggregation~\cite{KDG03},
and the spread of computer viruses~\cite{BBCS05}.

We only consider simple, undirected and connected graphs.
For a graph $G$, let $\Delta(G)$ and $\diam(G)$ denote the maximum degree and the diameter of $G$, respectively, and let $\deg(v)$ denote the degree of a vertex $v$.
Most studies in randomized rumor spreading focus on the \emph{runtime} of this protocol,
defined as the number of rounds taken until a rumor initiated by one vertex reaches all other vertices.
It is clear that $\diam(G)/2$ is a lower bound for the runtime of 
the {\textsf{Push-Pull} protocol}.
We say an event happens \emph{with high probability (whp)} if its probability approaches $1$ as $n$ goes to infinity.
Feige, Peleg, Raghavan and Upfal~\cite{FPRU90} showed that for an $n$-vertex $G$, whp the rumor reaches all vertices in
$\Oh(\Delta(G)\cdot (\diam(G)+\log n))$ rounds.
This protocol has been studied on many graph classes such as
complete graphs~\cite{KSSV00},
Erd\H{o}s-R\'{e}yni random graphs~\cite{comcom,FPRU90,FHP10,finest}, random regular graphs~\cite{BEF08,FP10}, and hypercube graphs~\cite{FPRU90}.
For most of these classes it turns out that whp the runtime is $\Oh(\diam(G)+\log n)$, which does not depend on the maximum degree.

Randomized rumor spreading has recently been studied on real-world networks models.
Doerr, Fouz, and Friedrich~\cite{DFF11} 
proved an upper bound of $\Oh(\log n)$ for the runtime on preferential attachment graphs, and
Fountoulakis, Panagiotou, and Sauerwald~\cite{FPS12} proved the same upper bound (up to constant factors) for the runtime on {the giant component of} random graphs with given expected degrees (also known as the Chung-Lu model) with power law degree distribution.

The runtime  is closely related to the \emph{expansion profile} of the graph.
Let $\Phi(G)$ and $\alpha(G)$ denote the conductance
and the vertex expansion
of a graph $G$, respectively.
After a series of results by various scholars, Giakkoupis~\cite{G13,Gia11} showed that for any \mbox{$n$-vertex} graph $G$, the runtime of the {\sf Push-Pull} protocol is
${\Oh}\left(\min\{\Phi(G)^{-1}\cdot{\log n}, \alpha(G)^{-1}\cdot \log \Delta(G) \cdot \log n \}\right)$.
It is known  that  whp preferential attachment graphs and random graphs with {power law} expected degrees have
conductance $\Omega(1)$ (see~\cite{CLV03,MPS03}).
So it is not surprising that rumors spread fast on these graphs.
Censor-Hillel, Haeupler, Kelner, and Maymounkov~\cite{CHKM12} presented a different rumor spreading protocol that whp distributes the rumor in $\Oh(\diam(G)+{\mathrm{polylog}}(n))$ rounds on any connected $n$-vertex graph, which seems particularly suitable for poorly connected graphs.

\subsection{Our contribution}

We study the {\sf Push-Pull}  protocol on random $k$-trees,
a class of random graphs defined as follows.
\begin{definition}[Random $k$-tree process~\cite{Ga09}]\label{def_random_k_tree}
Let $k$ be a positive integer.
Build a sequence $G(0)$, $G(1),$ $\dots$ of random graphs as follows.
The graph $G(0)$ is just a clique on $k$ vertices.
For each $1\leq t\leq n$, $G(t)$ is obtained from $G(t-1)$ as follows:
a $k$-clique of $G(t-1)$ is chosen uniformly at random,
a new vertex is born and is joined to all vertices of the chosen $k$-clique.
The graph $G(n)$ is called a random $k$-tree on $n+k$ vertices.
\end{definition}
We remark that this process is different from the random $k$-tree process defined by Cooper and Uehara~\cite{CU10} which was further studied in~\cite{CF13}.

Sometimes it is convenient to view this as a `random graph evolving in time.'
In this interpretation, in every round $1,2,\dots,$ a new vertex is born and is added to the evolving graph,
and $G(t)$ denotes the graph at the end of round $t$.
Observe that $G(t)$ has $k+t$ many vertices and $kt+1$ many $k$-cliques.

As in the preferential attachment scheme,
the random $k$-tree process enjoys a `the rich get richer' effect.
Think of the number of $k$-cliques containing any vertex $v$ as the `wealth' of $v$ (note that this quantity is linearly related to $\deg(v)$). 
Then, the probability that the new vertex attaches to $v$ is proportional to the wealth of $v$,
and if this happens, the wealth of $v$ increases by $k-1$.
On the other hand, random $k$-trees have much larger clustering coefficients than preferential attachment graphs, as all neighbors of each new vertex are joined to each other.
It is well-known that real-world networks tend to have large clustering coefficients (see, e.g.,~\cite[Table~1]{WS98}).
  
Gao~\cite{Ga09} showed that whp the degree sequence of $G(n)$ asymptotically follows a power law distribution with exponent $2+\frac{1}{k-1}$.
In Section~\ref{expansion} we show that whp the diameter of $G(n)$ is $\Oh(\log n)$, and its clustering coefficient is at least 1/2, as opposed to preferential attachment graphs and random graphs with {power law} expected degrees, whose clustering coefficients are $o(1)$ whp.
As per these properties, random $k$-trees may serve as more realistic models for real-world networks.

On the other hand, in Section~\ref{expansion} we prove that with high probability a random $k$-tree on $n+k$ vertices
has conductance $\Oh\left(\log n \cdot n^{-1/k}\right)$ and vertex expansion $\Oh(k/n)$. Therefore we cannot resort to existing results linking the runtime to expansion properties
to show rumors spread fast in these graphs.
Another interesting structural property of a random $k$-tree is its {\it treewidth}
(see~\cite{Kl94} for a comprehensive survey).
Gao~\cite{Ga12} proved that many random graph models, including Erd\H{o}s-R\'{e}yni random graphs with expected degree $\omega(\log n)$ and preferential attachment graphs with out-degree greater than $11$, have treewidth $\Theta(n)$,
whereas all random $k$-trees have treewidth $k$ by construction. (According to~\cite{Ga12}, not much is known about the treewidth of a preferential attachment graph with out-degree between 3 and 11.)

In conclusion, distinguishing features of random $k$-trees, such as high clustering coefficient, {bad expansion (polynomially small conductance) and tree-like structure (small treewidth)},
inspired us to study randomized rumor spreading on this unexplored random environment.
Our first main contribution is the following theorem.

\begin{theorem}
\label{thm:mainrandomktree}
Let $k \ge2$ be constant and let $f(n)=o(\log \log n)$ be an arbitrary function going to infinity with $n$.
If initially a random vertex of an $(n+k)$-vertex random $k$-tree knows a rumor, then
with high probability after $\Oh\left( (\log n)^{1+\frac{2}{k}} \cdot \log \log n\cdot f(n)^{\frac{3}{k}} \right)$ rounds of the {\sf Push-Pull} protocol, $n-o(n)$ vertices will know the rumor.
\end{theorem}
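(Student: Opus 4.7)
The plan is a two-phase strategy: first transport the rumor from the random starting vertex $s$ (typically young and of low degree) upward through a short chain of ancestors to a high-degree ``old'' vertex, and then let it cascade back down to a linear fraction of the graph. The key quantitative tool is a P\'olya urn analysis that concentrates the degree of a vertex born at time $t$ around $k(n/t)^{1/k}$, with small deviations controlled by slowly-growing functions of $n$ such as $f$.

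\textbf{Ancestor chain.} For each non-root vertex $v$ born at time $t_v$, designate a canonical parent $\pi(v)$ among the $k$ vertices of the clique that $v$ attached to --- say the one born earliest. This produces a rooted tree $T$ on $V(G(n))$. From a uniformly random $s$, walk up $T$ until reaching a vertex $v_\ell$ born in the first $n/\mathrm{polylog}(n)$ rounds. A martingale/union-bound argument on the birth times along the chain, combined with urn concentration (so that parents are typically drawn from ``old'' endpoints of the chosen clique), shows that whp $\ell=\Oh(\log n)$ and every $v_i$ on the chain satisfies $\deg(v_i)=\Oh\bigl((\log n)^{1/k}f(n)^{1/k}\bigr)$.

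\textbf{Climbing and descending.} For each edge $v_iv_{i+1}$ of the chain, $v_i$ pulls from $v_{i+1}$ in each round with probability $1/\deg(v_i)$, so the traversal time is geometric with mean $\Oh((\log n)^{2/k}f(n)^{2/k})$ (after accounting for both endpoints and a Chernoff-type overshoot). Summing over $\ell=\Oh(\log n)$ edges and multiplying by a $\log\log n$ factor to bound the maximum of the $\ell$ geometric variables yields a climbing time of $\Oh\bigl((\log n)^{1+2/k}\log\log n\cdot f(n)^{3/k}\bigr)$, matching the target. Once $v_\ell$ is informed, a symmetric descending analysis in the subtree rooted at $v_\ell$ shows that within the same order of additional rounds, $n-o(n)$ descendants learn the rumor; this phase is in fact easier because the late-born, low-degree vertices at the bottom of $T$ pull from any informed neighbor with constant probability per round.

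\textbf{Main obstacle.} The crux is the climbing phase: one must simultaneously argue that (a) the birth times along the ancestor chain shrink fast enough that $\ell=\Oh(\log n)$, and (b) all degrees along the chain remain $\mathrm{polylog}(n)$. These requirements couple the tree topology of $T$ to the P\'olya urns driving the $k$-clique counts, since higher-degree vertices absorb more new cliques and so tend to lengthen chains going through them. Disentangling this dependence and pushing the urn concentration uniformly through all $\Oh(\log n)$ levels of the chain is where the technical weight of the proof of Theorem~\ref{thm:mainrandomktree} will lie.
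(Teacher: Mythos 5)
There is a genuine gap, and it sits exactly where you flag the ``main obstacle'': the edges of your ancestor chain between two \emph{old} vertices. First, a factual correction that matters: the degree of a vertex born at time $t$ concentrates around $\Theta\bigl((n/t)^{(k-1)/k}\bigr)$, not $(n/t)^{1/k}$ (these coincide only for $k=2$); see Propositions~\ref{pro:moments} and~\ref{pro:degs}. Now, if you stop the climb at the first ancestor $v_\ell$ born before round $n/\mathrm{polylog}(n)$, then indeed all chain degrees are polylogarithmic, but the subtree of $T$ below $v_\ell$ contains only $O(\mathrm{polylog}(n))$ vertices in expectation, so the descending phase informs almost nobody; two random starting vertices have ancestor chains terminating at different, typically non-adjacent vertices of $G(n/\mathrm{polylog}(n))$. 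To actually connect $n-o(n)$ vertices you must continue climbing into $G(O(1))$ or at least route between many old vertices, and there consecutive ancestors \emph{both} have degree $\Theta(n^{(k-1)/k})$ whp. For such an edge $uv$, push and pull each succeed with probability only $1/\deg$ of the respective endpoint, so the traversal time is polynomial in $n$, and your geometric-variable bound collapses. No amount of tightening the urn concentration fixes this, because the polynomial degrees of old vertices are not a deviation event --- they are the typical behaviour.

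The paper's proof supplies precisely the missing mechanism. It does not route the rumor directly across an edge joining two high-degree vertices; instead it declares an edge ``fast'' if either endpoint has degree at most $\tau$ \emph{or the two endpoints have a common neighbor of degree at most} $\tau$, and the rumor crosses via that low-degree relay in two hops. To guarantee such a relay exists, the forest $F$ on $G(m)$ attaches each newborn $x$ not to the earliest-born vertex of its clique (your choice) but to the $u$ maximizing $N(xu)$, the number of $k$-cliques containing the edge; Lemma~\ref{lem:counting_new} then shows that whenever $x$ has many neighbors in $G(m)$, the chosen edge lies in at least $(k^2-k)/2$ cliques, so whp a late-born (hence low-degree) common neighbor appears (Lemma~\ref{FastEdges}). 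Finally, rather than a single root-to-leaf descent, the paper covers $n-o(n)$ vertices by decomposing $G-G(m)$ into the pieces grown from the $mk+1$ cliques of $G(m)$ and showing almost all pieces consist entirely of low-degree modern vertices with a nice representative (Lemma~\ref{lem:nonnice}), then invokes the path-based spreading bound of Lemma~\ref{lem:feige}. Your proposal would need both the common-neighbor relay and the piece decomposition to close; as written, the climbing phase cannot reach a vertex whose subtree is large without encountering edges it cannot cross in polylogarithmic time.
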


We give a high-level sketch of the proof of Theorem~\ref{thm:mainrandomktree}.
Let $m=o(n)$ be a suitably chosen parameter,
and note that $G(m)$ is a subgraph of $G=G(n)$.
Consider the connected components of $G - G(m)$.
Most vertices born later than round $m$ have relatively small degree,
so most these components have  a small maximum degree (and logarithmic diameter)
thus the rumor spreads quickly inside each of them.
A vertex $v \in V(G(m))$ typically has a large degree, but this means there is a high chance
that $v$ has a neighbor $x$ with small degree, which quickly receives the rumor from $v$ and spreads it (or vice versa).
We  build an almost-spanning tree $T$ of $G(m)$ with logarithmic height, such that
for every edge $uv$ of $T$, one of $u$ and $v$ have a small degree,
or $u$ and $v$ have a common neighbor with a small degree.
Either of these situations mean the rumor is exchanged quickly between $u$ and $v$.
This tree $T$ then works as a `highway system' to spread the rumor within vertices of $G(m)$
and from them to the components of $G-G(m)$.

The main novelty in this proof is how the almost-spanning tree is built and used (using small degree vertices for fast rumor transmission between high degree vertices has also been used in previous papers, e.g.\ \cite{DFF11,FPS12}).
Our second main contribution is the following theorem, which gives a polynomial lower bound for the runtime.

\begin{theorem}
\label{thm:lowerbound}
Let $f(n)=o(\log \log n)$ be an arbitrary function going to infinity with $n$.
Suppose that initially one vertex in the random $k$-tree, $G(n)$, knows the rumor. 
Then, with high probability the {\sf Push-Pull} protocol needs at least 
$n^{(k-1)/(k^2+k-1)}f(n)^{-2}$ rounds to inform all vertices of $G(n)$.
\end{theorem}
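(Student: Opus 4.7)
The plan is to exhibit, whp, a vertex $v$ in $G(n)$ whose informing time exceeds $n^{(k-1)/(k^2+k-1)}/f(n)^2$, by isolating $v$ behind a small cut whose edges have very low push-pull transmission rate.

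I would begin by parametrising via the P\'olya urn representation of the random $k$-tree process (as used in~\cite{Ga09}): the degree at time $n$ of a vertex born at round $\tau$ is concentrated whp around $\Theta((n/\tau)^{(k-1)/k})$. Setting the target degree $D := n^{(k-1)/(k^2+k-1)}$ and the corresponding target birth time $\tau := \Theta(n/D^{k/(k-1)}) = \Theta(n^{(k^2-1)/(k^2+k-1)})$, the main substantive step is to show that whp there exists a vertex $v$ born around round $\tau$ together with a region $R \ni v$ such that: (i) $R$ is connected to $V(G(n)) \setminus R$ by a cut $\partial R$ of at most $\Oh(f(n)^2)$ edges; (ii) every edge of $\partial R$ has \emph{both} endpoints of degree $\Omega(D)$ in $G(n)$, yielding per-edge transmission rate $\Oh(1/D)$; and (iii) $|R| = o(n)$, so that the initial vertex lies outside $R$ whp. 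A natural first candidate for $R$ is $\{v\}$ together with a carefully chosen subset of the descendants of $v$ in the random $k$-tree process; the precise definition must be arranged so that the cut avoids low-degree vertices on the inner side (which rules out most leaf descendants) while still enclosing enough structure to guarantee isolation.

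Given such an $R$, a Chernoff-type argument finishes the proof: the per-round probability of the rumor crossing $\partial R$ is at most $|\partial R| \cdot \Oh(1/D) = \Oh(f(n)^2/D)$, so in $t$ rounds the expected number of crossings is $\Oh(tf(n)^2/D)$; choosing $t = D/f(n)^2 = n^{(k-1)/(k^2+k-1)}/f(n)^2$ makes this $\Oh(1)$, and standard concentration yields that whp no crossing occurs in the first $t$ rounds, so $v$ remains uninformed.

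The main obstacle is establishing (i)--(iii) simultaneously and with high probability. The joint control required on $|R|$, $|\partial R|$, and the degrees on both sides of $\partial R$ will require delicate P\'olya urn concentration inequalities applied to the sub-process rooted at $v$, together with a union bound over candidate vertices $v$ near round $\tau$. The specific exponent $(k-1)/(k^2+k-1)$ should emerge from the trade-off between the strength of the bottleneck (larger $D$ gives a stronger lower bound) and the probability of its existence (which decreases as $D$ grows); extracting this optimum from the urn computations is where I expect the bulk of the technical work to lie.
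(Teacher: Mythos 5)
Your proposal correctly identifies the shape of the bottleneck (a small cut all of whose endpoints have degree about $D = n^{(k-1)/(k^2+k-1)}$, located around birth time $\tau \approx n^{(k^2-1)/(k^2+k-1)}$) and the easy Chernoff finish, but it has a genuine gap exactly where you say the bulk of the work lies, and the tool you name for closing it is the wrong one. The probability that any \emph{fixed} candidate vertex together with a candidate region forms such a bottleneck is only polynomially small: your premise that the degree of a vertex born at round $\tau$ is ``concentrated whp around $\Theta((n/\tau)^{(k-1)/k})$'' is false in the lower tail --- the degree is a heavy-tailed urn quantity whose value exceeds a constant fraction of its mean only with constant probability, and, worse, the requirement that the cut edges acquire \emph{no} low-degree common neighbors over rounds $\tau+1,\dots,n$ itself has probability only $n^{-\Theta(1)}$ (in the paper this event has probability $\Theta(n^{1/k-k})$). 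A union bound over candidates cannot convert ``each candidate succeeds with polynomially small probability'' into ``some candidate succeeds whp''; you need either independence or a second-moment argument, and you supply neither.

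The paper's key structural idea, which is missing from your plan, is precisely this amplification step: fix $m = \lceil f(n)\, n^{1-k/(k^2+k-1)}\rceil$, condition on $G(m)$, and observe that the $mk+1$ subgraphs (``pieces'') growing out of the $k$-cliques of $G(m)$ evolve \emph{mutually independently} once their sizes are exposed. Whp $\Omega(m)$ pieces have size $\Theta(n/m)$ up to $f(n)$ factors, each such piece independently contains an $\Omega((n/m)^{1-1/k})$-barrier (a pair of $k$-cliques whose connecting $k^2$ edges form a cut-set with all endpoints of large degree) with probability $\Omega((n/m)^{1/k-k})$, and $m$ is tuned so that $m\cdot(n/m)^{1/k-k}\to\infty$; independence then gives a barrier somewhere whp. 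This is where the exponent $(k-1)/(k^2+k-1)$ actually comes from --- the trade-off you gesture at between barrier strength and existence probability is resolved by the constraint that the number of independent trials times the per-trial success probability must diverge. Without an independent-pieces decomposition (or an equivalent second-moment computation over candidate bottlenecks, which would require controlling correlations between overlapping candidates), your outline does not yield a whp statement. As a minor additional point, with $|\partial R| = \Oh(f(n)^2)$ and $t = D/f(n)^2$ your expected number of crossings is $\Theta(1)$, not $o(1)$, so you would also need to shrink the cut to $\Oh(1)$ edges (as the paper's barrier does) for the final Markov step to go through.
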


We give a high-level sketch of the proof of Theorem~\ref{thm:lowerbound}. 
A \emph{barrier} in a graph is a subset $D$ of edges of size $\Oh(1)$, whose deletion disconnects the graph.
If both endpoints of every edge of a barrier $D$ have  large degrees, then the protocol needs a  large time to pass the rumor through $D$.
For proving Theorem~\ref{thm:lowerbound}, we prove a random $k$-tree has a barrier whp.
The main novelty in this proof is introducing and using the notion of a barrier.

It is instructive to contrast Theorems~\ref{thm:mainrandomktree} and~\ref{thm:lowerbound}.
The former implies that if you want to inform almost all the vertices, then you just need to wait for a polylogarithmic number of rounds.
The latter implies that, however, if you want to inform each and every vertex, then you have to wait for polynomially many rounds. 
This is a striking phenomenon and the main message of this paper is to present, for the first time, a natural class of random graphs in which this phenomenon can be observed.
In fact, in applications such as viral marketing and voting, it is more appealing to inform 99 percent of the vertices very quickly instead of waiting a long time until everyone gets informed.
For such applications, Theorem~\ref{thm:mainrandomktree} implies that the {\sf Push-Pull} protocol can be effective even on poorly connected graphs.

It is worth mentioning that bounds for the number of rounds to inform \emph{almost all} vertices have already appeared in the literature, see for instance~\cite{DFF12,FPS12}.
In particular, for power-law Chung-Lu graphs with exponent in $(2,3)$, it is shown in~\cite{FPS12} that whp after $\Oh (\log \log n)$ rounds the rumor spreads in $n-o(n)$ vertices, but to inform \emph{all} vertices {of the giant component} $\Theta(\log n)$ rounds are needed.
This result also shows a great difference between the two cases, however in both cases the required time is quite small.

A closely related class of graphs is the class of \emph{random $k$-Apollonian networks},
introduced by Zhang, Comellas, Fertin, and Rong~\cite{high_RANs}.
Their construction is very similar to that of random $k$-trees, with just one difference:
if a $k$-clique is chosen in a certain round, it will never be chosen again.
It is known that whp random $k$-Apollonian networks exhibit a power law degree distribution and large clustering coefficient~\cite{RANs_powerlaw,define_RANs}
and have logarithmic diameter~\cite{CF13}.
Our third main contribution is the following theorem.

\begin{theorem}
\label{thm:apollonian}
{Let $k \ge 3$ be constant and let $f(n)=o(\log \log n)$ be an arbitrary function going to infinity with $n$.
Assume that initially a random vertex of an $(n+k)$-vertex random $k$-Apollonian network knows a rumor.
Then, with high probability after 
$$\Oh\left( (\log n) ^{(k^2-3)/(k-1)^2} \cdot \log \log n \cdot f(n)^{2/k} \right)$$ rounds of the {\sf Push-Pull} protocol, at least $n-o(n)$ vertices will know the rumor.}
\end{theorem}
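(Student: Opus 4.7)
The plan is to transplant the four-part strategy of the proof of Theorem~\ref{thm:mainrandomktree} into the random $k$-Apollonian setting, updating all P\'olya-urn estimates to the new replacement parameter and then re-optimising the time cutoff. Write $G = G(n)$ for the $(n+k)$-vertex random $k$-APN, fix a parameter $m = m(n) = o(n)$ to be tuned later, and split the analysis into the old subgraph $G(m)$ and the young vertices born after round $m$. The qualitative picture should mirror the $k$-tree case: old vertices form a tree-like backbone of relatively large degree, while young vertices are grouped into small pieces dangling off this backbone and typically carry only small degrees.

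First I redo the urn bookkeeping. Because each $k$-clique of $G(t)$ is consumed upon use, every time one of the $k$-cliques through a fixed vertex $v$ is picked the clique-count of $v$ changes by $+(k-2)$ rather than $+(k-1)$, while the total number of cliques at time $t$ equals $1+(k-1)t$. The resulting P\'olya urn has replacement parameter $k-2$ against a population growing at rate $k-1$, yielding a power-law tail with exponent $(2k-3)/(k-2)$. I extract two facts from this: the maximum degree inside $G(m)$ behaves polynomially in $m$ with exponent $(k-2)/(k-1)$ whp, and for each $v\in V(G(m))$ the number of young neighbours of $v$ whose degree stays $\Oh(\log\log n\cdot f(n)^{\Oh(1)})$ is comparable to $\deg_{G(m)}(v)$ up to polylogarithmic slack. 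Second, I handle the components of $G\setminus G(m)$: each young vertex is attached to a unique $k$-clique of some earlier $G(t-1)$, so the descendants hanging off a fixed $k$-clique of $G(m)$ form a smaller independent copy of the same process; a first-moment argument over the APN tail then shows that each such component has polylogarithmic maximum degree and logarithmic diameter whp, and the Feige--Peleg--Raghavan--Upfal bound $\Oh(\Delta\cdot(\diam+\log n))$ disseminates the rumour inside any component in a polylogarithmic number of rounds once any single vertex of it is informed.

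The third and central step is to construct an almost-spanning tree $T$ of $G(m)$ of height $\Oh(\log m)$ such that along every tree edge $uv$ either one endpoint has small degree or there is a small-degree common neighbour of $u$ and $v$ in $G$. The existence of such a conduit for every tree edge follows from the young-neighbour estimate of the first step, and a short two-hop push--pull argument transmits the rumour along each tree edge in $\Oh(\log\log n\cdot f(n)^{\Oh(1)})$ rounds. Using $T$ as a highway informs every old vertex, which then seeds the rumour in each component of $G\setminus G(m)$; the total running time is the sum of ``tree height times per-edge cost'' and ``inside-component cost''.

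The main obstacle, and the only place where the proof genuinely differs in substance from the $k$-tree argument, is the joint optimisation of $m$ and the degree threshold defining ``small''. Balancing the highway cost, which grows in $m$ through the backbone degrees, against the inside-component cost, which shrinks in $m$, should select $m\approx n/(\log n)^{c}$ for some $c=c(k)$; feeding this into the APN urn parameters is what produces the exponent $(k^{2}-3)/(k-1)^{2}$ in the bound. A secondary point is that the common-neighbour construction genuinely needs $k\ge 3$, because for $k=2$ each consumed clique would give $k-2=0$ new cliques containing a given pair of endpoints and the conduit argument would collapse.
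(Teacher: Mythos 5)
Your overall architecture (split at $G(m)$, urn with replacement $k-2$ against total growth $k-1$, a low backbone tree on the old vertices with small-degree conduits, pieces hanging off it, then tune $m$) is exactly the paper's, and your exponent bookkeeping at the end is right. But there are two genuine gaps. The first and most serious: you claim the conduit exists \emph{for every} tree edge and that \emph{each} component has small maximum degree, so that the highway ``informs every old vertex'' and then every component. If that all held you would inform \emph{all} vertices in polylogarithmic time, which is exactly what the paper's lower-bound philosophy rules out. Quantitatively, with the polylogarithmic degree threshold $\tau \approx q\,(n/m)^{(k-2)/(k-1)}$ and $q=\Theta(\log\log n)$, the probability that a given backbone edge fails to be fast can only be driven down to $o(1/(f(n)\log n))$, not to $o(1/m)$; a union bound over all $\Theta(m)$ forest edges therefore fails, and whp many edges of the backbone are \emph{not} fast, and many pieces contain a vertex of degree exceeding $\tau$. (Forcing a union bound would require $q=\Theta(\log n)$ and $m\le n^{(k-1)/(k+1)}$, which destroys the polylog bound.) The paper's resolution is essential, not cosmetic: it declares a traditional vertex \emph{nice} only if its entire root-path in the forest is fast (failure probability $\Oh(p_S\log n)=o(1/f(n))$ per vertex), declares a piece nice only if its representative is nice and all its modern vertices have degree at most $\tau$, shows the \emph{expected} number of bad vertices is $o(n)$, and simply discards them. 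That is where the ``$n-o(n)$'' in the statement comes from.

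The second gap is that the conduit property does not hold for an arbitrary spanning tree of $A(m)$ and does not ``follow from the young-neighbour estimate'': one must \emph{construct} the forest so that each vertex $x$ is attached to the neighbour $u$ of its birth clique maximizing the number of \emph{active} $k$-cliques containing $xu$. A deterministic counting lemma then shows that if $x$ has at least $2k-1$ neighbours in $A(m)$, the chosen edge lies in at least $(k-1)^2/2$ active cliques, which is what makes the probability of having no modern common neighbour $\bigl((m+1)/n\bigr)^{(k-1)/2}=o(1/(f(n)\log n))$; the first-born such neighbour then has degree at most $\tau$ whp. This construction is the paper's self-declared main novelty and your proposal omits it. Two smaller points: the conduit's degree is $\Oh(\tau)=\Oh\bigl(\log\log n\cdot(n/m)^{(k-2)/(k-1)}f(n)^{\Oh(1)}\bigr)$, not $\Oh(\log\log n\cdot f(n)^{\Oh(1)})$ --- the per-edge cost carries the polylog factor, and the total is $\Oh(\tau\log n)$; and the trade-off on $m$ runs the other way from what you wrote: the highway cost $\tau\log n$ \emph{decreases} as $m$ grows, and what caps $m$ from above is the requirement that the fast-edge failure probability $\bigl((m+1)/n\bigr)^{(k-1)/2}$ stay $o(1/(f(n)\log n))$. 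Your remark that $k\ge3$ is needed because the replacement $k-2$ must be positive is correct and matches the paper.
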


{The proof of Theorem~\ref{thm:apollonian} is along the lines of that of Theorem~\ref{thm:mainrandomktree}, although there are several differences.
Note that we have ${(k^2-3)/(k-1)^2} < 1 + 2/k$,
so our upper bound for random $k$-Apollonian networks is slightly stronger than that for random $k$-trees.}

Unfortunately, our technique for proving Theorem~\ref{thm:lowerbound} does not extend to random $k$-Apollonian networks,
although we believe that whp we need a polynomial number of rounds to inform all vertices in a random $k$-Apollonian network as well.
We leave this as a conjecture.

For the rest of the paper, $k$ is a constant greater than 1, and the asymptotics are for $n$ going to infinity.
Several times in our proofs we use urn models to analyze the vertices' degrees and the number of vertices in certain parts of a random $k$-tree.
We also use a result on the height of random recursive trees to conclude that random $k$-trees have logarithmic diameter.
The connections with urn models are built in Section~\ref{sec:urns}.
In Section~\ref{expansion} we study basic properties of random $k$-trees, demonstrating their similarities with real-world graphs.
Theorems~\ref{thm:mainrandomktree},~\ref{thm:lowerbound}, and~\ref{thm:apollonian}
are proved in Sections~\ref{proofupper},~\ref{sec:lowerbound},
and~\ref{k-apol}, respectively.

\section{{Connections with} urn theory}
\label{sec:urns}
We will need some definitions and results from urn theory (see~\cite{urn_models_application} for a general introduction).
After reviewing these, we build some connections with random $k$-trees that will be used throughout.

\begin{definition}[P\'{o}lya-Eggenberger urn]
Start with $W_0$ white and $B_0$ black balls in an urn.
In every step a ball is drawn from the urn uniformly at random,
the ball is returned to the urn, and $s$ balls of the same color are added to the urn.
Let $\polya(W_0, B_0, s, n)$ denote the number of white balls right after $n$ draws.
\end{definition}

\begin{proposition}
\label{pro:polyasimple}
Let $X = \polya(a,b,k,n)$ and $w=a+b$. Then 
$$
\Ex{X^2} = \left(a + \frac{a}{w}\: kn\right)^2+\frac{a b k^2 n (kn + w)}{{w}^2 (w+k)}\:
$$
and for any $c \geq (a + b)/k$ we have
$$\Pr{X={a}}\leq\left( \frac{c}{c+n} \right)^{a/k}\:.$$
\end{proposition}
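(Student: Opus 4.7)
The plan is to treat the two claims separately: the second-moment formula via a martingale approach, and the tail bound on $\Pr{X=a}$ via a direct product computation.

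For the moment identities, let $(X_n)_{n\ge 0}$ denote the urn process with $X_0 = a$ and $X_n$ distributed as $\polya(a,b,k,n)$; set $w = a+b$, so the urn holds $w+kn$ balls after $n$ draws. A one-step conditional computation shows that $M_n := X_n/(w+kn)$ is a martingale (this is the classical Pólya martingale), immediately yielding $\Ex{X_n} = a(w+kn)/w$, which matches the factor inside the square in the claimed formula. The new ingredient is to guess the second martingale $N_n := X_n(X_n+k)/\bigl[(w+kn)(w+kn+k)\bigr]$. Writing $X_{n+1} = X_n + kB_{n+1}$ where $B_{n+1}$ indicates a white draw, a short calculation verifies $\Ex{N_{n+1}\mid X_n} = N_n$. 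Thus $\Ex{N_n} = N_0 = a(a+k)/[w(w+k)]$, so $\Ex{X_n(X_n+k)} = a(a+k)(w+kn)(w+kn+k)/[w(w+k)]$; subtracting $k\Ex{X_n}$ and simplifying algebraically produces the closed form $(\Ex{X_n})^2 + abk^2n(w+kn)/[w^2(w+k)]$ asserted in the proposition.

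For the tail bound, note that $\{X = a\}$ happens exactly when every draw is black. Conditioning sequentially, given that the first $i$ draws are black, the urn has $a$ white and $b+ik$ black balls, so
$$\Pr{X=a} = \prod_{i=0}^{n-1} \frac{b+ik}{w+ik} = \prod_{i=0}^{n-1}\left(1 - \frac{a}{w+ik}\right).$$
Take logs and apply $\log(1-x) \le -x$:
$$\log \Pr{X=a} \le -\frac{a}{k}\sum_{i=0}^{n-1}\frac{1}{i+w/k}.$$
Since the summand is positive and decreasing in $i$, a standard integral comparison gives $\sum_{i=0}^{n-1}(i+w/k)^{-1} \ge \int_0^n (x+w/k)^{-1}\,dx = \log(1+nk/w)$, so $\Pr{X=a} \le \bigl((w/k)/(w/k+n)\bigr)^{a/k}$. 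Finally, the function $t\mapsto t/(t+n)$ is increasing on $(0,\infty)$, so replacing $w/k$ by any $c \ge w/k$ can only enlarge the right-hand side, yielding the stated bound $\bigl(c/(c+n)\bigr)^{a/k}$.

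The main obstacle is guessing the second martingale $N_n$; once it is in hand the verification is mechanical and the algebraic simplifications are routine. (An alternative, if the guess feels unmotivated, is to set up the linear recurrence $\Ex{X_{n+1}^2} = \Ex{X_n^2}(1 + 2k/(w+kn)) + k^2\Ex{X_n}/(w+kn)$ and solve it using the known formula for $\Ex{X_n}$ as the forcing term.) For the tail bound the only point that requires care is introducing the free parameter $c$ at the end, via monotonicity of $t/(t+n)$, rather than attempting to carry it through the calculation.
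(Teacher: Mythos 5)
Your proof is correct. For the tail bound you follow essentially the same route as the paper: both write $\Pr{X=a}$ as the product $\prod_{i=0}^{n-1}\left(1-\frac{a}{w+ik}\right)$, apply $1-x\le e^{-x}$, and compare the resulting sum to an integral; the only cosmetic difference is that the paper substitutes $ck$ for $w$ inside the product (using $w\le ck$) before exponentiating, whereas you carry $w/k$ through and invoke the monotonicity of $t\mapsto t/(t+n)$ only at the end --- both are valid. For the second-moment identity, however, the paper simply quotes the known formulae for the mean and variance of the P\'olya--Eggenberger urn from the literature and adds $\Ex{X}^2+\Var{X}$, so your argument is genuinely different: you derive everything from scratch via the classical martingale $X_n/(w+kn)$ together with the second martingale $N_n = X_n(X_n+k)/\bigl[(w+kn)(w+kn+k)\bigr]$. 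I verified the one-step computation: $\Ex{X_{n+1}(X_{n+1}+k)\mid X_n} = (X_n+k)\bigl[X_n+2kX_n/(w+kn)\bigr] = X_n(X_n+k)(w+kn+2k)/(w+kn)$, which is exactly what the martingale property requires, and the closing algebra (subtracting $k\Ex{X_n}$ and simplifying) does reduce to the stated closed form. Your route is longer but self-contained, which is a reasonable trade-off against the paper's citation; the linear recurrence you mention parenthetically is the more pedestrian fallback and also works.
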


\begin{proof}
The first statement follows from the following well known formulae for the expected value and the variance of $X$  (see~\cite[Corollary~5.1.1]{urns_mahmoud} for instance):
\begin{align*}
\Ex{X} = a + \frac{a}{w}\: kn \:,\quad
\Var{X} = \frac{a b k^2 n (kn + w)}{{w}^2 (w+k)}\:.
\end{align*}
For the second statement, we have
\begin{align*}
\Pr{X=a} & = \frac{b}{a+b} \cdot \frac{b+k}{a+b+k} \cdot \cdots \frac{b+(n-1)k}{a+b+(n-1)k} \\
& = \prod_{i=0}^{n-1} \left(1 - \frac{a}{a+b+ik}\right) \\
& \le \prod_{i=0}^{n-1} \left(1 - \frac{a}{ck+ik}\right) \\
& \le  \exp\left(- \sum_{i=0}^{n-1} \frac{a}{ck+ik}\right) \\
& =  \left\{\exp\left( \sum_{i=0}^{n-1} \frac{1}{c+i}\right)\right\}^{-a/k}  \le  \left\{\exp\left( \int_{x=c}^{c+n} \frac{\mathrm{d}x}{x}\right)\right\}^{-a/k} 
= \left( \frac{c}{c+n} \right)^{a/k} \:.\qedhere
\end{align*}
\end{proof}

\begin{definition}[Generalized P\'{o}lya-Eggenberger urn]
Let $\alpha,\beta,\gamma,\delta$ be nonnegative integers.
We start with $W_0$ white and $B_0$ black balls in an urn.
In every step a ball is drawn from the urn uniformly at random and returned to the urn.
Additionally, if the ball is white, then $\delta$ white  balls and $\gamma$  black balls are returned to the urn;
otherwise, i.e.\ if the ball is black, then $\beta$ white  balls and $\alpha$  black balls are returned to the urn.
Let $\polya\left(W_0, B_0, \begin{bmatrix}\alpha&\beta\\ \gamma&\delta\end{bmatrix}, n\right)$ denote the number of white balls right after $n$ draws.
\end{definition}
Note that P\'{o}lya-Eggenberger urns correspond to the matrix $\begin{bmatrix}s&0\\0&s\end{bmatrix}$.
The following proposition follows from known results.
\begin{proposition}
\label{pro:moments}
Let $X = \polya\left(W_0, B_0, \begin{bmatrix}\alpha&0\\ \gamma&\delta\end{bmatrix}, n\right)$ and let $r$ be a positive integer.
If $\gamma,\delta>0$, $\alpha=\gamma+\delta$, and $r\delta \ge \alpha$, then we have
$$\Ex{X^r} \le
 \left( \frac{\alpha n}{W_0+B_0} \right)^{r\delta/\alpha}\
 \prod_{i=0}^{r-1} \left( {W_0} + i \delta \right) + \Oh\left(n^{(r-1)\delta/\alpha}\right)
 \:.
$$
\end{proposition}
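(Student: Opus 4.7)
\textbf{Proof plan for Proposition~\ref{pro:moments}.}
The hypothesis $\alpha = \gamma + \delta$ makes the urn \emph{balanced}: after any sequence of draws the total number of balls equals the deterministic quantity $N_t := W_0 + B_0 + t\alpha$. Consequently $(W_t)_{t\ge 0}$ is a Markov chain with
\begin{equation*}
\Pr{W_{t+1} = W_t + \delta \mid W_t} = \frac{W_t}{N_t}, \qquad \Pr{W_{t+1} = W_t \mid W_t} = 1 - \frac{W_t}{N_t}.
\end{equation*}

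My plan is not to attack $\Ex{W_n^r}$ head-on, but to compute the expectation of the rising factorial
\begin{equation*}
Y_t := \prod_{i=0}^{r-1}(W_t + i\delta),
\end{equation*}
which is tailor-made for the dynamics: on a white draw the product shifts one step forward and therefore gets multiplied by $(W_t + r\delta)/W_t$, whereas on a black draw it is unchanged. A one-line case analysis yields the exact recursion
\begin{equation*}
\Ex{Y_{t+1} \mid W_t} \;=\; \frac{W_t}{N_t}\cdot\frac{W_t+r\delta}{W_t}\,Y_t + \left(1-\frac{W_t}{N_t}\right)Y_t \;=\; Y_t\left(1 + \frac{r\delta}{N_t}\right),
\end{equation*}
and since $N_t$ is deterministic, the tower property gives the closed form
\begin{equation*}
\Ex{Y_n} \;=\; Y_0 \prod_{t=0}^{n-1}\!\left(1 + \frac{r\delta}{N_t}\right) \;=\; \prod_{i=0}^{r-1}(W_0 + i\delta)\cdot \frac{\Gamma\!\left(\tfrac{N_0}{\alpha}+n+\tfrac{r\delta}{\alpha}\right)\Gamma\!\left(\tfrac{N_0}{\alpha}\right)}{\Gamma\!\left(\tfrac{N_0}{\alpha}+\tfrac{r\delta}{\alpha}\right)\Gamma\!\left(\tfrac{N_0}{\alpha}+n\right)},
\end{equation*}
where $N_0 = W_0 + B_0$.

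The second step is to control this Gamma ratio. Setting $q := r\delta/\alpha$, the hypothesis $r\delta \ge \alpha$ is exactly the statement $q \ge 1$. The identity $\Gamma(x+q)/\Gamma(x) = \Ex{T^q}$ for a $\mathrm{Gamma}(x,1)$ variable $T$ with mean $x$, together with Jensen's inequality applied to the convex function $t \mapsto t^q$ (which is convex precisely because $q \ge 1$), gives the lower bound $\Gamma(x+q)/\Gamma(x) \ge x^q$; taking $x = N_0/\alpha$ bounds the denominator from below by $(N_0/\alpha)^q$. For the numerator I use the standard asymptotic $\Gamma(y+q)/\Gamma(y) = y^q(1 + O(1/y))$, evaluated at $y = N_0/\alpha + n$, to bound it above by $n^q + O(n^{q-1})$. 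Combining,
\begin{equation*}
\prod_{t=0}^{n-1}\!\left(1 + \frac{r\delta}{N_t}\right) \;\le\; \left(\frac{\alpha n}{N_0}\right)^{r\delta/\alpha} + O\!\left(n^{r\delta/\alpha - 1}\right).
\end{equation*}
Since $W_n \ge 0$, each factor in $Y_n$ is at least $W_n$, hence $Y_n \ge W_n^r$ pointwise and $\Ex{W_n^r} \le \Ex{Y_n}$. Multiplying the displayed bound by $Y_0 = \prod_{i=0}^{r-1}(W_0+i\delta)$ yields exactly the claimed inequality, using that $\delta \le \gamma + \delta = \alpha$ to absorb the error into $O\!\left(n^{(r-1)\delta/\alpha}\right)$ via $r\delta/\alpha - 1 \le (r-1)\delta/\alpha$.

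The main technical obstacle is to identify the right statistic and then get the \emph{correct constant} in the Gamma-ratio bound: a naive integral estimate for $\sum_t \log(1 + r\delta/N_t)$ gives the right exponent but an extra multiplicative factor. Routing through $\Ex{T^q}$ and invoking Jensen is what brings the constant down to $(\alpha/N_0)^{r\delta/\alpha}$, and this is the single place where the assumption $r\delta \ge \alpha$ is essential. Once $Y_t$ has been chosen, the remaining computations are mechanical.
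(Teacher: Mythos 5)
Your proof is correct, and it takes a genuinely different route from the paper's. The paper's proof is essentially a citation: it invokes an exact asymptotic moment formula for triangular urns (Proposition~15 of the reference on exactly solvable urn models), which gives $\Ex{X^r} = n^{r\delta/\alpha}\,\delta^r\,\Gamma(W_0/\delta+r)\Gamma(N_0/\alpha)\big/\bigl(\Gamma(W_0/\delta)\Gamma((N_0+r\delta)/\alpha)\bigr) + \Oh(n^{(r-1)\delta/\alpha})$ with $N_0=W_0+B_0$, and then massages the two Gamma ratios: the first equals your $\delta^{-r}\prod_{i=0}^{r-1}(W_0+i\delta)$ exactly, and the second is bounded below by $(N_0/\alpha)^{r\delta/\alpha}$ via the Gautschi-type inequality $\Gamma(x+1)/\Gamma(x+s)\ge x^{1-s}$ --- the same place where the hypothesis $r\delta\ge\alpha$ enters for you. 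Your argument replaces the black box with a short self-contained computation: you identify the $\delta$-step rising factorial $Y_t=\prod_{i=0}^{r-1}(W_t+i\delta)$ as an exactly solvable statistic (its conditional expectation multiplies by the deterministic factor $1+r\delta/N_t$ because the urn is balanced), obtain a closed form for $\Ex{Y_n}$, and use $W_n^r\le Y_n$ pointwise; your Jensen-on-a-Gamma-variable argument for $\Gamma(x+q)/\Gamma(x)\ge x^q$ when $q\ge 1$ is a clean equivalent of the paper's cited inequality, and your final error absorption correctly uses $\delta\le\alpha$. Two trivia worth making explicit in a full write-up: the factor $(W_t+r\delta)/W_t$ presupposes $W_t>0$ (harmless, since $W_t\ge W_0$ and the case $W_0=0$ forces $X\equiv 0$), and the constant $q=r\delta/\alpha$ must be treated as fixed when expanding $(N_0/\alpha+n)^q=n^q+\Oh(n^{q-1})$, which it is here. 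Your remark about the naive integral estimate losing a constant factor is also apt --- in the application of Lemma~\ref{lem:largedegree} the moment order $q$ grows, so an extra $e^{\Oh(q)}$ would be fatal. Net effect: your proof is more elementary and reproves the needed case of the cited formula from scratch, at the cost of being specific to the triangular case $\beta=0$ --- which is the only case the paper ever uses.
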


\begin{proof}
By~\cite[Proposition~15]{exactly_solvable}
we have 
$$
\Ex{X^r} = n^{r\delta/\alpha} \delta^r
\frac{\Gamma(W_0/\delta + r)\Gamma((W_0+B_0)/\alpha)}{\Gamma(W_0/\delta)\Gamma((W_0+B_0+r\delta)/\alpha)} + \Oh\left(n^{(r-1)\delta/\alpha}\right)\:,
$$
Note that
$$\frac{\Gamma(W_0/\delta + r)}{\Gamma(W_0/\delta)}
=  \prod_{i=0}^{r-1} \left( i + {W_0}/\delta \right).
$$
Finally, the inequality
$$
\frac{\Gamma((W_0+B_0+r\delta)/\alpha)} {\Gamma((W_0+B_0)/\alpha)} \ge \left((W_0+B_0)/\alpha\right)^{r\delta/\alpha}
$$
follows from $r\delta \ge \alpha$ and the following inequality (see, e.g.,~\cite[inequality (2.2)]{gamma_inequalities})
$$\frac{\mathop{\Gamma\/}\nolimits\!\left(x+1\right)}{\mathop{\Gamma\/
}\nolimits\!\left(x+s\right)} \ge x^{{1-s}} \qquad \forall x>0, s\in[0,1] \:.\qedhere
$$
\end{proof}

\begin{proposition}
\label{pro:degs}
Suppose that in $G(j)$
vertex $x$ has $A>0$ neighbors, and is contained in $B$ many $k$-cliques.
Conditional on this, the degree of $x$  in $G(n+j)$ is distributed as
$$A + \left( \polya\left(B,kj+1-B,\begin{bmatrix}k&0\\ 1&k-1 \end{bmatrix},n\right) - B\right)\Big/(k-1) \:.$$
\end{proposition}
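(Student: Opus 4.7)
The plan is to set up a generalized Pólya–Eggenberger urn that tracks, over the rounds following $G(j)$, the number of $k$-cliques of the evolving graph that contain the distinguished vertex $x$. I would colour a $k$-clique \emph{white} if it contains $x$ and \emph{black} otherwise, so the state of $G(j)$ corresponds to an urn with $W_0 = B$ white balls and, using that $G(j)$ has exactly $kj+1$ many $k$-cliques, $B_0 = kj+1-B$ black balls.

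Next I would verify that one round of the random $k$-tree process implements exactly the update rule encoded by the matrix $\begin{bmatrix}k&0\\1&k-1\end{bmatrix}$. In round $j+t$ a uniformly random $k$-clique $K$ of $G(j+t-1)$ is selected (this is the draw) and a new vertex $v$ is joined to every vertex of $K$; the only $k$-cliques of $G(j+t)$ that do not already appear in $G(j+t-1)$ are the $k$ cliques $(K\setminus\{u\})\cup\{v\}$ for $u\in K$. If $K$ contains $x$ (white is drawn), then exactly one of these new cliques—the one with $u=x$—omits $x$, and the remaining $k-1$ contain $x$; so $\delta=k-1$ white and $\gamma=1$ black balls are added. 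If $K$ does not contain $x$ (black is drawn), none of the $k$ new cliques contain $x$; so $\beta=0$ white and $\alpha=k$ black balls are added. These are precisely the urn's replacement rules, and as a consistency check the total number of balls after $n$ draws is $W_0+B_0+kn=k(n+j)+1$, matching the clique count of $G(n+j)$.

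Finally, I relate the urn back to the degree of $x$. The new vertex $v$ becomes a neighbour of $x$ if and only if $x\in K$, i.e.\ if and only if a white ball is drawn, in which case $\deg(x)$ increases by exactly one; otherwise $\deg(x)$ is unchanged. Since in the same event the number of $k$-cliques through $x$ grows by $k-1$ (respectively $0$), the two quantities move in lockstep, giving the deterministic identity
\[
\deg_{G(n+j)}(x) \;=\; A \;+\; \frac{W_n - B}{k-1},
\]
where $W_n$ denotes the number of white balls after $n$ draws. Since $W_n$ is distributed as $\polya\!\left(B,kj+1-B,\begin{bmatrix}k&0\\ 1&k-1\end{bmatrix},n\right)$, this is exactly the claimed distributional identity. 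I do not anticipate any real obstacle: the argument is essentially a careful bookkeeping that simultaneously tracks clique counts and the degree of $x$, together with the observation that the two always increment together by a fixed ratio.
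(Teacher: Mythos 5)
Your proposal is correct and follows essentially the same argument as the paper: model the $k$-cliques containing $x$ as white balls in a generalized P\'{o}lya--Eggenberger urn with replacement matrix $\begin{bmatrix}k&0\\1&k-1\end{bmatrix}$, then observe that each new neighbor of $x$ corresponds to a white draw adding exactly $k-1$ white balls. Your extra consistency checks (identifying the new cliques as $(K\setminus\{u\})\cup\{v\}$ and verifying the total ball count) are harmless elaborations of the same bookkeeping.
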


\begin{proof}
We claim that
the total number of $k$-cliques containing $x$ in $G(n+j)$ is distributed as 
$\polya\left(B,kj+1-B,\begin{bmatrix}k&0\\ 1&k-1 \end{bmatrix},n\right)$.
At the end of round $j$, there are $B$ many $k$-cliques containing $x$,
and $kj+1-B$ many $k$-cliques not containing $x$.
In each subsequent round $j+1,\dots,j+n$,
a random $k$-clique is chosen and $k$ new $k$-cliques are created.
If the chosen $k$-clique contains $x$,
then $k-1$ new $k$-cliques containing $x$ are created, and $1$ new $k$-clique not containing $x$ is created.
Otherwise, i.e.\ if the chosen $k$-clique does not contain $x$,
then no new $k$-cliques containing $x$ is created, and $k$ new $k$-cliques not containing $x$ are created, and the claim follows.
%Hence
%$\polya\left(B,kj+1-B,\begin{bmatrix}k&0\\ 1&k-1 \end{bmatrix},n\right)$
%is exactly the total number of $k$-cliques containing $x$ in $G(n+j)$.

Hence the number of $k$-cliques that are created in rounds $j+1,\dots,j+n$ and contain $x$ is
{distributed as} $\polya\left(B,kj+1-B,\begin{bmatrix}k&0\\ 1&k-1 \end{bmatrix},n\right) - B$,
and the proof follows by noting that every new neighbor of $x$ creates $k-1$ new $k$-cliques containing $x$.
\end{proof}
Combining Propositions~\ref{pro:moments} and~\ref{pro:degs} we obtain the following lemma.
\begin{lemma}
\label{lem:largedegree}
Let $1\le j \le n$ and let $q$ be a positive integer.
Let $x$ denote the vertex born in round $j$.
Conditional on any $G(j)$, the probability that $x$
has degree greater than $k + q (n/j)^{(k-1)/k}$ in $G(n)$ is $\Oh\left(q\sqrt{q}\exp(-q)\right)$.
\end{lemma}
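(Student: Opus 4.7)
The plan is to reduce the tail bound on $\deg_{G(n)}(x)$ to a moment bound on a generalized P\'{o}lya urn via Proposition~\ref{pro:degs}, control that moment through Proposition~\ref{pro:moments}, and then apply Markov's inequality with an optimally chosen order.

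First I would unpack the initial conditions. At the end of round $j$, vertex $x$ has exactly $k$ neighbors (the vertices of the $k$-clique it joined) and lies in exactly $k$ many $k$-cliques (each formed by $x$ together with $k-1$ of its $k$ neighbors). So $A = B = k$ in Proposition~\ref{pro:degs}, applied with $n - j$ additional rounds, and the degree of $x$ in $G(n)$ is distributed as
\[
k + \frac{Y - k}{k-1}, \qquad Y := \polya\left(k,\; kj + 1 - k,\; \begin{bmatrix} k & 0 \\ 1 & k-1 \end{bmatrix},\; n - j\right).
\]
Writing $N := (n/j)^{(k-1)/k}$, the event of interest is precisely $\{Y > k + q(k-1)N\}$.

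Next I would bound the $r$-th moment of $Y$ using Proposition~\ref{pro:moments} with $\alpha = k$, $\gamma = 1$, $\delta = k-1$; the hypothesis $r\delta \ge \alpha$ reads $r \ge k/(k-1)$, so any integer $r \ge 2$ is admissible. Using $k(n-j)/(kj+1) \le n/j$ and the identity
\[
\prod_{i=0}^{r-1} \bigl( k + i(k-1) \bigr) = (k-1)^r \,\frac{\Gamma(r + k/(k-1))}{\Gamma(k/(k-1))},
\]
together with the standard asymptotic $\Gamma(r+c)/\Gamma(c) = \Oh(r!\, r^{c-1})$ for fixed $c > 0$ (with $c = k/(k-1) \in (1,2]$), Proposition~\ref{pro:moments} yields
\[
\Ex{Y^r} \le C_k \,(k-1)^r N^r \, r! \, r^{1/(k-1)} + \Oh\!\left( n^{(r-1)(k-1)/k} \right),
\]
for some constant $C_k$ depending only on $k$.

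Finally, Markov's inequality gives
\[
\Pr{Y > k + q(k-1) N} \le \frac{\Ex{Y^r}}{\bigl(q(k-1)N\bigr)^r} \le C_k' \cdot \frac{r!\, r^{1/(k-1)}}{q^r} + (\text{lower order}).
\]
I would then choose $r = q$ (the case $q = 1$ makes the claimed bound trivial, and the lower-order term is easily seen to be negligible in the relevant regime). Stirling's formula gives $q!/q^q = \Oh(\sqrt{q}\, e^{-q})$, so the bound becomes $\Oh\bigl(q^{1/2 + 1/(k-1)} e^{-q}\bigr)$, and since $k \ge 2$ implies $1/(k-1) \le 1$, this is $\Oh(q^{3/2} e^{-q}) = \Oh(q\sqrt{q}\,\exp(-q))$, as required. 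The only slightly delicate point is the gamma-ratio estimate that extracts the right $q^{1/2}$-type prefactor after optimizing $r$; everything else is routine bookkeeping, including the verification that the error term from Proposition~\ref{pro:moments} is dominated by the main contribution under the choice $r = q$.
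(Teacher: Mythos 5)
Your proposal is correct and follows essentially the same route as the paper: reduce to the urn variable via Proposition~\ref{pro:degs}, bound its $q$-th moment via Proposition~\ref{pro:moments}, and apply Markov's inequality with order $r=q$, then Stirling. The only cosmetic difference is that you bound $\prod_{i=0}^{q-1}(k+i(k-1))$ through a gamma-ratio asymptotic, whereas the paper uses the elementary estimate $(k-1)^q(q+1)!$; both yield the stated $\Oh(q\sqrt{q}\exp(-q))$ bound.
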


\begin{proof}
Let $X=\polya\left(k, kj-k+1, \begin{bmatrix}k&0\\ 1&k-1 \end{bmatrix}, n-j\right)$.
By Proposition~\ref{pro:degs}, $\deg(x)$ is  distributed as $k + \left(X-k\right)/(k-1)$.
By Proposition~\ref{pro:moments},
$$\Ex{X^q} \le
(1+o(1)) \left( \frac{k (n-j)}{kj+1} \right)^{\frac{q(k-1)}{k}}\
 \prod_{i=0}^{q-1} \left( k + i (k-1) \right)
\le \left( \frac{n}{j} \right)^{\frac{q(k-1)}{k}} (k-1)^{q} (q+1)!
 \:.
$$
Thus,
\begin{align*}
\Pr{\deg(x) > k + q (n/j)^{(k-1)/k}} &=
\Pr{X-k > q (k-1)(n/j)^{(k-1)/k}} \\
& \le \frac{\Ex{X^q}}{\left(q (k-1)(n/j)^{(k-1)/k}\right)^q} \\
& \le  (q+1)!q^{-q} = \Oh\left(q\sqrt{q}\exp(-q)\right) \:. \qedhere
\end{align*}
\end{proof}

\section{Basic properties of random $k$-trees}
\label{expansion}
In this section we prove that random $k$-trees exhibit two important properties observed in real-world networks:
low diameter and large clustering coefficient.
We also prove that random $k$-trees do not expand well,
confirming our claim in the introduction that random $k$-trees are poorly connected graphs and thus existing techniques do not apply.
Let $G(0),G(1),\dots$ be defined as in Definition~\ref{def_random_k_tree}.

\begin{definition}
The
\emph{clustering coefficient} of a graph $G$, written $cc(G)$, is defined as
$$cc(G) = \frac{1}{|V(G)|}\sum_{u\in V(G)}\frac{|\langle N(u)\rangle|}{\binom{\deg(u)}{2} } \:,$$
where $|\langle N(u)\rangle|$ denotes the number of edges $xy$ such that both $x$ and $y$ are neighbors of $u$.
\end{definition}

\begin{proposition}
\label{pro:clustering}
For every positive integer $n$, {deterministically,} the clustering coefficient of $G(n)$ is at least 1/2.
\end{proposition}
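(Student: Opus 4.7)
The plan is to start with the exact identity
\[
|\langle N(u)\rangle| \;=\; (k-1)\,d_u - \binom{k}{2}
\]
for every vertex $u$ in $G(n)$, proved by induction on the construction. At birth a non-initial vertex has $d = k$ and its neighborhood is a copy of $K_k$, contributing $\binom{k}{2}$ edges; each initial vertex of $K_k$ has $d = k-1$ and a $K_{k-1}$ as neighborhood, contributing $\binom{k-1}{2}$ edges; both cases match the formula. Whenever a later round attaches a new vertex via a $k$-clique $C$ containing $u$, the newcomer joins $u$ and the $k-1$ remaining members of $C$ (all already in $N(u)$), so $d_u$ increases by $1$ and $|\langle N(u)\rangle|$ by exactly $k-1$, preserving the identity. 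Substituting into the definition yields $cc(u) = (k-1)(2d_u - k)/[d_u(d_u - 1)]$; I abbreviate the right side as $h(d_u)$.

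For $k = 2$ this collapses to $h(d) = 2/d$, and the AM--HM inequality applied to the degree sequence gives
\[
\sum_u cc(u) \;=\; 2\sum_u \frac{1}{d_u} \;\geq\; \frac{2|V|^2}{\sum_u d_u} \;=\; \frac{|V|^2}{|E|}.
\]
Since $G(n)$ has $|V| = n+2$ and $|E| = 2n+1$, this exceeds $|V|/2$ by the trivial bound $2(n+2) > 2n+1$, finishing the $k = 2$ case.

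For $k \geq 3$, I would induct on $n$ and show $\sum_u cc(u) \geq k + n/2 \geq |V|/2$. The base $n = 0$ is immediate because every vertex of $K_k$ has $cc = 1$. For the step, when round $n$ attaches a new vertex $v$ to the $k$-clique $C$, a direct computation gives $h(d+1) - h(d) = -2(k-1)(d-k+1)/[d(d-1)(d+1)]$, which vanishes for initial vertices with $d = k-1$ and is nonpositive for $d \geq k$. Adding the contribution $cc(v) = 1$ of the newcomer, the change in $\sum cc(u)$ equals
\[
1 \;-\; 2(k-1) \sum_{u \in C,\,d_u \geq k} \frac{d_u - k + 1}{d_u(d_u - 1)(d_u + 1)}.
\]
Since $|C| = k$, it suffices to show each summand is at most $1/[4k(k-1)]$, which is the main obstacle of the proof: for all integers $d \geq k$ with $k \geq 3$,
\[
\frac{d - k + 1}{d(d-1)(d+1)} \;\leq\; \frac{1}{4k(k-1)},
\]
equivalently the cubic inequality $P(d) := d^3 - (2k-1)^2 d + 4k(k-1)^2 \geq 0$. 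At the boundary $P(k) = k(k-1)(k-3) \geq 0$, saturating only at $k = 3$. For $k \geq 4$ the unique critical point $d_* = (2k-1)/\sqrt{3}$ of $P$ lies in $[k, \infty)$, and one verifies $P(d_*) \geq 0$ by reducing to $108\,k^2(k-1)^4 \geq (2k-1)^6$, a routine check. The binding case $k = 3$ gains exactly $1/2$ per round, which is also why the $k = 2$ regime needs the separate Jensen argument rather than this induction.
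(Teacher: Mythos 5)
Your proof is correct, and it begins from the same exact identity $|\langle N(u)\rangle| = (k-1)\deg(u) - \binom{k}{2} = (k-1)(\deg(u)-k/2)$ that the paper uses; from there, however, you take a genuinely different and much heavier route for $k\ge 3$. The paper simply observes the pointwise bound
$$\frac{|\langle N(u)\rangle|}{\binom{\deg(u)}{2}} = \frac{(k-1)(2\deg(u)-k)}{\deg(u)(\deg(u)-1)} \ge \frac{k}{\deg(u)}\,,$$
which is equivalent to $(k-2)(\deg(u)-k)\ge 0$ and so holds for every vertex once $n\ge 1$, and then applies Cauchy--Schwarz exactly as in your $k=2$ case to get $cc(G)\ge k(n+k)/(2|E|) = (n+k)/(2n+k-1)\ge 1/2$. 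Your $k=2$ argument is precisely this specialization (there the pointwise bound is an equality, $h(d)=2/d$), so the case split is unnecessary: had you checked $h(d)\ge k/d$ for $d\ge k$, the harmonic-mean argument would have closed all $k\ge 2$ at once. That said, your induction for $k\ge 3$ --- tracking $\sum_u cc(u)$, showing each round contributes at least $1/2$, and reducing to the cubic inequality $P(d)=d^3-(2k-1)^2d+4k(k-1)^2\ge 0$ for integers $d\ge k$ --- is sound: $P(k)=k(k-1)(k-3)$, for $k=3$ the critical point $5/\sqrt3$ lies below $k$ so $P$ is increasing on $[3,\infty)$ from $P(3)=0$, and for $k\ge 4$ the reduction to $108k^2(k-1)^4\ge(2k-1)^6$ checks out (the ratio of the two sides is increasing in $k$ and already exceeds $1$ at $k=4$). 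What your approach buys is a sharper, local statement --- the total clustering grows by at least $1/2$ per added vertex, with the extremal configuration identified as $k=3$ and all attachment degrees equal to $3$ --- but for the proposition as stated it is considerably more work than the paper's three-line averaging argument.
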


\begin{proof}
Let $u$ be a vertex of $G=G(n)$.
It is not hard to check that $|\langle N(u)\rangle| = (k-1) (\deg(u)-k/2)$,
and since $\deg(u) \ge k$ we get
$$\frac{|\langle N(u)\rangle|}{\binom{\deg(u)}{2} } \ge \frac{k}{\deg(u)} \:.$$
Using the Cauchy-Schwarz inequality we get
\begin{equation*}
cc(G) \ge \frac{1}{|V(G)|}\sum_{u\in V(G)}\frac{k}{\deg(u)}
\ge \frac{k}{n+k} \cdot \frac{(n+k)^2}{2|E(G)|} \ge \frac 1 2 \:.\qedhere
\end{equation*}
\end{proof}

For proving random $k$-trees have logarithmic diameter we will need a known result about random $d$-ary recursive trees.

\begin{definition}[Random $d$-ary recursive tree]
\label{def:randomrecursive}
Let $d$ be a positive integer.
Build a sequence $T(0)$, $T(1),$ $\dots$ of rooted random trees as follows.
The tree $T(0)$ has just one vertex, the root.
For each $1\leq t\leq n$, $T(t)$ is obtained from $T(t-1)$ as follows:
a leaf of $T(t-1)$ is chosen uniformly at random
and gives birth to $d$ new children.
The tree $T(n)$ is called a random $d$-ary recursive tree on $dn+1$ vertices.
\end{definition}

\begin{theorem}
[\cite{randomtrees},Theorem~6.47]
\label{pro:drmota}
Let $\alpha$ be the unique solution in $(d,\infty)$ of 
$$\alpha (d-1) \log \left( \frac{de}{\alpha(d-1)}\right) = 1 \:.$$
Let $H_n$ denote the height of a random $d$-ary recursive tree on $dn+1$ vertices.
There exists a constant $c>0$ such that for any $\eta$,
$$\Pr{H_n > \alpha \log n + \eta} = \Oh(e^{-c\eta}) \:.$$
\end{theorem}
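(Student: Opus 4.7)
The plan is to prove this upper tail bound by a first-moment argument, after embedding the discrete tree process $T(0),T(1),\dots$ into a continuous-time Markov branching process. Attach to every leaf an independent $\mathrm{Exp}(1)$ clock; when a leaf's clock rings, that leaf splits into $d$ new children (each equipped with a fresh $\mathrm{Exp}(1)$ clock). Since the next clock to ring is uniform among the leaves present, stopping the continuous process right after its $n$-th event reproduces the distribution of $T(n)$ exactly. Writing $L(\tau)$ for the number of leaves at continuous time $\tau$, one has $\Ex{L(\tau)}=e^{(d-1)\tau}$, and standard Yule-process martingale arguments give concentration of $L(\tau)$ around its mean.

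The heart of the argument is a first-moment bound on $V_k(\tau)$, the number of depth-$k$ vertices alive at time $\tau$. Fix one of the $d^k$ possible positions at depth $k$: the vertex in that position exists at time $\tau$ iff the sum $S_k = T_1+\cdots+T_k$ of $k$ ancestral waiting times satisfies $S_k \le \tau$, where the $T_i$ are iid $\mathrm{Exp}(1)$. A standard Chernoff bound for the Gamma distribution gives $\Pr{S_k\le\tau}\le (e\tau/k)^k e^{-\tau}$, so
$$\Ex{V_k(\tau)}\;\le\;d^k\,\Pr{S_k\le\tau}\;\le\;\left(\frac{de\tau}{k}\right)^{\!k} e^{-\tau}.$$

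I would then apply this at the natural time $\tau_n:=\log((d-1)n)/(d-1)$, at which $\Ex{L(\tau_n)}$ matches the leaf count of $T(n)$, and at $k=\lceil\alpha\log n+\eta\rceil$. Using the defining identity $\alpha(d-1)\log(de/(\alpha(d-1)))=1$, the factor $(de/(\alpha(d-1)))^k$ equals $n^{1/(d-1)} e^{\eta/(\alpha(d-1))}$, which cancels precisely with $e^{-\tau_n}=n^{-1/(d-1)}$, while the $O(\eta/\log n)$ perturbation of $de\tau_n/k$ relative to $de/(\alpha(d-1))$ contributes a further factor $e^{-\eta(1+o(1))}$. The net bound is $\Ex{V_k(\tau_n)}=\Oh(e^{-c\eta})$ with $c=1-1/(\alpha(d-1))>0$ (positive because $\alpha>d\ge 2$). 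Geometric decay of the bound for $k'$ past $\alpha\log n$ means summing over all depths $k'\ge k$ costs only a constant factor, and Markov's inequality then delivers $\Pr{H_n>\alpha\log n+\eta}=\Oh(e^{-c\eta})$.

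The main obstacle I expect is the clean passage from continuous to discrete time. The $n$-th split of the continuous process occurs at a random time $\tau^\ast$ rather than at the deterministic $\tau_n$, so one must show, using concentration of the Yule leaf-process $L(\tau)$, that $\tau^\ast\in[\tau_n-c_0,\tau_n+c_0]$ with probability at least $1-\Oh(e^{-c\eta})$ for a suitable constant $c_0$, and simultaneously that the first-moment bound on $\Ex{V_k(\tau)}$ remains $\Oh(e^{-c\eta})$ uniformly over this window. The second point is immediate since a constant shift in $\tau$ scales $(de\tau/k)^k e^{-\tau}$ only by a constant, while the first is routine for a Yule process; the bookkeeping to combine the two pieces cleanly is where most of the technical effort sits.
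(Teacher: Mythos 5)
The paper offers no proof of this statement---it is imported verbatim from Drmota's book---so I am measuring your proposal against the standard argument in the literature. Your strategy (embed $T(0),T(1),\dots$ into a continuous-time branching process via independent $\mathrm{Exp}(1)$ clocks, bound $\Ex{V_k(\tau)}\le d^k\Pr{S_k\le \tau}\le (de\tau/k)^k e^{-\tau}$ by a Chernoff bound for the Gamma distribution, and extract the constant from the defining equation for $\alpha$) is exactly the classical first-moment proof, and your bookkeeping is right: the factor $(de/(\alpha(d-1)))^k=n^{1/(d-1)}e^{\eta/(\alpha(d-1))}$ cancels against $e^{-\tau_n}$, the perturbation of $de\tau_n/k$ supplies $e^{-\eta(1+o(1))}$, and $c=1-1/(\alpha(d-1))>0$ since $\alpha(d-1)>d(d-1)\ge 2$.

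The genuine gap is in the de-Poissonization step, which you yourself flag as the main obstacle but then propose to resolve incorrectly. You want $\tau^\ast\in[\tau_n-c_0,\tau_n+c_0]$ with probability $1-\Oh(e^{-c\eta})$ for a \emph{constant} $c_0$. That is impossible: $\tau^\ast=\sum_{i=0}^{n-1}E_i/(1+(d-1)i)$ with the $E_i$ iid $\mathrm{Exp}(1)$, so $\tau^\ast-\tau_n$ converges in distribution to a nondegenerate limit, and $\Pr{\tau^\ast>\tau_n+c_0}$ is a positive constant independent of $\eta$. Your final bound would then be $\Oh(e^{-c\eta})+p_0$ with $p_0>0$ fixed, which fails to be $\Oh(e^{-c\eta})$ precisely in the regime $\eta\to\infty$ where the theorem has content. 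The repair is standard but necessary: only the upper tail of $\tau^\ast$ is needed, since monotonicity of the height gives $\Pr{H_n>k}\le \Pr{\tau^\ast>t}+\Pr{\mathrm{Height}(t)>k}$ for any deterministic $t$. Take $t=\tau_n+\varepsilon\eta$; the moment generating function $\Ex{e^{\theta\tau^\ast}}=\prod_{i<n}\bigl(1-\theta/(1+(d-1)i)\bigr)^{-1}=\Oh(e^{\theta\tau_n})$ for fixed $\theta\in(0,1)$ yields $\Pr{\tau^\ast>\tau_n+\varepsilon\eta}=\Oh(e^{-\theta\varepsilon\eta})$, while the first-moment bound evaluated at time $\tau_n+\varepsilon\eta$ only loses a factor $e^{(\alpha(d-1)-1)\varepsilon\eta}$, absorbed by taking $\varepsilon$ small enough; this proves the claim with a smaller but still positive $c$. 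A minor further remark: you need only the single depth $k=\lceil\alpha\log n+\eta\rceil$ (the event $H_n>k$ already forces a vertex at depth $k+1$), so the geometric summation over deeper levels, while harmless, is not required.
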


{The following proposition implies that
with high probability the diameter of $G(n)$ is $\Oh (\log n)$.

\begin{proposition}
\label{pro:diameter}
Whp $G(n)$ has the following property:
let $u_h u_{h-1} \cdots u_0$ be an arbitrary path
such that $u_i$ is born later than $u_{i-1}$
for all $i$;
then $h=\Oh(\log n)$.
\end{proposition}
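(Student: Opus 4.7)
My plan is to reduce the claim to a bound on the height of a natural companion tree built from the $k$-cliques of $G(n)$, and then to estimate that height by a first-moment calculation.

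\textbf{Step 1 (Companion tree of $k$-cliques).} I associate to the random $k$-tree process a rooted random tree $T^{\ast}$ whose nodes are the $k$-cliques of $G(n)$. The root is the initial $k$-clique. Whenever in some round $s$ a new vertex $v_s$ is born by joining a $k$-clique $C$, I declare the $k$ newly created $k$-cliques of the form $\{v_s\}\cup(C\setminus\{u\})$ (for $u\in C$) to be the children of $C$ in $T^{\ast}$. Since at each step a uniformly random $k$-clique of $G(s-1)$ is selected, $T^{\ast}(n)$ is exactly the rooted random tree built over $n$ steps by, at each step, picking a uniformly random existing node and attaching $k$ new children to it.

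\textbf{Step 2 (Monotone paths equal $T^{\ast}$-depth).} For a vertex $v\in V(G(n))$, let $\mathrm{dp}(v)$ be the length of the longest monotone path $u_h u_{h-1}\cdots u_0=v$ as in the proposition; so vertices of the initial $k$-clique have $\mathrm{dp}=0$, and for each later-born vertex $v$ one has $\mathrm{dp}(v)=1+\max_{u\in C(v)}\mathrm{dp}(u)$, where $C(v)$ denotes the $k$-clique that $v$ joined. A straightforward induction on the construction of $G$ yields $\max_{u\in C}\mathrm{dp}(u)=\mathrm{depth}_{T^{\ast}}(C)$ for every $k$-clique $C$, so the longest monotone path in $G(n)$ has length equal to the height of $T^{\ast}(n)$. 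It therefore suffices to prove that the height of $T^{\ast}(n)$ is $\Oh(\log n)$ whp.

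\textbf{Step 3 (First-moment bound on the height).} Let $N_{h,t}$ be the number of depth-$h$ nodes of $T^{\ast}(t)$, and set $a_{h,t}=\Ex{N_{h,t}}$. Conditioning on $T^{\ast}(t-1)$ and using $|T^{\ast}(t-1)|=k(t-1)+1$, one obtains the recurrence
\begin{equation*}
a_{h,t}=a_{h,t-1}+\frac{k}{k(t-1)+1}\,a_{h-1,t-1},\qquad a_{0,t}=1,\quad a_{h,0}=0 \text{ for } h\ge 1.
\end{equation*}
Since $k/(k(t-1)+1)\le 1/(t-1)$ for $t\ge 2$, a routine induction on $h$ gives $a_{h,n}\le (\log n+C)^h/h!$ for some constant $C=C(k)$. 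Taking $h=\lceil\alpha\log n\rceil$ with $\alpha$ large enough and applying Markov's inequality together with Stirling's formula yields
\begin{equation*}
\Pr{\mathrm{height}(T^{\ast}(n))\ge\alpha\log n}\le a_{h,n}\le (Ce/\alpha)^{h}=o(1),
\end{equation*}
completing the proof.

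The only real calculation is the induction in Step~3, which is routine. One might hope to invoke Theorem~\ref{pro:drmota} by coupling $T^{\ast}(n)$ with a random $k$-ary recursive tree from Definition~\ref{def:randomrecursive}, but since $T^{\ast}$ may select internal nodes as well as leaves, no transparent such coupling seems to exist; the first-moment route above appears to be the cleanest path to the $\Oh(\log n)$ bound needed here.
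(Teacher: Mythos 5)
Your proof is correct, and while Steps 1--2 coincide with the paper's argument (your depth correspondence $\mathrm{dp}(v)=1+\max_{u\in C(v)}\mathrm{dp}(u)$ is exactly the paper's ``draft'' function, and your companion tree is the paper's auxiliary tree of $k$-cliques), Step 3 takes a genuinely different route. The paper disposes of the height bound by asserting that the auxiliary tree --- in which a \emph{uniformly random node} is chosen each round --- has height stochastically dominated by that of a random $k$-ary recursive tree, where a \emph{uniformly random leaf} is chosen, and then cites Theorem~\ref{pro:drmota}. You instead run a self-contained first-moment computation: the recurrence $a_{h,t}=a_{h,t-1}+\tfrac{k}{k(t-1)+1}a_{h-1,t-1}$ is right, the induction does close cleanly (e.g.\ writing $b_t=\sum_{s\le t}k/(k(s-1)+1)\le\log t+C$ and showing $a_{h,t}\le b_t^h/h!$ by comparing the sum with $\int_0^{b_t}x^{h-1}\,\mathrm{d}x/(h-1)!$), and Markov plus Stirling finishes it. What your approach buys is the elimination of both the domination claim (which, contrary to your closing remark, is believable --- internal nodes are shallower than leaves, so shifting probability mass onto them can only help --- but does require a careful coupling that the paper does not supply) and the appeal to an external theorem; the cost is that you get a concrete constant $\alpha$ from optimizing $(e/\alpha)^{\alpha}$ rather than the sharper constant of Theorem~\ref{pro:drmota}, which is irrelevant here since only $\Oh(\log n)$ is needed. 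One cosmetic slip: the final displayed bound should read $\bigl(e(\log n+C)/h\bigr)^h$ rather than $(Ce/\alpha)^h$; for $h=\lceil\alpha\log n\rceil$ this is $\Oh\bigl((e/\alpha)^h e^{\alpha C}\bigr)=o(1)$ once $\alpha>e$, so the conclusion stands.
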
}

\begin{proof}
We inductively define a notion of \emph{draft} for vertices and $k$-cliques of $G(n)$.
The draft of the vertices of $G(0)$ as well as the $k$-clique they form equals 0.
The draft of every $k$-clique equals the maximum draft of its vertices.
Whenever a new vertex is born and is joined to a $k$-clique, the draft of the vertex
equals the draft of the $k$-clique plus one.
It is easy to see that if $xy \in E(G(n))$ and $x$ is born later than $y$, then $\operatorname{draft}(x) \ge \operatorname{draft}(y)+1$.
In particular, if $x$ is a vertex of $G(n)$ 
and there is a path $x = x_h, x_{h-1}, \dots, x_1, x_0$ in $G(n)$ such that
$x_j$ is born later than $x_{j-1}$ for each $j$,
then $\operatorname{draft}(x) \ge h$.
Hence we just need to show that with high probability the draft of each $k$-clique of $G(n)$ is $\Oh(\log n)$.

We define an auxiliary tree whose nodes correspond to the $k$-cliques of $G(n)$,
{in such a way that the depth of each node in this tree equals the  draft of its corresponding $k$-clique.}
Start with a single node corresponding to $G(0)$.
Whenever a new vertex $x$ is born and is joined to a $k$-clique $C$,
$k$ new $k$-cliques are created.
In the auxiliary tree, add these to the set of children of $C$.
{So, the auxiliary tree evolves as follows:
in every round a node is chosen uniformly at random and gives birth to $k$ new children. Hence,}
the height of the auxiliary tree {after $n$ rounds} is stochastically smaller than that of a random $k$-ary recursive tree on $1+kn$ nodes,
whose height is $\Oh(\log n)$ whp
by Theorem~\ref{pro:drmota}.
\end{proof}

\begin{definition}
The \emph{vertex expansion}
of a graph $G$
(also known as the \emph{vertex isoperimetric number} of $G$), written $\alpha(G)$,
is defined as
\[
\alpha(G)=\min\left\{\frac{|\partial S|}{|S|} :  S\subseteq V(G), 0<|S|\leq |V(G)|/2  \right\} \:,
\]
 where $\partial S$ denotes the set of vertices in $V(G)\setminus S$ that have a neighbor in $S$.
\end{definition}

\begin{definition}
The \emph{conductance} of a graph $G$, written  $\Phi(G)$, is defined as
\[
\Phi(G)=\min\left\{\frac{e(S, V(G)\setminus S)}{{\tt vol}(S)} : S\subseteq V(G),
0 < {\tt vol}(S) \le {\tt vol}(V(G)) / 2 \right\} \:,
\]
where ${ e}(S, V(G)\setminus S)$ denotes the number of edges between $S$ and $V(G)\setminus S$,
and ${\tt vol}(S)=\sum_{u\in S}\deg(u)$ for every $S\subseteq V(G)$.
\end{definition}

\begin{proposition}
Deterministically $G(n)$ has
vertex expansion $\Oh\left(k/n\right)$, and whp its
conductance is $\Oh\left(\log n \cdot n^{-1/k}\right)$.
\end{proposition}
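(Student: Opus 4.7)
The plan is to deduce both conclusions from a single structural fact: $G(n)$ admits a $k$-clique $C$ whose deletion splits $V(G(n)) \setminus C$ into two disjoint subsets $A$ and $B$, each of size $\Omega(n)$, with no edge between them. The main obstacle is proving the existence of this balanced $k$-clique separator; I would do this via a centroid argument on the clique tree of the chordal graph $G(n)$, whose nodes are the maximal cliques (the $(k+1)$-cliques) and whose edges are labelled by $k$-cliques acting as vertex separators. An alternative direct argument is to pick the earliest round $t$ such that the component of $v_t$ in $G(n) - C_t$---where $C_t$ is the $k$-clique to which $v_t$ was joined at its birth---has size exceeding $(n+k)/3$, and set $C := C_t$; minimality of $t$ then forces the complementary side to also have size $\Omega(n)$.

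For vertex expansion, I would take $S$ to be whichever of $A$ and $B$ satisfies $|S| \le (n+k)/2$; then $|S| = \Omega(n)$ and $\partial S \subseteq C$, so $|\partial S| \le k$ and hence $\alpha(G(n)) \le |\partial S|/|S| = \Oh(k/n)$ deterministically.

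For conductance I would additionally need a whp upper bound on the degrees of the vertices in $C$. Applying Lemma~\ref{lem:largedegree} with $q = c \log n$ for a sufficiently large constant $c$, together with a union bound over the $n+k$ vertices of $G(n)$ (the $k$ initial vertices obey the same tail bound by an analogous argument), yields that whp every vertex of $G(n)$ has degree $\Oh(\log n \cdot n^{(k-1)/k})$. On this event, every edge with one endpoint in $S$ and the other in $V(G(n))\setminus S$ is incident to $C$, so
$$
e(S, V(G(n))\setminus S) \;\le\; \sum_{v \in C}\deg(v) \;=\; \Oh\!\bigl(\log n \cdot n^{(k-1)/k}\bigr),
$$
while ${\tt vol}(S) \ge k|S| = \Omega(n)$ because every vertex has degree at least $k$. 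Since ${\tt vol}(V(G(n))) = 2|E(G(n))| = \Theta(n)$, swapping $S$ with its complement if necessary ensures ${\tt vol}(S) \le {\tt vol}(V(G(n)))/2$ without changing the orders of magnitude. Dividing these bounds yields $\Phi(G(n)) = \Oh(\log n \cdot n^{-1/k})$, completing the plan.
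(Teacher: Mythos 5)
Your proposal follows essentially the same route as the paper's: a balanced constant-size clique separator gives the vertex-expansion bound immediately, and combining the separator with a whp maximum-degree bound of $\Oh(\log n\cdot n^{1-1/k})$ (Lemma~\ref{lem:largedegree} with $q=\Theta(\log n)$ plus a union bound over all vertices) gives the conductance bound; your inequalities $|\partial S|\le|C|$, $e(S,V\setminus S)\le\sum_{v\in C}\deg(v)$ and ${\tt vol}(S)\ge k|S|$ are exactly the paper's. The only difference is where the separator comes from: the paper simply cites the treewidth separator lemma (\cite[Lemma~5.3.1]{Kl94}), which for treewidth $k$ yields a partition $(A,B,C)$ with $|C|=k+1$, no $A$--$B$ edges, and $(n-1)/3\le|A|,|B|\le 2(n-1)/3$, whereas you propose to rebuild such a separator by hand. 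That is fine in principle (and whether $|C|$ is $k$ or $k+1$ is immaterial to both bounds), but note two caveats. First, the centroid of the clique tree is a node, i.e.\ a $(k+1)$-clique; extracting a genuinely balanced \emph{$k$-clique} separator requires an additional grouping argument over the at most $k+1$ edge-labels incident to the centroid bag, so it is cleaner to just settle for the $(k+1)$-clique. Second, your ``alternative direct argument'' is broken: in the star-like $k$-tree where every new vertex attaches to the initial $k$-clique, the component of $v_t$ in $G(n)-C_t$ is a singleton for every $t$, so no round $t$ with that component exceeding $(n+k)/3$ exists. Stick with the centroid route (or just cite the treewidth lemma, as the paper does).
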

\begin{proof}
Let $G=G(n)$.
Since $G$ has treewidth $k$, by~\cite[Lemma~5.3.1]{Kl94}  there exists a partition $(A,B,C)$ of $V(G)$ such that
\begin{enumerate}
\item $|C|=k+1$,
\item $(n-1)/3 \le |A| \le 2(n-1)/3$ and
$(n-1)/3 \le |B| \le 2(n-1)/3$, and
\item there is no edge between $A$ and $B$.
\end{enumerate}
At least one of $A$ and $B$, say $A$, has size less than  $(n+k)/2$.
Then
$$\alpha(G) \le \frac{|\partial A|}{|A|} \le \frac{k+1}{(n-1)/3} = \Oh (k/n) \:.$$

At least one of $A$ and $B$, say $B$, has volume less than  ${\tt vol} (G)/2$.
Then since all vertices in $G$ have degrees at least $k$,
$$\Phi(G) \le \frac{e(B, A\cup C)}{{\tt vol}(B)} \le \frac{e(B,C)}{k|B|} \le \frac{(k+1)\Delta(G)}{k(n-1)/3} =\Oh(\Delta(G)/n)\:.$$
Hence to prove
$\Phi(G) = \Oh\left(\log n \cdot n^{-1/k}\right)$
it suffices to show that with high probability we have
$$\Delta(G) \le k + (2 \log n) n^{1-1/k} \:.$$
Let $q = \left \lfloor 2 \log n \right \rfloor$ and let $x$ be a vertex born in one of the rounds $1,2,\dots,n$.
By Lemma~\ref{lem:largedegree},
$$\Pr{\deg(x) > k + q n^{1-1/k}} = \Oh (q\sqrt{q} \exp(-q)) = o(1/n) \:.$$
An argument similar to the proof of Lemma~\ref{lem:largedegree} shows that the probability that a vertex in $G(0)$
has degree greater than $k + q n^{1-1/k}$ is $o(1/n)$ as well.
A union bound over all vertices shows that with high probability we have
$\Delta(G) \le k + (2 \log n) n^{1-1/k}$, as required.
\end{proof}

\section{Proof of Theorem~\ref{thm:mainrandomktree}}
\label{proofupper}
Once we have the following lemma, our problem reduces to proving a structural result for random $k$-trees.

\begin{lemma}
\label{lem:feige}
{Let $\chi$ and $\tau$ be fixed positive integers.}
Let $G$ be an $n$-vertex graph and let $\Sigma\subseteq V(G)$ with $|\Sigma|=n-o(n)$ be such that for every pair of vertices
$u,v\in \Sigma$ there exists a $(u,v)$-path $u u_1 u_2 \cdots u_{l-1} v$
such that $l\le \chi$ and for every $0\le i \le l-1$ we have $\min\{\deg(u_i),\deg(u_{i+1})\} \le \tau$
(where we define $u_0=u$ and $u_l=v$).
If a random vertex in $G$ knows a rumor, then whp after
$6 \tau (\chi + \log n)$
rounds of the {\sf Push-Pull} protocol, at least $n-o(n)$ vertices will know the rumor.
\end{lemma}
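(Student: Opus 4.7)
The plan is to fix a target $v\in\Sigma$ and bound the time for the rumor to reach $v$ by following the ``good'' $(s,v)$-path guaranteed by the hypothesis. Since the initial vertex $s$ is chosen uniformly at random and $|\Sigma|/n=1-o(1)$, one has $s\in\Sigma$ whp; so it suffices to show that, conditional on $s\in\Sigma$, each fixed $v\in\Sigma$ is informed after $6\tau(\chi+\log n)$ rounds with probability $1-o(1/n)$, and then take a union bound over the at most $n$ such $v$. Because $|\Sigma|=n-o(n)$, this leaves $n-o(n)$ informed vertices, as required.

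For fixed $s,v\in\Sigma$, pick a path $s=u_0,u_1,\dots,u_\ell=v$ with $\ell\le\chi$ as in the assumption. The key per-edge estimate is: on each edge $u_iu_{i+1}$ the hypothesis supplies an endpoint of degree at most $\tau$, and in any round in which exactly one of $u_i,u_{i+1}$ is informed, this low-degree endpoint transmits the rumor across the edge (by pushing if informed, by pulling if not) with probability at least $1/\tau$. Consequently, once $u_i$ becomes informed, the number of additional rounds until $u_{i+1}$ becomes informed is stochastically dominated by a $\operatorname{Geom}(1/\tau)$ random variable. Chaining the edges in order then gives that the total travel time from $s$ to $v$ is stochastically dominated by a sum $S$ of $\ell$ independent $\operatorname{Geom}(1/\tau)$ variables.

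To finish, rewrite $\Pr{S>6\tau(\chi+\log n)}=\Pr{\operatorname{Bin}\bigl(6\tau(\chi+\log n),1/\tau\bigr)<\ell}$. The binomial has mean $6(\chi+\log n)\ge 6\ell$, so $\ell$ lies below one sixth of its mean, and a standard Chernoff bound yields a failure probability of $\exp(-\Omega(\chi+\log n))$, which is $o(1/n)$. Combined with the union bound over $v\in\Sigma$ and the event $s\in\Sigma$, this proves the lemma.

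The one slightly subtle point is justifying the independence of the dominating geometrics: an internal vertex $u_i$ performs a single action per round, so its randomness affects both incident edges $u_{i-1}u_i$ and $u_iu_{i+1}$ simultaneously. The standard remedy is to analyse the edges sequentially and only use the randomness of the rounds \emph{after} $u_i$ has been informed when estimating the delay across $u_iu_{i+1}$; this fresh randomness is independent of what was used for the preceding edges and the arrival time at $u_i$, so the partial-sum bound by independent geometrics goes through. Apart from this bookkeeping, the proof is essentially a clean Chernoff calculation.
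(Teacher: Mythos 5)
Your proof is correct and follows essentially the same route as the paper's: traverse the guaranteed low-degree path, dominate each edge-crossing time by a $\operatorname{Geom}(1/\tau)$ variable, convert the sum to a binomial tail, and apply Chernoff with mean $6(\chi+\log n)$ before a union bound. Your explicit remark on sequential exposure of randomness to justify independence of the dominating geometrics is a point the paper's proof states without elaboration, but otherwise the arguments coincide.
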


\begin{proof}
{The proof is along the lines of that of~\cite[Theorem~2.2]{FPRU90}.}
We show that given any $u,v \in \Sigma$,
if $u$ knows the rumor then with probability at least $1 - o\left(n^{-2}\right)$ after $ 6 \tau (\chi + \log n)$ rounds $v$ will know the rumor.
The lemma follows by using the union bound and noting that a random vertex lies in $\Sigma$ with high probability.
Consider the $(u,v)$-path $u u_1 u_2 \cdots u_{l-1} v$ promised by the hypothesis.
We bound from below the probability that the rumor is passed through this path.

For every $0\leq i\leq l-1$, the number of rounds taken for the rumor to pass from $u_i$ to $u_{i+1}$
is a geometric random variable with success probability at least $1/\tau$
(if $\deg(u_i) \le \tau$, this is the number of rounds needed for $u_i$ to push the rumor along the edge,
and if $\deg(u_{i+1}) \le \tau$, this is the number of rounds needed for $u_{i+1}$ to pull the rumor along the edge).
The random variables corresponding to distinct edges are mutually independent.
Hence the probability that the rumor is not passed in $6 \tau (\chi + \log n)$
 rounds is at most the probability that
the number of heads in a sequence of $6 \tau (\chi + \log n)$
 independent biased coin flips, each having probability $1/\tau$ of being heads,
is less than $l$.
Let $X$ denote the number of heads in such a sequence.
Then using the Chernoff bound (see, e.g., \cite[Theorem 4.2]{rand_algs}) and noting that $\Ex{X} = 6 (\chi + \log n)$ we get
\begin{align*}
\Pr{X < l} \le
\Pr{ X \leq \Ex{X} / 6} \leq \exp(-(5/6)^2 \Ex{X} / 2)
\leq \exp(-(5/6)^2 (6 \log n) / 2)  \:,
\end{align*}
which is $o\left(n^{-2}\right)$, as required.
\end{proof}

Let $f(n)=o(\log \log n)$ be an arbitrary function going to infinity with $n$, and let
$$m = \left\lceil 
\frac{n}{f(n)^{{3}/{(k-1)}} (\log n)^{{2}/{(k-1)}}} \right \rceil\:.$$
Also let $q = \lceil 4 \log \log n \rceil$ and let
\begin{equation}
\label{tau_def}
\tau = 2k + q (n/m)^{1-1/k} \:.
\end{equation}
By Proposition~\ref{pro:diameter}, whp a random $k$-tree on $n+k$ vertices has diameter $\Oh(\log n)$.
Theorem~\ref{thm:mainrandomktree} thus follows from Lemma~\ref{lem:feige} and the following structural result,
which we prove in the rest of this section.

\begin{lemma}
\label{thm:structural}
Let $G$ be an $(n+k)$-vertex random $k$-tree.
Whp there exists  $\Sigma \subseteq V(G)$ satisfying the conditions of Lemma~\ref{lem:feige}
with $\tau$ defined in (\ref{tau_def}) and $\chi = \Oh(\log n + \operatorname{diam}(G))$.
\end{lemma}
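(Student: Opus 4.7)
The plan is to throw away an $o(n)$-sized set of ``bad'' vertices, take $\Sigma$ to be the rest, and show that any two vertices of $\Sigma$ are joined by a path meeting the min-degree condition of Lemma~\ref{lem:feige}. Such a path is assembled from up to three pieces: a segment from each endpoint up into a $k$-clique of $G(m)$ through the ``branch'' containing it (the connected component of $G-V(G(m))$ in which it sits, if any), joined by a path through an almost-spanning tree $T$ of $V(G(m))$ of depth $\Oh(\log n)$. For the degree bookkeeping, applied with $q=\lceil 4\log\log n\rceil$ Lemma~\ref{lem:largedegree} says a vertex born in round $j\ge m$ has degree greater than $\tau=2k+q(n/m)^{1-1/k}$ with probability $\Oh(q\sqrt{q}\,e^{-q})=o(1/\log^{3}n)$; a union bound makes only $o(n)$ of the $n-m$ late-born vertices \emph{heavy}, and I call the rest \emph{light}.

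Next I would handle the branches. The vertices born after round $m$ partition into connected components of $G-V(G(m))$, each rooted at a $k$-clique of $G(m)$. The auxiliary-tree argument in the proof of Proposition~\ref{pro:diameter} applies inside each branch and bounds its graph-theoretic height by $\Oh(\log n)$ whp. Since essentially every vertex of a branch is light, any two vertices of it are linked by a light-intermediated path of length $\Oh(\log n)$, and every branch vertex has such a path up to its root $k$-clique in $G(m)$. Branches of atypical height, together with all heavy vertices, collectively contain only $o(n)$ vertices, which I remove from $\Sigma$.

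The substantial work is the tree $T$ on $V(G(m))$. For each $v\in V(G(m))\setminus V(G(0))$ born in round $j\le m$ with birth $k$-clique $C_v$, I would pick a parent $p_v\in C_v$ (say the one of smallest draft, so the depth of $v$ in $T$ is at most its draft) and declare $\{v,p_v\}$ an edge of $T$; Proposition~\ref{pro:diameter} then bounds the depth of $T$ by $\Oh(\log n)$. For every edge of $T$ I need either $v$ or $p_v$ to be light, or for the two of them to share a light common neighbor in $G$, which would expand the tree edge into a length-two detour in the final path. If $v$ itself is light, nothing to do. Otherwise I hunt for a \emph{bridge}: a vertex born in some round $j'>m$ whose birth $k$-clique contains both $v$ and $p_v$. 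An urn argument in the spirit of Proposition~\ref{pro:degs}, now tracking $k$-cliques through the fixed pair $\{v,p_v\}$ rather than through a single vertex, shows the count of candidate bridges is large whp, and Lemma~\ref{lem:largedegree} then guarantees that most of them are light.

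The hardest part will be the last step: proving bridge existence simultaneously for every edge of $T$, so that the per-edge failure probability beats the $\Theta(m)$-fold union bound and the pruning of any unlucky subtrees destroys only $o(n)$ further vertices of $\Sigma$. The parameters $m$, $q$, $\tau$ have been calibrated precisely for this accounting. Concatenating the two in-branch segments with the path through $T$ (each tree edge replaced, when necessary, by its length-two bridge detour) yields, for any two vertices of $\Sigma$, a path of length $\Oh(\log n+\diam(G))=\chi$ that meets the min-degree condition, verifying the hypotheses of Lemma~\ref{lem:feige}.
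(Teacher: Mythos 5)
Your overall architecture matches the paper's: prune an $o(n)$ set of bad vertices, route within the components hanging off $G(m)$, and traverse a low-depth almost-spanning forest on $V(G(m))$ whose edges are crossed either directly (one endpoint of small degree) or via a light common neighbor. But the step you yourself flag as ``the hardest part'' --- bridge existence for the tree edges --- is exactly where there is a genuine gap, and your parent rule ($p_v \in C_v$ of smallest draft) cannot be made to work quantitatively. The supply of modern candidate bridges for the edge $v p_v$ is governed by $N(v p_v)$, the number of $k$-cliques of $G(m)$ containing that edge; for an arbitrary vertex of $C_v$ this can be as small as $k-1$, and then Proposition~\ref{pro:polyasimple} only yields that no bridge is ever born with probability at most $\left((m+1)/(n+1)\right)^{(k-1)/k}$. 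With $m = \lceil n/(f(n)^{3/(k-1)}(\log n)^{2/(k-1)})\rceil$ this is about $f(n)^{-3/k}(\log n)^{-2/k}$, which for $k\ge 3$ is \emph{not} $o\left(1/(f(n)\log n)\right)$; multiplied by the $\Theta(\log n)$ edges on a root path of $T$ it is not even $o(1)$, so you cannot conclude that a given vertex of $G(m)$ has a fast path to the root, and the pruning destroys far more than $o(n)$ vertices.

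The paper's fix is twofold, and both parts are absent from your proposal. First, the parent of $v$ is chosen to be the vertex $u\in C_v$ \emph{maximizing} $N(vu)$, not the one of smallest draft. Second, a purely deterministic counting argument (Lemma~\ref{lem:counting_new}) shows that if $v$ acquires at least $2k-1$ neighbors by round $m$, then this maximum is at least $(k^2-k)/2$, which upgrades the no-bridge probability to $\left((m+1)/(n+1)\right)^{(k-1)/2} = o\left(1/(f(n)\log n)\right)$ --- precisely what the calibration of $m$ is for. The complementary case, $v$ having fewer than $2k-1$ neighbors in $G(m)$, is handled by a $q$-th moment bound showing $v$ is then light in $G(n)$ with the same failure probability; note the correct dichotomy is on the degree of $v$ in $G(m)$, not on whether $v$ happens to be light in $G(n)$ as in your write-up. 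Without the max-$N$ parent rule and the counting lemma the per-edge failure probability cannot beat the union bound, and the proof does not close. (Your treatment of heavy vertices inside branches is also looser than needed --- the paper must control piece sizes by a second-moment argument, since one heavy modern vertex poisons its whole piece --- but that is repairable bookkeeping rather than a conceptual obstacle.)
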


For the rest of this section, $G$ is an $(n+k)$-vertex random $k$-tree.
Recall from Definition~\ref{def_random_k_tree} that $G=G(n)$, where
$G(0),G(1),\dots$ is the random $k$-tree process.
Consider the graph $G(m)$, which has $k+m$ vertices and $mk+1$ many $k$-cliques.
For an edge $e$ of $G(m)$, let $N(e)$ denote the number of $k$-cliques of $G(m)$ containing  $e$.
We define a spanning forest $F$ of $G(m)$ as follows:
for every $1\le t\le m$,
if the vertex $x$ born in round $t$ is joined to the $k$-clique $C$,
then in $F$, $x$ is joined to a vertex $u \in V(C)$ such that
$$N(x u) = \max_{v\in V(C)}N(x v) \:.$$

Note that $F$ has $k$ trees and the $k$ vertices of $G(0)$ lie in distinct trees.
Think of these trees as rooted at these vertices.
The tree obtained from $F$ by merging these $k$ vertices is the `highway system' described in the sketch of the proof of Theorem~\ref{thm:mainrandomktree}.
Informally speaking, the proof has three parts:
first, we show that this tree has a small height (Lemma~\ref{lem:log});
second, we show that each edge in this tree quickly exchanges the rumor with a reasonably large probability (Lemma~\ref{FastEdges});
and finally we show that almost all vertices in $G-G(m)$ have quick access to and from $F$ (Lemma~\ref{lem:nonnice}).

Let $\mathsf{LOG}$ denote the event `each tree in $F$ has height $\Oh(\log n)$.'
{The following lemma is an immediate corollary of
Proposition~\ref{pro:diameter}.}

\begin{lemma}
\label{lem:log}
With high probability $\mathsf{LOG}$ happens.
\end{lemma}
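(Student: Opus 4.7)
The plan is to derive the lemma directly from Proposition~\ref{pro:diameter}. First I would observe that $F$ is a subgraph of $G(m)$, which in turn is a subgraph of $G(n)=G$, so every path in $F$ is also a path in $G(n)$, and the birth time of each vertex is the same in all three graphs. Next, I would verify that every root-to-leaf path in a tree of $F$ is strictly monotone in birth time. By construction, the $k$ roots of $F$ are exactly the vertices of $G(0)$, and any non-root vertex $x$ of $F$, born in some round $1\le t\le m$, is joined in $F$ to a vertex $u$ of the $k$-clique $C\subseteq V(G(t-1))$ to which $x$ attaches; thus $u$ is older than $x$. Consequently, along any root-to-leaf path $u_0,u_1,\dots,u_h$ in a tree of $F$ (with $u_0$ a root), the birth time is strictly increasing in the index.

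Reversing, the path $u_h u_{h-1}\cdots u_0$ sits inside $G(n)$ and satisfies the hypothesis of Proposition~\ref{pro:diameter} that $u_i$ be born later than $u_{i-1}$ for every $i$. That proposition asserts that on an event of probability $1-o(1)$ \emph{every} such monotone path in $G(n)$ has length $\Oh(\log n)$; in particular, every root-to-leaf path in every tree of $F$ has length $\Oh(\log n)$ on this event. Since the height of a tree equals the maximum length of its root-to-leaf paths, this yields $\mathsf{LOG}$ with high probability. I do not anticipate any genuine obstacle: the only point requiring verification is that the monotonicity built into the definition of $F$ lines up with the hypothesis of Proposition~\ref{pro:diameter}, which it does by construction, and that the high-probability bound supplied by Proposition~\ref{pro:diameter} is uniform over all monotone paths, which is how it is stated.
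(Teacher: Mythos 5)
Your proposal is correct and matches the paper, which simply notes that Lemma~\ref{lem:log} is an immediate corollary of Proposition~\ref{pro:diameter}; you have just spelled out the (straightforward) verification that root-to-leaf paths in $F$ are monotone in birth time and hence fall under the proposition's hypothesis.
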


We prove Lemma~\ref{thm:structural} conditional on the event $\mathsf{LOG}$.
In fact, we prove it for any $G(m)$ that satisfies $\mathsf{LOG}$.
Let $G_1$ be an arbitrary instance of $G(m)$ that satisfies $\mathsf{LOG}$.
{So, $G_1$ and $F$ are fixed in the following, and}
all randomness refers to rounds $m+1,\dots,n$.
The following deterministic lemma
will be used in the proof of Lemma~\ref{FastEdges}.

\begin{lemma}
\label{lem:counting_new}
Assume that  $x y \in E(F)$ and  $x$ is born later than $y$.
If the degree of $x$ in $G_1$ is greater than $2k-2$, then $N(xy) \ge (k^2-k)/2$.
\end{lemma}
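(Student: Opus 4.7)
The plan is to run a double-counting argument over the $k$-cliques of $G_1$ containing $x$, combined with the argmax property defining $F$ and pigeonhole. Since $x$ is younger than $y$ and $xy \in E(F)$, the vertex $y$ lies in the $k$-clique $C = \{v_1, v_2, \ldots, v_k\}$ (say $y = v_1$) that $x$ was joined to at birth, and $N(xy) = \max_{v \in V(C)} N(xv)$ by construction of $F$. Associating to each $k$-clique $K$ of $G_1$ containing $x$ the quantity $c_K := |V(K) \cap V(C)|$ (note $c_K \le k-1$ since $x \notin C$), a double-counting argument gives
\[
\sum_{v \in V(C)} N(xv) \;=\; \sum_{K \ni x} c_K,
\]
so by pigeonhole it will suffice to show $\sum_{K \ni x} c_K \ge k^2(k-1)/2$.

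Next I would trace how $\sum_{K \ni x} c_K$ evolves through the random $k$-tree process. At $x$'s birth, the $k$ freshly created $k$-cliques containing $x$ are $\{x\} \cup (C \setminus \{v_j\})$ for $j = 1, \ldots, k$, each with $c$-value $k-1$, contributing $k(k-1)$. Every other $k$-clique containing $x$ arises when some later vertex $z$ attaches to a $k$-clique $K \ni x$; such an attachment produces $k-1$ new $k$-cliques containing $x$ of the form $\{z\} \cup (V(K) \setminus \{u\})$ with $u \in V(K) \setminus \{x\}$. A direct case check on whether $u$ lies in $C$ shows that $c_K$ of these new cliques have $c$-value $c_K - 1$ and $k-1-c_K$ have $c$-value $c_K$, so the increment to $\sum_{K \ni x} c_K$ equals $c_K(c_K-1) + (k-1-c_K)c_K = c_K(k-2)$; as a by-product, the minimum $c$-value over $k$-cliques containing $x$ drops by at most $1$ per attachment.

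Finally I would use the degree hypothesis. Since $\deg_{G_1}(x) > 2k-2$, at least $M = k-1$ later vertices attach to $k$-cliques containing $x$; call the attached cliques $K_1, \ldots, K_M$. An induction using the ``minimum drops by at most one'' fact, seeded by the initial minimum of $k-1$, yields $c_{K_m} \ge k-m$ for $m = 1, \ldots, k-1$, whence
\[
\sum_{K \ni x} c_K \;\ge\; k(k-1) + (k-2)\sum_{m=1}^{k-1}(k-m) \;=\; k(k-1) + \frac{(k-2)k(k-1)}{2} \;=\; \frac{k^2(k-1)}{2}.
\]
Dividing by $k$ and applying pigeonhole then yields $N(xy) \ge k(k-1)/2 = (k^2-k)/2$.

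The only delicate point is the bookkeeping in the attachment step, namely identifying that exactly $c_K$ and $k-1-c_K$ of the new cliques lose one or zero $C$-vertices respectively; the desired lower bound on $c_{K_m}$ then falls out of a one-line induction and the rest is arithmetic.
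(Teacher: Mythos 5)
Your proposal is correct and follows essentially the same route as the paper: both lower-bound the double count $\sum_{v\in V(C)}N(xv)=\sum_{K\ni x}c_K$ by $k(k-1)$ at $x$'s birth plus an increment of at least $(k-i)(k-2)$ when the $i$-th later neighbor arrives, then finish by pigeonhole and the argmax rule defining $F$. The only (cosmetic) difference is how the key fact ``the $i$-th attached clique meets $C$ in at least $k-i$ vertices'' is justified --- you track the minimum $c$-value dropping by at most one per attachment, while the paper counts how many of $x$'s neighbors at that round can fail to lie in $C$; both are valid.
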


\begin{proof}
Assume that $x$ is joined to $u_1,\dots,u_k$ when it is born, and that $v_1,v_2,\dots,v_{k-1},\dots$ are the neighbors of $x$ that are born later than $x$,
in the order of birth.
Let $\Psi$ denote the number of pairs $(u_j,C)$, where $1\le j \le k$, and
 $C$ is a $k$-clique in $G_1$ containing the edge  $x u_j$.
Consider the round in which vertex $x$ is born and is joined to $u_1,\dots,u_k$.
For every $j\in\{1,\dots,k\}$, the vertex $u_j$ is contained in $k-1$ new $k$-cliques,
so in this round $\Psi$ increases by $k(k-1)$.
For each $i\in\{1,\dots, k-1\}$, consider the round in which vertex $v_i$ is born.
This vertex is joined to $x$ and $k-1$ neighbors of $x$.
At this round $x$ has neighbor set $\{u_1,\dots,u_k,v_1,\dots,v_{i-1}\}$.
Thus at least $k-i$ of the $u_j$'s are joined to $v_i$ in this round.
Each vertex $u_j$ that is joined to $v_i$ in this round is contained in $k-2$ new $k$-cliques that contain $x$ as well,
so in this round $\Psi$ increases by at least $(k-i)(k-2)$.
Consequently, we have
$$\Psi \ge k(k-1) + \sum_{i=1}^{k-1} (k-i)(k-2) = k^2(k-1)/2 \:.$$
By the pigeonhole principle, there exists some $\ell\in\{1,\dots, k\}$ such that
the edge $x u_{\ell}$ is contained in at least $(k^2-k)/2$ many $k$-cliques, and this completes the proof.
\end{proof}

A vertex of $G$ is called \emph{modern} if it is born later than the end of round $m$,
and is called \emph{traditional} otherwise.
In other words, vertices of $G_1$ are traditional and vertices of $G-G_1$ are modern.
We say edge $uv\in E(G)$ is \emph{fast} if at least one of the following is true:
$\deg(u)\le \tau$, or $\deg(v)\le \tau$, or $u$ and $v$ have a common neighbor $w$ with $\deg(w)\le \tau$.
For an edge $uv \in E(F)$, let $p_S(uv)$ denote the probability that $uv$ is not fast,
and let $p_S$ denote the maximum of $p_S$ over all edges of $F$.

\begin{lemma}\label{FastEdges}
We have $p_S  = o ( 1 / (f(n) \log n))$.
\end{lemma}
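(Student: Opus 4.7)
The plan is, for a fixed but arbitrary edge $xy\in E(F)$ with $x$ born later than $y$, to bound $p_S(xy)$ by $o(1/(f(n)\log n))$; the bound will be uniform, so taking the maximum over the (finitely many) edges of $F$ then yields the lemma. I would split into two cases according to the size of $\deg_{G_1}(x)$.

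\textbf{Case 1: $\deg_{G_1}(x)\le 2k-2$.} Each neighbor of $x$ born later than $x$ within $G_1$ contributes $k-1$ new $k$-cliques at $x$, and $x$ has at most $k-2$ such neighbors, so the number $B'$ of $k$-cliques of $G_1$ containing $x$ satisfies $B'\le k+(k-1)(k-2)=\Oh(1)$. By Proposition~\ref{pro:degs}, $\deg_G(x)$ is an affine function of a generalized P\'olya urn $X$ with initial white count $B'$; combined with $\deg_{G_1}(x)\le 2k-2$, the event $\{\deg_G(x)>\tau\}$ forces $X>B'+(k-1)q(n/m)^{(k-1)/k}$. Proposition~\ref{pro:moments} applies here ($\alpha=k$, $\delta=k-1$, $\gamma=1$, $r=q$); rerunning the Markov-type calculation used in the proof of Lemma~\ref{lem:largedegree}, the only change being that the constant initial white count is $B'$ rather than $k$, I obtain
\[
\Pr{\deg_G(x)>\tau} \;=\; \Oh\!\left(q^{\Oh(1)}e^{-q}\right) \;=\; \Oh\!\left((\log\log n)^{\Oh(1)}(\log n)^{-4}\right) \;=\; o\!\left(1/(f(n)\log n)\right),
\]
using $f=o(\log\log n)$. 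As $\deg_G(x)\le\tau$ alone already makes $xy$ fast, this settles Case~1.

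\textbf{Case 2: $\deg_{G_1}(x)>2k-2$.} Lemma~\ref{lem:counting_new} now gives $B:=N_{G_1}(xy)\ge k(k-1)/2$. Let $M$ be the number of modern vertices that, at birth, were attached to a $k$-clique containing both $x$ and $y$; each such vertex is a common neighbor of $x$ and $y$ in $G$. If $M\ge 1$ and the earliest such vertex $w$ satisfies $\deg_G(w)\le\tau$, then $xy$ is fast, so
\[
p_S(xy) \;\le\; \Pr{M=0} \;+\; \Pr{M\ge 1 \text{ and } \deg_G(w)>\tau}.
\]
In every round $t>m$ the probability of selecting a $k$-clique containing $xy$ is at least $B/(kt)$ (the count is non-decreasing), so
\[
\Pr{M=0} \;\le\; \prod_{t=m+1}^{n}\!\left(1-\tfrac{B}{kt}\right) \;\le\; (m/n)^{B/k} \;\le\; (m/n)^{(k-1)/2} \;=\; \Oh\!\left(1/(f^{3/2}\log n)\right)
\]
by the definition of $m$, which is $o(1/(f\log n))$ since $f\to\infty$. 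For the second term, conditioning on the round $T>m$ at which $w$ is born and on $G(T)$, Lemma~\ref{lem:largedegree} yields $\Pr{\deg_G(w)>k+q(n/T)^{(k-1)/k}\mid G(T)}=\Oh(q\sqrt{q}\,e^{-q})$; since $T>m$ this threshold is below $\tau$, and taking expectations gives $\Oh(q\sqrt{q}\,e^{-q})=o(1/(f\log n))$.

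Both cases yield $p_S(xy)=o(1/(f(n)\log n))$ uniformly in $xy$, proving the lemma. The main subtlety is in Case~1: verifying that the computation behind Lemma~\ref{lem:largedegree} still produces the $e^{-q}$ tail when the initial white count is an arbitrary constant $B'$ instead of $k$. The correction introduces only polynomial-in-$q$ factors, which are dominated by $e^{-q}\le(\log n)^{-4}$.
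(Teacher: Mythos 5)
Your proposal is correct and follows essentially the same route as the paper's proof: the same dichotomy via Lemma~\ref{lem:counting_new} on $\deg_{G_1}(x)$, the same moment/Markov bound through Propositions~\ref{pro:degs} and~\ref{pro:moments} in Case~1, and in Case~2 the same two-step bound (no modern common neighbor, where your direct product over rounds is exactly the computation underlying Proposition~\ref{pro:polyasimple}, followed by Lemma~\ref{lem:largedegree} applied to the first such neighbor). No gaps.
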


\begin{proof}
Let $xy \in E(F)$ be arbitrary.
By symmetry we may assume that $x$ is born later than $y$.
By Lemma~\ref{lem:counting_new}, at least one of the
following is true: vertex $x$ has less than $2k-1$ neighbors in $G_1$,
or $N(xy) \ge (k^2-k)/2$.
So we may consider two cases.
  \begin{itemize}
\item Case 1: vertex $x$ has less than $2k-1$ neighbors in $G_1$.
In this case vertex $x$ lies in at most $k^2-2k+2$ many $k$-cliques of $G_1$.
Assume that $x$ has $A$ neighbors in $G_1$ and lies in $B$ many $k$-cliques in $G_1$.
Let
$$X = \polya\left(B,km+1-B,\begin{bmatrix}k&0\\ 1&k-1 \end{bmatrix},n-m\right)\:.$$
Then by Proposition~\ref{pro:degs} the degree of $x$ is distributed as
$A + \left( X - B\right)/(k-1)$.
By Proposition~\ref{pro:moments},
\begin{align*}
\Ex{X^q} & \le
 (1+o(1))
 \left( \frac{k (n-m)}{km+1} \right)^{\frac{q(k-1)}{k}}
 \prod_{i=0}^{q-1} \left( B + i (k-1) \right) \\
& \le
 (1+o(1))\left( \frac{n}{m} \right)^{\frac{q(k-1)}{k}}
 (k-1)^q \prod_{i=0}^{q-1} \left( k + i \right) \\
 &\le (k-1)^q (k+q)!  \left( \frac{n}{m} \right)^{\frac{q(k-1)}{k}} \:,
\end{align*}
where we have used $B \le k(k-1)$ for the second inequality.
Therefore,
\begingroup
\addtolength{\jot}{1em}
\begin{align*}
& \Pr{\deg(x) > 2k + q (n/m)^{\frac{k-1}{k}}}  \le \Pr{X \ge (k-1)q (n/m)^{\frac{k-1}{k}}} \\
& \le \frac{\Ex{X^q}}{(k-1)^qq^q (n/m)^{\frac{q(k-1)}{k}}} = \Oh \left( \frac{(k+q)^{k+q} \sqrt{q}}{q^q \exp(k+q)} \right) = o \left( \frac{1}{f(n) \log n}\right) \:.
\end{align*}
\endgroup

\item Case 2: $N(xy) \ge (k^2-k)/2$.
In this case we bound from below the probability that there exists a modern vertex $w$
that is adjacent to $x$ and $y$ and has degree at most  $\tau$.
We first bound from above the probability that $x$ and $y$ have no modern common neighbors.
For this to happen, none of the $k$-cliques containing $x$ and $y$ must be chosen in rounds $m+1,\dots,n$.
This probability equals $\Pr{\polya(N(xy),mk+1-N(xy),k,n-m)={N(xy)}}$.
Since $N(xy) \ge (k^2-k)/2$, by Proposition~\ref{pro:polyasimple} we have
\begin{equation*}
\Pr{\polya(N(xy),mk+1-N(xy),k,n-m)=N(xy)} \le \left( \frac{m+1}{n+1}\right)^{\frac{k-1  }{2 }}\:,
\end{equation*}
which is $o \left( 1/(f(n) \log n)\right)$.

Now, assume that $x$ and $y$ have a modern common neighbor $w$.
If there are multiple such vertices, choose the one that is born first.
Since $w$ appears later than round $m$, by
Lemma~\ref{lem:largedegree},
$$\Pr{\deg(w) > k + q (n/m)^{(k-1)/k}} =\Oh\left( q\sqrt{q}\exp(-q) \right) =o \left( \frac{1}{f(n) \log n}\right) \:.\qedhere$$
\end{itemize}
\end{proof}

Enumerate the $k$-cliques of $G_1$ as $C_1,\dots,C_{mk+1}$.
Then choose $r_1\in C_1,\dots,$ $r_{mk+1}\in C_{mk+1}$ arbitrarily, and call them the \emph{representative vertices}.
Starting from $G_1$, when modern vertices are born in rounds $m+1,\dots,n$ until $G$ is formed,
every clique $C_i$ `grows' to a random $k$-tree with a random number of vertices, which is a subgraph of $G$.
Enumerate these subgraphs as $H_1,\dots,H_{mk+1}$, and call them the \emph{pieces}.
More formally, $H_1,\dots,H_{mk+1}$ are induced subgraphs of $G$ such that
a vertex $v$ is in $V(H_j)$ if and only if every path connecting $v$ to a traditional vertex intersects $V(C_j)$.
In particular, $V(C_j)\subseteq V(H_j)$ for all $j\in\{1,\dots,mk+1\}$.
Note that the $H_j$'s may intersect, as a traditional vertex may lie in more than one $C_j$,
however every modern vertex lies in a unique piece.

A traditional vertex is called \emph{nice} if it is connected to some vertex in $G(0)$ via a path of fast edges.
Since $F$ has height $\Oh(\log n)$ and each edge of $F$ is fast with probability at least $1-p_S$,
the probability that a given traditional vertex is not nice is $\Oh(p_S \log n)$ by the union bound.
A piece $H_j$ is called \emph{nice} if all its modern vertices have degrees at most $\tau$,
and the vertex $r_j$ is nice.
A modern vertex is called \emph{nice} if it lies in a nice piece.
A vertex/piece is called \emph{bad} if it is not nice.

\begin{lemma}
\label{lem:nonnice}
The expected number of bad vertices is $o(n)$.
\end{lemma}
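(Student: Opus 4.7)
The plan is to decompose the set of bad vertices into three classes and bound the expected size of each separately by $o(n)$. The three classes are: (T) traditional vertices that are not nice; (M1) modern vertices lying in a piece $H_j$ whose representative $r_j$ is not nice; and (M2) modern vertices lying in a piece $H_j$ that contains some modern vertex of degree exceeding $\tau$. Every bad vertex is counted in at least one of these classes, so the expected count of bad vertices is at most the sum of their expectations.

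For (T), I would use that $\mathsf{LOG}$ holds by hypothesis: the path in $F$ from any traditional vertex $v$ to its root in $G(0)$ has length $\Oh(\log n)$. By Lemma~\ref{FastEdges} and a union bound along that path, $\Pr{v \text{ is not nice}} = \Oh(p_S \log n) = o(1/f(n))$, and summing over the $k+m$ traditional vertices gives $\Oh(m/f(n))$, which is $o(n)$ by the definition of $m$.

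For (M1), I first observe that once $G_1$ is fixed, the growth of the piece $H_j$ over rounds $m+1,\dots,n$ is governed by a standard P\'olya--Eggenberger urn with white$=\{k$-cliques of $H_j\}$ and black$=\{k$-cliques outside $H_j\}$, each draw adding $k$ balls of the drawn colour; hence $T_j(n)\sim\polya(1,mk,k,n-m)$, $S_j=(T_j(n)-1)/k$, and Proposition~\ref{pro:polyasimple} yields $\Ex{S_j^2}=\Theta((n/m)^2)$. Cauchy--Schwarz then gives
\[
\Ex{S_j\mathbf{1}[r_j \text{ not nice}]}\le\sqrt{\Ex{S_j^2}\,\Pr{r_j \text{ not nice}}}=\Oh(n/m)\cdot\sqrt{o(1/f(n))},
\]
and summing over the $mk+1$ pieces produces $\Oh(n)\cdot o(1/\sqrt{f(n)})=o(n)$.

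For (M2), the main obstacle, let $N_j$ count the modern vertices of $H_j$ with degree (in $G$) greater than $\tau$; then the contribution of (M2) is at most $\sum_j S_jN_j$. Conditional on $S_j=s$, the piece $H_j$ rooted at $C_j$ is itself a random $k$-tree on $k+s$ vertices, and every modern $v\in V(H_j)$ satisfies $\deg_G(v)=\deg_{H_j}(v)$, so Lemma~\ref{lem:largedegree} applied inside $H_j$ with the $i$-dependent parameter $q_i=(\tau-k)(i/s)^{(k-1)/k}$ yields $\Pr{\deg_{H_j}(v_i)>\tau}=\Oh(q_i^{3/2}e^{-q_i})$ for the vertex $v_i$ born at internal round $i$ of $H_j$. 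Substituting $u=q_i$ converts $\sum_{i=1}^s\Oh(q_i^{3/2}e^{-q_i})$ into a convergent Gamma integral which, using $\tau-k=\Theta(q(n/m)^{(k-1)/k})$, telescopes to the key estimate $\Ex{N_j\mid S_j=s}=\Oh(s\cdot q^{-k/(k-1)}\cdot m/n)$; hence $\Ex{S_jN_j}\le\Oh(q^{-k/(k-1)}m/n)\,\Ex{S_j^2}=\Oh(q^{-k/(k-1)}n/m)$, and summing over the $mk+1$ pieces gives $\Oh(kq^{-k/(k-1)}n)$, which is $o(n)$ because $q=4\log\log n\to\infty$. The hardest point I expect to spend time on is this integral estimate: for small $i$ the parameter $q_i$ is tiny, since the oldest modern $v_1$ in $H_j$ can attain atypically large degree, but the bulk of the sum comes from $i$-values where $q_i$ is a constant multiple of $q=\Theta(\log\log n)$, producing the saving factor $q^{-k/(k-1)}$ that ultimately makes the bound $o(n)$.
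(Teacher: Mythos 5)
Your proposal is correct in outline, but it takes a genuinely different route from the paper's at the key technical step, namely handling the correlation between the size of a piece and the event that it is bad. The paper introduces $\eta=nf(n)/m$ and splits pieces into sparse ($\le \eta+k$ vertices) and non-sparse: non-sparse pieces contribute $o(n)$ vertices in expectation via a dyadic second-moment tail bound on $\polya(1,km,k,n-m)$, and for sparse pieces it simply bounds the expected \emph{number} of bad pieces (using $\Oh(p_S\log n)$ per representative, and the global count $\Oh(nq\sqrt{q}e^{-q})$ of modern vertices of degree $>\tau$, each modern vertex lying in a unique piece) and multiplies by the cap $\eta$. You avoid the truncation entirely: for the bad-representative cause you decouple size from badness by Cauchy--Schwarz, $\Ex{S_j\mathbf{1}[r_j\text{ bad}]}\le\sqrt{\Ex{S_j^2}\Pr{r_j\text{ bad}}}$, and for the high-degree cause you condition on $S_j=s$, use that a piece given its size is itself a random $k$-tree in which modern vertices keep their global degree, and re-apply Lemma~\ref{lem:largedegree} with a birth-time-dependent parameter $q_i$ to get $\Ex{N_j\mid S_j=s}=\Oh(sq^{-k/(k-1)}m/n)$. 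Both computations check out (I verified the substitution and the Gamma-integral bound, and the final sums are $o(n)$). What each approach buys: yours dispenses with the sparse/non-sparse dichotomy and is in some sense more informative about \emph{which} modern vertices in a piece are likely to be high-degree (the early-born ones), at the cost of needing the conditional-structure claim for pieces --- which the paper does establish, but only later, in Section~\ref{sec:lowerbound} --- and of applying Lemma~\ref{lem:largedegree} with non-integer, vertex-dependent $q_i$ (a minor repair: replace $q_i$ by $\lfloor q_i\rfloor$ when $q_i\ge 2$ and use the trivial bound $1$ otherwise, which only changes constants in the integral). The paper's route sidesteps all of this by bounding each modern vertex's degree via its \emph{global} birth time $t>m$, giving a uniform $\Oh(q\sqrt q\,e^{-q})$ per vertex. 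Your argument is acceptable as written up to these routine technical patches.
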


\begin{proof}
The total number of traditional vertices is $k+m=o(n)$ so we may just ignore them in the calculations below.
Let $\eta = n f(n) / m = o (\log^3 n)$.
Say piece $H_j$ is \emph{sparse} if $|V(H_j)| \le \eta + k$.
We first bound the expected number of modern vertices in non-sparse pieces.
Observe that the number of modern vertices in a given piece is distributed as $X = (\polya(1,km,k,n-m)-1)/k$.
Using Proposition~\ref{pro:polyasimple} we get
$\Ex{X^2} \le {2kn^2}/{m^2}$.
By the second moment method, for every $t>0$ we have
$$\Pr{X \ge t} \le \frac{\Ex{X^2}}{t^2} \le \frac{2kn^2}{m^2t^2} \:.$$
The expected number of modern vertices in non-sparse pieces is thus at most
\begin{align*}
(km+1)
\sum_{i=0}^{\infty}
(2^{i+1} \eta)   \Pr {2^i \eta < X \le 2^{i+1} \eta}
& \le
\sum_{i=0}^{\infty}
(2^{i+1} \eta) (km+1)  \frac{2kn^2}{m^2\eta^2 2^{2i}} \\
& \le
\Oh\left(\frac{n^2}{m\eta} \right)
\sum_{i=0}^{\infty} 2^{-i} = \Oh\left(\frac{n^2}{m\eta} \right)\:,
\end{align*}
which is $o(n)$.

We now bound the expected number of modern vertices in sparse bad pieces.
For bounding this from above we find an upper bound for the expected number of bad pieces, and multiply by $\eta$.
A piece $H_j$ can be bad in two ways:

(1) the representative vertex $r_j$ is bad: the probability of this is $\Oh\left(p_S \log n\right)$.
Therefore, the expected number of pieces that are bad due to this reason is $\Oh\left(mk p_S \log n\right)$,
which is $o(n/\eta)$ by Lemma~\ref{FastEdges}.

(2) there exists a modern vertex in $H_j$ with degree greater than $\tau$:
the probability that a given modern vertex has degree greater than $\tau$
is $\Oh\left(q\sqrt{q}\exp(-q)\right)$ by Lemma~\ref{lem:largedegree}.
So the average number of modern vertices with degree greater than $\tau$ is  $\Oh\left(n q\sqrt{q}\exp(-q)\right)$.
Since every modern vertex lies in a unique piece,
the expected number of pieces that are bad because of this reason is bounded by  $\Oh\left(n q\sqrt{q}\exp(-q)\right)=o(n/\log^3n)$.

So the expected number of bad pieces is $o(n/\eta + n/\log^3n)$, and
the expected number of modern vertices in sparse bad pieces is $o(n + \eta n/\log^3n) = o(n)$.
\end{proof}

We now  prove Lemma~\ref{thm:structural},
which concludes the proof of Theorem~\ref{thm:mainrandomktree}.

\begin{proof}[Proof of Lemma~\ref{thm:structural}]
Let $\Sigma$ denote the set of nice modern vertices.
By Lemma~\ref{lem:nonnice} and using Markov's inequality, we have $|\Sigma|=n-o(n)$ whp.
Let $\{a_1,\dots,a_k\}$ denote the vertex set of $G(0)$.
Using an argument similar to the proof of Lemma~\ref{FastEdges}, it can be proved that
given $1\le i < j \le k$, the probability that edge $a_i a_j$ is not fast is $o(1)$.
Since the total number of such edges is a constant, whp all such edges are fast.
Let $u$ and $v$ be nice modern vertices,
and let $r_u$ and $r_v$ be the representative vertices of the pieces containing them, respectively.
Since the piece containing $u$ is nice, there exists a $(u,r_u)$-path whose vertices except possibly $r_u$
all have degrees at most $\tau$.
The length of this path is at most $\operatorname{diam} (G)$.
Since $r_u$ is nice,  for some $1\le i \le n$ there exists an $(r_u,a_i)$-path in $F$ consisting of fast edges.
Appending these paths gives a $(u,a_i)$-path with length at most $\diam(G) + \Oh(\log n)$
such that for every pair of consecutive vertices in this path, one of them has degree at most $\tau$.
Similarly, for some $1\le j \le n$ there exists a $(v,a_j)$-path of length $\Oh(\log n + \operatorname{diam}(G))$,
such that one of every pair of consecutive vertices in this path has degree at most $\tau$.
Since the edge $a_i a_j$ is fast whp, we can build a $(u,v)$-path of length $\Oh(\log n + \operatorname{diam}(G))$
of the type required by Lemma~\ref{lem:feige}, and this completes the proof.
\end{proof}

\section{Proof of Theorem~\ref{thm:lowerbound}}
\label{sec:lowerbound}

\begin{definition}[$s$-barrier]
A pair $\{C_1,C_2\}$ of disjoint $k$-cliques in a connected graph is an \emph{$s$-barrier} if
(i) the set of edges between $C_1$ and $C_2$ is a cut-set, i.e.\ deleting them disconnects the graph, and
(ii) the degree of each vertex in $V(C_1)\cup V(C_2)$ is at least $s$.
\end{definition}

Observe that if $G$ has an $s$-barrier, then for any starting vertex, whp the \textsf{Push-Pull} protocol needs at least $\Omega(s)$ rounds to inform all vertices.

\begin{lemma}
\label{lem:barrier}
The graph $G(n)$ has an $\Omega(n^{1-1/k})$-barrier
with probability $\Omega(n^{1/k-k})$.
\end{lemma}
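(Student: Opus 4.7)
The plan is to pinpoint a sharply-defined event on the first $k$ rounds that forces two disjoint extremal $k$-cliques to appear, then require every subsequent pick to stay on one side of these cliques, and finally use the P\'olya urn theory from Section~\ref{sec:urns} to show that both cliques still acquire large degrees. Writing $V(G(0))=\{v_1,\dots,v_k\}$, let $E_0$ be the event that for each $1\le i\le k$, round $i$ attaches the new vertex $x_i$ to the $k$-clique $\{v_1,\dots,v_{k-i+1},x_1,\dots,x_{i-1}\}$. Round $i$ offers $k(i-1)+1$ cliques to pick from, so $\Pr{E_0}$ is a positive constant depending only on $k$, and a short induction shows that $C_2:=\{x_1,\dots,x_k\}$ is then a $k$-clique of $G(k)$, disjoint from $C_1:=V(G(0))$.

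Next, let $E_1$ be the event that in every round $t=k+1,\dots,n$ the chosen $k$-clique is ``pure,'' meaning entirely inside the side of the evolving graph seeded by $V(C_1)$ or entirely inside the side seeded by $V(C_2)$. Under $E_0\cap E_1$, induction on $t$ shows that every vertex born after round $k$ belongs to exactly one side and creates no new cross-edges, so the edges between $V(C_1)$ and $V(C_2)$ (all of which already exist in $G(k)$) form a cut-set in $G(n)$; that gives the cut-set condition of the barrier. Each pure pick adds exactly $k$ new pure cliques on its side, so immediately before round $t+1$ there are $2+k(t-k)$ pure cliques out of $kt+1$ total, and hence
\[
\Pr{E_1\mid E_0}=\prod_{t=k}^{n-1}\frac{k(t-k)+2}{kt+1}=\frac{\Gamma(n-k+2/k)\,\Gamma(k+1/k)}{\Gamma(2/k)\,\Gamma(n+1/k)}=\Theta\bigl(n^{1/k-k}\bigr),
\]
using the standard asymptotic $\Gamma(n+a)/\Gamma(n+b)\sim n^{a-b}$.

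It remains to verify that, conditional on $E_0\cap E_1$, with probability $\Omega(1)$ every vertex of $V(C_1)\cup V(C_2)$ has degree at least $c\,n^{1-1/k}$ in $G(n)$. Conditional on $E_0\cap E_1$, the sequence of left/right decisions in rounds $k+1,\dots,n$ is a classical P\'olya--Eggenberger urn with reinforcement $k$ started from one white and one black ball, so the numbers $N_L,N_R$ of left and right picks each exceed $cn$ with constant probability for a suitable $c>0$. Given $(N_L,N_R)$, the left and right subgraphs are conditionally independent random $k$-trees seeded at $C_1$ and $C_2$, and Proposition~\ref{pro:degs} represents the number of left-side $k$-cliques containing a given $v\in V(C_1)$ as $\polya\bigl(1,0,\bigl[\begin{smallmatrix}k&0\\1&k-1\end{smallmatrix}\bigr],N_L\bigr)$, whose mean is $\Theta\bigl(N_L^{(k-1)/k}\bigr)$ by Proposition~\ref{pro:moments}. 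Appealing to the classical a.s.\ convergence of this rescaled triangular urn to a strictly positive random limit (together with a union bound over the $2k$ vertices of $V(C_1)\cup V(C_2)$) yields constant probability that every such vertex has degree at least $c'\,n^{1-1/k}$. Multiplying the $\Omega(1)$ contributions of $E_0$, of the degree event, and the $\Theta(n^{1/k-k})$ contribution of $E_1$ finishes the argument.

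The main technical obstacle will be the last step: Proposition~\ref{pro:moments} supplies only moment \emph{upper} bounds, whereas I need a \emph{lower} bound on each degree that holds jointly across all $2k$ vertices of $V(C_1)\cup V(C_2)$. Either a Paley--Zygmund second-moment argument (using the matching lower bound on $\Ex{X}$) combined with exchangeability of the $k$ initial vertices on each side, or a direct appeal to the classical a.s.\ limit theorem for triangular P\'olya urns, should close this gap; all other ingredients are combinatorial identities or straightforward urn dynamics.
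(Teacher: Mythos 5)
Your proposal is correct and follows essentially the same route as the paper: the same pair of disjoint cliques formed in the first $k$ rounds, the same ``never choose a mixed clique'' event with the identical product $\prod_{i=0}^{n-k-1}\frac{2+ik}{k^2+1+ik}=\Theta(n^{1/k-k})$, and the degree lower bound via the triangular urn $\polya\left(1,\cdot,\begin{bmatrix}k&0\\ 1&k-1\end{bmatrix},\cdot\right)$. For the gap you flag at the end, the viable option is the limit theorem for triangular urns (the paper cites exactly such a result, giving $\Pr{X<\delta n^{(k-1)/k}}<1/(2k+1)$ for small $\delta$, so the union bound over the $2k$ vertices goes through), whereas Paley--Zygmund alone only bounds each failure probability away from $1$, not below $1/(2k)$, and so would not suffice for that union bound.
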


\begin{proof}
Let $u_1,\dots,u_k$ be the vertices of $G(0)$,
and let $v_1,\dots,v_k$ be the vertices of $G(k)-G(0)$ in the order of appearance.
We define two events.
Event A is that for every $1\le i \le k$, 
when $v_i$ appears, it attaches to 
$v_1,v_2,\dots,v_{i-1},u_i,u_{i+1},\dots,u_k$; and 
for each $1\le i,j \le k$, $u_i$ and $v_j$ have no common neighbor in $G(n)-G(k)$.
Event B is that all vertices of $G(k)$ have degree $\Omega(n^{(k-1)/k})$ in $G(n)$.
Note that if A and B both happen, then 
the pair $\{u_1u_2\dots u_k, v_1v_2\dots v_k\}$ is an $\Omega(n^{(k-1)/k})$-barrier in $G(n)$.
To prove the lemma we will show that $\Pr{A}=\Omega(n^{1/k-k})$ and
$\Pr{B | A} = \Omega(1)$.

For A to happen, first, the vertices $v_1,\dots,v_k$ must choose the specified $k$-cliques, which happens with constant probability.
Moreover, the vertices appearing after round $k$ must not choose any of the $k^2-1$ many $k$-cliques that contain both $u_i$'s and $v_j$'s.
Since $1-y\ge e^{-y-y^2}$ for every $y\in[0,1/4]$,
\begin{align*}
\Pr{A}&=\Omega(\Pr{\polya(k^2-1,2,k,n-k)={k^2-1}})\\
&=\Omega\left(\prod_{i=0}^{n-k-1} \left(\frac{2+ik }{k^2 + 1 + ik}\right)\right)\\
&\ge \Omega\left( \prod_{i=0}^{4k-1} \left(\frac{2+ik }{k^2 + 1 + ik}\right)  \prod_{i=4k}^{n-k-1} \left(1-\frac{k^2-1 }{ik}\right) \right)\\
&\ge \Omega\left(\exp\left(-\sum_{i=4k}^{n-k-1} \left\{\frac{k^2-1 }{ik}+\left(\frac{k^2-1 }{ik}\right)^2\right\}\right)\right)
\end{align*}
which is $\Omega(n^{1/k-k})$
since
$$
\sum_{i=4k}^{n-k-1} \frac{k^2-1}{ik}
\le (k-1/k) \log n + \Oh(1)\mathrm{\ and\ }
\sum_{i=4k}^{n-k-1} \left(\frac{k^2-1 }{ik}\right)^2=\Oh(1) \:.$$

Conditional on A and using an argument similar to that in the proof of Proposition~\ref{pro:degs}, the degree of each of the vertices $u_1,\dots,u_k,v_1,\dots,v_k$ in $G(n)$ is at least $k + (\polya(1,1,\begin{bmatrix}k&0\\ 1&k-1\end{bmatrix},n-k)-1)/(k-1)$.
By \cite[Proposition~16]{exactly_solvable},
there exists $\delta>0$ such that
$$\Pr{\polya(1,1,\begin{bmatrix}k&0\\ 1&k-1\end{bmatrix},n-k) < \delta n^{(k-1)/k} } < 1 / (2k+1)\:.$$
By the union bound, the probability that all vertices $u_1,\dots,u_k,v_1,\dots,v_k$ have degrees at least $\delta n ^{(k-1)/k} / (k-1)$ is at least $1/(2k+1)$,
hence $\Pr{B|A} \ge 1 / (2k+1) = \Omega(1)$.
\end{proof}

Let $f(n)=o(\log \log n)$ be any function going to infinity with $n$, and also let
$m = \left\lceil f(n) n ^{1- k/(k^2+k-1)} \right\rceil \:.$
(Note that the value of $m$ is different from that in Section~\ref{proofupper}, although its role is somewhat similar.)
Consider the random $k$-tree process up to round $m$.
Enumerate the $k$-cliques of $G(m)$ as $C_1,\dots,C_{mk+1}$.
Starting from $G(m)$, when new vertices are born in rounds $m+1,\dots,n$ until $G=G(n)$ is formed,
every clique $C_i$ `grows' to a random $k$-tree with a random number of vertices, which is a subgraph of $G$.
Enumerate these subgraphs as $H_1,\dots,H_{mk+1}$, and call them the \emph{pieces}.
We say a piece is \emph{moderate} if its number of vertices is between $n/(mf(n))$ and $nf(n)/m$.
Note that the number of vertices in a piece has expected value $\Theta(n/m)$.
The following lemma is proved by showing
this random variable does not deviate much from its expected value.

\begin{lemma}
\label{lem:goodsizes}
With high probability,
there are $o(m)$ non-moderate pieces.
\end{lemma}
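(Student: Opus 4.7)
The plan is to bound the expected number $N$ of non-moderate pieces and then apply Markov's inequality. By the symmetry of the random $k$-tree dynamics started from $G(m)$, once one conditions on $G(m)$ the pieces $H_1,\dots,H_{mk+1}$ have identically distributed sizes. Hence
$$\Ex{N} = (mk+1)\cdot \Pr{H_1 \text{ is non-moderate}}.$$
If I can show $\Pr{H_1\text{ is non-moderate}} = o(1)$, this gives $\Ex{N}=o(m)$, and Markov's inequality with any sufficiently slowly growing $g(n)\to\infty$ then yields $N \le m/g(n) = o(m)$ whp.

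The key random variable is $X_1$, the number of $k$-cliques of $G(n)$ lying in $H_1$. Mimicking the proof of Proposition~\ref{pro:degs}: at time $m$ there is a single such clique ($C_1$) and $mk$ cliques outside $H_1$; every subsequent round that a clique of $H_1$ is picked, $k$ new cliques are added to $H_1$, while otherwise $H_1$'s clique count is unchanged. Therefore $X_1 \sim \polya(1, mk, k, n-m)$, and since $|V(H_1)| = k + (X_1-1)/k$, the moderate condition for $H_1$ reduces to $X_1 \in [\Theta(n/(mf(n))),\ \Theta(nf(n)/m)]$. Proposition~\ref{pro:polyasimple} gives $\Ex{X_1} = \Theta(n/m)$ and $\Var{X_1} = \Oh(n^2/m^2)$.

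The upper tail is now immediate from Chebyshev:
$$\Pr{X_1 \ge \Theta(nf(n)/m)} \le \frac{\Var{X_1}}{\Theta(nf(n)/m)^2} = \Oh(f(n)^{-2}) = o(1).$$
The lower tail is the delicate part, which I expect to be the main obstacle, since $\Var{X_1}$ and $\Ex{X_1}^2$ are of the same order and so Chebyshev is useless below the mean. My plan is to exploit the de Finetti representation of the P\'olya urn, under which $X_1 - 1$ is distributed as $k\cdot \mathrm{Bin}(n-m, p)$ with $p \sim \mathrm{Beta}(1/k, m)$. Then
$$\Pr{X_1 \le \Theta(n/(mf(n)))} \le \Pr{p \le \Theta(1/(mf(n)))} + \text{(Chernoff tail for } \mathrm{Bin}(n-m,p)\text{ given } p),$$
and integrating the Beta density, which is proportional to $t^{1/k - 1}(1-t)^{m-1}$, up to $\Theta(1/(mf(n)))$ gives a bound of order $f(n)^{-1/k} = o(1)$. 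A more self-contained alternative, closer in spirit to Section~\ref{sec:urns}, is to iterate the pointwise bound $\Pr{X_1 = a} \le (c/(c+n))^{a/k}$ from Proposition~\ref{pro:polyasimple} over the relevant small values of $a$ and sum the resulting geometric-type series; the rate of decay produces the same $f(n)^{-1/k}$ bound.

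Combining both tails yields $\Pr{H_1\text{ non-moderate}} = \Oh(f(n)^{-1/k}) = o(1)$, hence $\Ex{N} = o(m)$, and the Markov step above then finishes the argument. If the direct summation of the pointwise estimates turns out to be awkward, an equivalent route is to bound a negative moment $\Ex{X_1^{-\alpha}}$ for a small $\alpha>0$ by comparison with the Beta mixing measure, which converts the lower-tail question into another application of Markov's inequality.
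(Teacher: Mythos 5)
Your proposal is correct and follows essentially the same route as the paper: symmetry plus Markov to reduce to showing a single piece is moderate whp, a first/second-moment bound for the upper tail, and the Beta--Binomial (de Finetti) representation of the P\'olya urn with the $\Gamma$-function estimate for the Beta lower tail, yielding the same $\Oh(f(n)^{-1/k})$ bound. The only caveat is your suggested ``alternative'' of iterating the pointwise bound from Proposition~\ref{pro:polyasimple}: that proposition only bounds $\Pr{X=W_0}$ (no white ball ever drawn), not $\Pr{X=a}$ for general small $a$, so that side remark would need a different estimate --- but your main argument does not rely on it.
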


\begin{proof}
We prove the first piece, $H_1$, is moderate whp.
By symmetry, this would imply that the average number of non-moderate pieces is $o(m)$.
By Markov's inequality, this gives that whp there are $o(m)$ non-moderate pieces.
Let $X$ denote the number of vertices of $H_1$.
Note that $X$ is distributed as $k + \polya(1,km,k,n-m)$; so its expected value is $k + \frac{n-m}{1+km} = \Theta(n/m)$.
By Markov's inequality, $\Pr{X > n f(n) / m} = o(1)$.

For bounding $\Pr{X < n / (m f(n))}$, 
we use an alternative way to define the random variable 
$\polya(1,km,k,n-m)$ (see \cite[page 181]{urn_models_application}): 
assume $Z$ is a beta random variable with parameters $1/k$ and $m$. 
Then $X-k$, which has the same distribution as $\polya(1,km,k,n-m)$, is distributed as a binomial random variable with parameters $n-m$ and $Z$.
Note that
\begin{align*}
\Pr{Z < 3 / (mf(n))} & = 
\frac{\Gamma(m+1/k)}{\Gamma(m)\Gamma(1/k)}
\int_{0}^{3 / (mf(n))}
x^{1/k-1}(1-x)^{m-1} dx \\
& <
\frac{m^{1/k}}{\Gamma(1/k)}
\int_{0}^{3 / (mf(n))}
x^{1/k-1} d x
=
\frac{3^{1/k} k}{\Gamma(1/k) f(n)^{1/k}} = o(1) \:,
\end{align*}
where we have used the fact
${\Gamma(m+1/k)} < {\Gamma(m)} m^{1/k}$
which follows from \cite[inequality (2.2)]{gamma_inequalities}.
On the other hand, the Chernoff bound (see, e.g., \cite[Theorem~4.2]{rand_algs}) gives
\begin{align*}
\Pr{X < n / (mf(n)) | Z \ge 3 / (mf(n))} & \le
\Pr{\operatorname{Bin}(n-m,3 / (mf(n))) < n / (mf(n)) } \\
& < \exp (-3(n-m)/(8mf(n))) = o(1) \:,
\end{align*}
thus $\Pr{X < n / (m f(n))}=o(1)$.
\end{proof}

\begin{proof}[Proof of Theorem~\ref{thm:lowerbound}]
Consider an alternative way to generate $G(n)$ from $G(m)$: first, we determine how many vertices each piece has, and then we expose the structure of the pieces.
Let $Y$ denote the number of moderate pieces.
By Lemma~\ref{lem:goodsizes} we have 
$Y = \Omega(m)$ whp.
We prove the theorem conditional on $Y = y$, where 
$y=\Omega(m)$ is otherwise arbitrary.
Note that after the sizes of the pieces are exposed, what happens inside each piece in rounds $m+1,\dots,n$ is mutually independent from other pieces.
Let $H$ be a moderate piece with $n_1$ vertices.
By Lemma~\ref{lem:barrier},
the probability that $H$ has an $\Omega(n_1^{1-1/k})$-barrier is $\Omega(n_1^{1/k-k})$.
Since $n/(mf(n)) \le n_1 \le nf(n)/m$, the probability that $H$ has a 
$\Omega((n/(mf(n))^{1-1/k})$-barrier is $\Omega((nf(n)/m)^{1/k-k})$.
Since there are $y=\Omega(m)$ moderate pieces in total, the probability that no moderate piece has an 
$\Omega\left((n/(mf(n)))^{1-1/k}\right)$-barrier
is at most 
$$(1 - \Omega((nf(n)/m)^{1/k-k}))^y 
\le \exp ( - \Omega(f(n))) = o(1)
\:,$$ 
so whp there exists an
$\Omega\left(n^{(k-1)/(k^2+k-1)}f(n)^{-2}\right)$-barrier in $G(n)$, as required.
\end{proof}

\section{Proof of Theorem~\ref{thm:apollonian}}
\label{k-apol}
In this section we analyze the {\sf Push-Pull} protocol on random  $k$-Apollonian networks. 
Since these networks are a sub-family of random $k$-trees, we can reuse the proof techniques in Section \ref{proofupper} to bound the time needed to inform almost all vertices. 
First, we formally define the random $k$-Apollonian process.

\begin{definition}[Random $k$-Apollonian process]\label{def_k_apol}
Let $k$ be a positive integer.
Build a sequence $A(0)$, $A(1),$ $\dots$ of random graphs as follows.
The graph $A(0)$ is just a clique on $k$ vertices.
This $k$-clique is marked as \emph{active}.
For each $1\leq t\leq n$, $A(t)$ is obtained from $A(t-1)$ as follows:
an active $k$-clique of $A(t-1)$ is chosen uniformly at random,
a new vertex is born and is joined to all vertices of the chosen $k$-clique.
The chosen $k$-clique is marked as \emph{non-active}, and all the new $k$-cliques are marked as active in $A(t)$.
The graph $A(n)$ is called a \emph{random $k$-Apollonian network ($k$-RAN)} on $n+k$ vertices.
\end{definition}

We first prove a counterpart of Lemma~\ref{lem:largedegree} for $k$-RANs.

\begin{lemma}
\label{lem:largedegree2}
Let $1\le j \le n$ and let $q$ be a positive integer.
Let $x$ denote the vertex born in round $j$.
Conditional on any $A(j)$, the probability that $x$
has degree greater than $k + q (n/j)^{(k-2)/(k-1)}$ in $A(n)$ is $\Oh\left(q\sqrt{q}\exp(-q)\right)$.
\end{lemma}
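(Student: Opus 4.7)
The plan is to mimic the proof of Lemma~\ref{lem:largedegree}, adapting the urn so that it counts only the \emph{active} $k$-cliques containing $x$, since in the random $k$-Apollonian process only active cliques can ever be selected.

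First I would establish an analogue of Proposition~\ref{pro:degs}. A short induction gives that the number of active $k$-cliques in $A(j)$ equals $1 + j(k-1)$; of these, exactly $k$ (the cliques created when $x$ was born) contain $x$, and the remaining $(j-1)(k-1)$ do not. In each subsequent round a uniformly random active $k$-clique is selected and becomes non-active, while $k$ new active cliques are created: if the selected clique contains $x$ then $k-1$ of the new cliques contain $x$ and $1$ does not, otherwise all $k$ new cliques miss $x$. Rewriting the net changes in the style of a generalized P\'olya--Eggenberger urn (white $=$ active and containing $x$), the number of active $k$-cliques containing $x$ at the end of round $n$ is distributed as
$$W := \polya\!\left(k,\ (j-1)(k-1),\ \begin{bmatrix}k-1 & 0\\ 1 & k-2\end{bmatrix},\ n-j\right).$$
The crucial difference from Proposition~\ref{pro:degs} is that both diagonal entries decrease by $1$, reflecting that the chosen clique is removed from the active pool in addition to the $k$ new ones being added.

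Next, since every new neighbour of $x$ is produced by exactly one round in which a white ball is drawn, and each such round changes the white count by exactly $k-2$, one has $\deg(x) = k + (W-k)/(k-2)$; this identity needs $k \ge 3$, which is assumed. I would then apply Proposition~\ref{pro:moments} with $(\alpha,\gamma,\delta) = (k-1,1,k-2)$, noting that $\alpha = \gamma + \delta$ and $r\delta \ge \alpha$ once $r$ is large enough. The resulting upper bound for $\Ex{W^q}$ is of order $(n/j)^{q(k-2)/(k-1)}$ times $\prod_{i=0}^{q-1}(k + i(k-2))$, which in turn is $\Oh((k-2)^q q!)$ up to a polynomial-in-$q$ factor. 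Markov's inequality applied to $W^q$ then gives
$$\Pr{\deg(x) > k + q(n/j)^{(k-2)/(k-1)}} \le \frac{\Ex{W^q}}{\bigl(q(k-2)(n/j)^{(k-2)/(k-1)}\bigr)^q},$$
and the $(n/j)^{q(k-2)/(k-1)}$ and $(k-2)^q$ factors cancel, leaving $q!/q^q$ times a polynomial in $q$, which is $\Oh\!\left(q\sqrt{q}\exp(-q)\right)$ by Stirling.

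The only step requiring real care is the first, because the active/non-active dichotomy (absent from random $k$-trees) alters both the initial conditions of the urn and every entry of its replacement matrix; the new exponent $(k-2)/(k-1)$ in the tail bound---as against $(k-1)/k$ in Lemma~\ref{lem:largedegree}---is attributable directly to the ratio $\delta/\alpha$ changing accordingly. Once the urn is correctly identified, the remaining moment and Markov computations are essentially a transcription of the previous argument with the substitution $(k,k-1)\mapsto(k-1,k-2)$.
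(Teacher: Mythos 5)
Your proposal is correct and follows essentially the same route as the paper: the same urn $\polya\bigl(k,(j-1)(k-1),\begin{bmatrix}k-1&0\\ 1&k-2\end{bmatrix},n-j\bigr)$ for the active $k$-cliques containing $x$, the same identity $\deg(x)=k+(W-k)/(k-2)$, and the same $q$-th moment bound via Proposition~\ref{pro:moments} followed by Markov's inequality. In fact you spell out the derivation of the replacement matrix (the deactivation of the chosen clique shifting both diagonal entries down by one) more explicitly than the paper, which only gestures at ``an argument similar to the proof of Proposition~\ref{pro:degs}.''
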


\begin{proof}
Let $X=\polya\left(k, (k-1)(j-1), \begin{bmatrix}k-1&0\\ 1&k-2 \end{bmatrix}, n-j\right)$.
An argument similar to the proof of  Proposition~\ref{pro:degs} gives that the degree of $x$ in $A(n)$ is  distributed as $k + \left(X-k\right)/(k-2)$.
By Proposition~\ref{pro:moments},
$$\Ex{X^q} \le
(1+o(1)) \left( \frac{(k-1) (n-j)}{(k-1)j+1} \right)^{\frac{q(k-2)}{k-1}}\
 \prod_{i=0}^{q-1} \left( k + i (k-2) \right)
\le \left( \frac{n}{j} \right)^{\frac{q(k-2)}{k-1}} (k-2)^{q} (q+1)!
 \:.
$$
Thus,
\begin{align*}
\Pr{\deg(x) > k + q (n/j)^{(k-2)/(k-1)}} &=
\Pr{X-k > (k-2) q (n/j)^{(k-2)/(k-1)}} \\
& \le \frac{\Ex{X^q}}{\left((k-2)q (n/j)^{(k-2)/(k-1)}\right)^q} \\
& \le  (q+1)!q^{-q} = \Oh\left(q\sqrt{q}\exp(-q)\right) \:. \qedhere
\end{align*}
\end{proof}

Fix $k>2$ and let $f(n)=o(\log \log n)$ be an arbitrary function going to infinity with $n$, and
let
{$$m = \left \lceil \frac{n}{(\log n)^{2/(k-1)}f(n)^{(2k-2)/(k^2-2k)}} \right \rceil\:.$$}
Finally, let $q = \lceil 4 \log \log n \rceil$ and let
\begin{equation}
\label{tau_def2}
\tau = 2k + q (n/m)^{(k-2)/(k-1)} \:.
\end{equation}

An argument similar to the proof of Lemma~\ref{lem:log2} gives that whp a $k$-RAN on $n+k$ vertices has diameter $\Oh(\log n)$.
Theorem~\ref{thm:apollonian} thus follows from Lemma~\ref{lem:feige} and the following structural result, which we prove in the rest of this section.

\begin{lemma}
\label{thm:structural2}
Let $A$ be an $(n+k)$-vertex $k$-RAN.
Whp there exists  $\Sigma \subseteq V(A)$ satisfying the conditions of Lemma~\ref{lem:feige}
with $\tau$ defined in (\ref{tau_def2}) and $\chi = \Oh(\log n + \operatorname{diam}(A))$.
\end{lemma}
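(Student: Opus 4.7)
The plan is to mirror the five-step structure of the proof of Lemma~\ref{thm:structural}, adapting each ingredient to the $k$-Apollonian setting. As in Section~\ref{proofupper}, all randomness refers to rounds $m+1,\dots,n$, so I work with an instance $A_1$ of $A(m)$ satisfying the $k$-RAN analog of the $\mathsf{LOG}$ event. This event holds whp via the same auxiliary-tree argument as Proposition~\ref{pro:diameter}, but with branching factor $k-1$ in place of $k$ (Theorem~\ref{pro:drmota} still applies).

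First I would build a spanning forest $F$ of $A_1$ by joining each non-root vertex $x$ (born attaching to an active $k$-clique $C$) to the vertex $u \in V(C)$ maximizing $N_a(xu)$, where $N_a(e)$ counts the active $k$-cliques of $A_1$ containing $e$. The key combinatorial fact to prove is an analog of Lemma~\ref{lem:counting_new}: if $xy \in E(F)$, $x$ is born later than $y$, and $\deg_{A_1}(x) > 2k-2$, then $N_a(xy) \ge (k-1)^2/2$. The proof mimics Lemma~\ref{lem:counting_new}, tracking $\Psi_a = \sum_{j=1}^{k} N_a(xu_j)$: when $x$ is born, $\Psi_a$ grows by $k(k-1)$; when a later child $v_i$ of $x$ attaches to an active $k$-clique $C \ni x$, every $u_\ell \in V(C) \cap \{u_1,\dots,u_k\}$ contributes a net change of $(k-2)-1 = k-3$ to $N_a(xu_\ell)$ (the chosen $C$ becomes inactive while $k-2$ new active cliques contain $xu_\ell$). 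Summing over the $k-1$ children yields $\Psi_a \ge k(k-1)^2/2$, and pigeonhole produces the claimed bound.

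Next I would adapt Lemma~\ref{FastEdges} to show $p_S = o(1/(f(n)\log n))$. In Case~1 ($\deg_{A_1}(x) \le 2k-2$), the number of active $k$-cliques of $A_1$ containing $x$ is $\Oh(1)$, and the Pólya urn from Lemma~\ref{lem:largedegree2} with matrix $\begin{bmatrix}k-1 & 0 \\ 1 & k-2\end{bmatrix}$ combined with Proposition~\ref{pro:moments} gives $\Pr{\deg(x) > \tau} = \Oh(q\sqrt{q}\,e^{-q}) = o(1/(f(n)\log n))$, essentially repeating the calculation of Lemma~\ref{lem:largedegree2} with exponent $(k-2)/(k-1)$ in the tail. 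In Case~2 ($N_a(xy) \ge (k-1)^2/2$), the key observation is that \emph{as long as} no active $k$-clique containing $xy$ has ever been chosen, the count $N_a(xy)$ stays constant over time, so
$$\Pr{x,y\text{ have no modern common neighbor}} = \prod_{t=m}^{n-1}\Bigl(1 - \tfrac{N_a(xy)}{1+(k-1)t}\Bigr) \le \Bigl(\tfrac{m}{n}\Bigr)^{N_a(xy)/(k-1)} \le \Bigl(\tfrac{m}{n}\Bigr)^{(k-1)/2},$$
and our choice of $m$ makes this $o(1/(f(n)\log n))$ (the exponent $(k-1)^2/(k(k-2))$ of $f(n)$ in $n/m$ strictly exceeds $2/(k-1)$). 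Conditional on a modern common neighbor existing, a second appeal to Lemma~\ref{lem:largedegree2} bounds the probability that the earliest such neighbor has degree exceeding $\tau$ by $o(1/(f(n)\log n))$.

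The piece analysis and conclusion proceed as in Section~\ref{proofupper}: I enumerate the $1+(k-1)m$ active $k$-cliques of $A_1$ and grow each to a piece $H_j$. When an active clique of $H_j$ is chosen, all $k-1$ newly-created active cliques remain in $H_j$, so the number of vertices of $H_j$ is distributed as $k + (\polya(1,(k-1)m, k-1, n-m) - 1)/(k-1)$. A second-moment argument via Proposition~\ref{pro:polyasimple} controls the number of modern vertices lying in non-sparse pieces, and the expected number of bad pieces is bounded via $p_S$ and Lemma~\ref{lem:largedegree2}, exactly as in Lemma~\ref{lem:nonnice}; defining $\Sigma$ to be the nice modern vertices then concludes the proof as in Lemma~\ref{thm:structural}. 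The main obstacle is the Case~2 bookkeeping: because $k$-cliques can become inactive in the $k$-RAN process, one must track active rather than all $k$-cliques, yielding the weaker pigeonhole constant $(k-1)^2/2$ in place of the $k$-tree bound $k(k-1)/2$. This weaker constant is what forces the sharper choice of $m$ and ultimately yields the improved exponent $(k^2-3)/(k-1)^2 < 1 + 2/k$ in Theorem~\ref{thm:apollonian}.
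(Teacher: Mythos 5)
Your proposal is correct and follows essentially the same route as the paper: the same spanning forest built by maximizing the number of active $k$-cliques on an edge, the same analogue of Lemma~\ref{lem:counting_new} with the pigeonhole constant $(k-1)^2/2$, the same two-case bound on $p_S$, and the same piece/niceness analysis. The only slips are immaterial: the auxiliary tree in the $\mathsf{LOG}$ argument is a random $k$-ary (not $(k-1)$-ary) recursive tree, and the exponent of $f(n)$ in $n/m$ is $(2k-2)/(k^2-2k)$ rather than $(k-1)^2/(k(k-2))$, though the inequality you need holds either way.
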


The proof of Lemma~\ref{thm:structural2} is along the lines of that of Lemma~\ref{thm:structural}.
For the rest of this section, $A=A(n)$ is an $(n+k)$-vertex $k$-RAN.
Consider the graph $A(m)$, which has $k+m$ vertices and $m(k-1)+1$ active $k$-cliques.
For any edge $e$ of $A(m)$, let $\na(e)$ denote the number of active $k$-cliques of $A(m)$ containing  $e$. 
Note that, since $k>2$, for each edge $e$, the number of active $k$-cliques containing $e$ does not decrease as the $k$-RAN evolves.
We define a spanning forest $F$ of $A(m)$ as follows: 
at round 0, $F$ has $k$ isolated vertices, i.e.\ the vertices of $A(0)$;
then for every $1\le t\le m$,
if the vertex $x$ born in round $t$ is joined to the $k$-clique $C$,
then in $F$, $x$ is joined to a vertex $u \in V(C)$ such that
$$N^{\ast}(x u) = \max_{v\in V(C)}N^{\ast}(x v) \:.$$
Note that $F$ has $k$ trees and the $k$ vertices of $A(0)$ lie in distinct trees.
Let $\mathsf{LOG}$ denote the event `each tree in $F$ has height $\Oh(\log n)$.'

\begin{lemma}
\label{lem:log2}
With high probability $\mathsf{LOG}$ happens.
\end{lemma}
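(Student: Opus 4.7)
The plan is to mirror the proof of Proposition~\ref{pro:diameter} almost verbatim, taking advantage of the fact that the $k$-RAN process is, if anything, a cleaner fit to Theorem~\ref{pro:drmota} than the $k$-tree process was. First I would define \emph{draft} for vertices and $k$-cliques of $A(m)$ in exactly the same inductive way: the $k$ vertices of $A(0)$ and the clique they form have draft $0$; the draft of any $k$-clique is the maximum draft of its vertices; and whenever a new vertex is born and is joined to a $k$-clique $C$, its draft is $\operatorname{draft}(C)+1$. As before, if $xy$ is an edge of $A(m)$ with $x$ born later than $y$ then $\operatorname{draft}(x) \geq \operatorname{draft}(y) + 1$. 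Since every non-root edge of $F$ joins a vertex to an older neighbour, the depth of any vertex in its tree of $F$ is at most its draft, so it suffices to show that whp every $k$-clique of $A(m)$ has draft $\Oh(\log n)$ (in fact $\Oh(\log m)$, which is the same).

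Next I would set up an auxiliary rooted tree whose nodes are the $k$-cliques of $A(m)$ and whose edges record the clique-subdivision structure: start with a single node for the $k$-clique of $A(0)$, and whenever a vertex born in round $t$ is joined to the $k$-clique $C$, create $k$ new children of $C$ corresponding to the $k$ new $k$-cliques produced in that round. By construction, the depth of a node in this auxiliary tree equals the draft of the corresponding $k$-clique.

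The key observation, and the only real point where $k$-RANs behave differently from $k$-trees, is the following: in the $k$-RAN process the $k$-clique chosen in round $t$ must be active, i.e.\ it has not yet been used, which translates into saying that the chosen node of the auxiliary tree must currently be a leaf; after being chosen it becomes internal and receives $k$ new leaf children. Comparing this with Definition~\ref{def:randomrecursive}, we see that the auxiliary tree after $m$ rounds is distributed \emph{exactly} as a random $k$-ary recursive tree on $km+1$ vertices (as opposed to the $k$-tree case, where it was only stochastically dominated by one because internal nodes could be rechosen).

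Having made this identification, I would finish by invoking Theorem~\ref{pro:drmota} with $d=k$ and $\eta = C \log m$ for a sufficiently large constant $C$, which gives that the height of the auxiliary tree, and hence the maximum draft over all $k$-cliques of $A(m)$, is $\Oh(\log m) = \Oh(\log n)$ with probability $1 - o(1)$. Since each tree of $F$ is rooted at one of the $k$ vertices of $A(0)$ (whose cliques have draft $0$) and its height is bounded by the maximum draft, the event $\mathsf{LOG}$ holds whp. There is no serious obstacle here; the only thing to verify carefully is the exact match between the $k$-RAN evolution and the random $k$-ary recursive tree, which is precisely what the activity rule in Definition~\ref{def_k_apol} guarantees.
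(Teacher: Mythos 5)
Your proof is correct and follows essentially the same route as the paper: define the draft, build the auxiliary tree of $k$-cliques, observe that the activity rule makes this tree \emph{exactly} a random $k$-ary recursive tree (rather than merely stochastically dominated by one, as in the random $k$-tree case), and apply Theorem~\ref{pro:drmota}. The link you make explicit between the height of $F$ and the maximum draft is the same one the paper uses implicitly.
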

\begin{proof}
{We prove that whp every path 
$u_h u_{h-1} \cdots u_0$ in $A(n)$
such that $u_i$ is born later than $u_{i-1}$
for all $i$, has length $\Oh(\log n)$.}
The proof is very similar to that of Proposition~\ref{pro:diameter}, the only difference being that the built auxiliary tree is indeed a random $k$-ary recursive tree, whose height is whp $\Oh(\log n)$ by Theorem~\ref{pro:drmota}.
\end{proof}

We prove Lemma~\ref{thm:structural2} conditional on the event $\mathsf{LOG}$.
In fact, we prove it for any $A(m)$ that satisfies $\mathsf{LOG}$.
Let $A_1$ be an arbitrary instance of $A(m)$ that satisfies $\mathsf{LOG}$.
{So, $A_1$ and $F$ are fixed in the following, and}
all randomness refers to rounds $m+1,\dots,n$.
The following deterministic lemma
will be used in the proof of Lemma~\ref{FastEdges2}.

\begin{lemma}
\label{lem:counting_new2}
Assume that $x y \in E(F)$ and $x$ is born later than $y$.
If the degree of $x$ in $A_1$ is at least $2k-1$, then $\na(xy) \ge {(k-1)^2/2}$.
\end{lemma}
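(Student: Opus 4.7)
The plan is to adapt the counting argument of Lemma~\ref{lem:counting_new} to the $k$-RAN setting, with ``number of $k$-cliques'' replaced by ``number of \emph{active} $k$-cliques.'' Let $u_1,\dots,u_k$ be the vertices of the $k$-clique to which $x$ attached at birth; by the definition of $F$ the endpoint $y$ lies in $\{u_1,\dots,u_k\}$ and maximizes $\na(xu_j)$ over $j$, so, setting $\Psi:=\sum_{j=1}^{k}\na(xu_j)$, it suffices by the pigeonhole principle to prove $\Psi\ge k(k-1)^2/2$.

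I would track $\Psi$ through the rounds from $x$'s birth up to round $m$. At $x$'s birth, the chosen clique $\{u_1,\dots,u_k\}$ becomes inactive (it does not contain $x$, so no contribution), and $k$ new active $k$-cliques are created, each containing $x$ and exactly $k-1$ of the $u_j$'s; hence $\Psi$ starts at $k(k-1)$. Any subsequent round whose chosen $k$-clique avoids $x$ leaves $\Psi$ unchanged. The delicate case is when some later vertex $w$ attaches to an active clique $\{x\}\cup S$ containing $x$, which is exactly when $x$ gains a new neighbor. Writing $s:=|S\cap\{u_1,\dots,u_k\}|$, deactivation of $\{x\}\cup S$ removes $s$ from $\Psi$, and among the $k$ new active cliques one omits $x$ (no contribution), $s$ of them omit a vertex of $S\cap\{u_1,\dots,u_k\}$ (each then contributes $s-1$), and $k-1-s$ omit a non-$u$ vertex of $S$ (each contributes $s$); the total creation contribution is $s(s-1)+(k-1-s)s=s(k-2)$, giving a net change of $s(k-3)\ge0$ since $k\ge3$.

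Because $\deg_{A_1}(x)\ge 2k-1$, the vertex $x$ has at least $k-1$ neighbors in $A_1$ born after itself; let $v_1,\dots,v_{k-1}$ be the first $k-1$ of them, in birth order, and let $s_i$ be the value of $s$ at $v_i$'s birth. At that moment $N(x)=\{u_1,\dots,u_k,v_1,\dots,v_{i-1}\}$, and the $k-1$ non-$x$ vertices of $v_i$'s attached clique all lie in $N(x)$; since at most $i-1$ of them can be $v_\ell$'s, at least $k-i$ are $u_j$'s, giving $s_i\ge k-i$. Combining this with nonnegativity of the residuals from any additional new-neighbor events,
\[
\Psi \ge k(k-1) + (k-3)\sum_{i=1}^{k-1}s_i \ge k(k-1) + (k-3)\cdot\frac{k(k-1)}{2} = \frac{k(k-1)^2}{2},
\]
so $\max_j \na(xu_j)\ge (k-1)^2/2$ by pigeonhole, as required.

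I expect the main obstacle will be the case analysis at each new-neighbor event; in particular, one must correctly partition the $k$ created active cliques by which vertex of $\{x\}\cup S$ is replaced by $w$, noting that exactly $s$ of them lose a $u_j$. A comforting sanity check is that the bound is already tight when $k=3$ (where the residuals vanish), since then the initial contribution $k(k-1)=6$ already equals $k(k-1)^2/2$, while for $k\ge 4$ the lower bounds $s_i\ge k-i$ supply exactly the extra $(k-3)\cdot k(k-1)/2$ needed.
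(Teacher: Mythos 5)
Your proof is correct and follows essentially the same route as the paper's: both define $\Psi=\sum_j\na(xu_j)$, start it at $k(k-1)$ at $x$'s birth, show each new-neighbor event contributes a net $s(k-3)\ge 0$ with $s_i\ge k-i$ at the birth of $v_i$, sum to get $\Psi\ge k(k-1)^2/2$, and finish by pigeonhole. Your explicit partition of the $k$ newly created active cliques at each event is a slightly more careful rendering of the paper's per-$u_j$ accounting ($k-2$ created minus $1$ deactivated), and your observation that every step changes $\Psi$ nonnegatively cleanly replaces the paper's closing appeal to monotonicity of $\na(xu_\ell)$.
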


\begin{proof}
Assume that $x$ is joined to $u_1,\dots,u_k$ when it is born, and that $v_1,v_2,\dots,v_{k-1},\dots$ are the neighbors of $x$ that are born later than $x$,
in the order of birth.
Let $\Psi$ denote the number of pairs $(u_j,C)$, where $C$ is an active $k$-clique in $A_1$ with $x u_j \subseteq E(C)$.
Consider the round in which vertex $x$ is born and is joined to $u_1,\dots,u_k$.
{For every $j\in\{1,\dots,k\}$, the edge $xu_j$ is contained in $k-1$ new active $k$-cliques,
so in this round $\Psi$ increases by $k(k-1)$.}
For each $i\in\{1,\dots, k-1\}$, consider the round in which vertex $v_i$ is born.
%This vertex is joined to $x$ and $k-1$ neighbors of $x$.
%At this round $x$ has neighbor set $\{u_1,\dots,u_k,v_1,\dots,v_{i-1}\}$.
At least $k-i$ of the $u_j$'s are joined to $v_i$ in this round.
Each vertex $u_j$ that is joined to $v_i$ in this round is contained in $k-2$ new $k$-cliques that contain $x$,
and one $k$-clique containing $u_j$ is deactivated.
Hence in this round $\Psi$ increases by at least $(k-i)(k-3)$.
Consequently, right after $v_{k-1}$ is born, we have
{$$\Psi \ge k(k-1) + \sum_{i=1}^{k-1} (k-i)(k-3) = (k-1)^2k/2 \:.$$}
By the pigeonhole principle, there exists some $\ell\in\{1,\dots, k\}$ such that
the edge $x u_{\ell}$ is contained in at least ${(k-1)^2/2}$ active $k$-cliques, and 
this completes the proof,
as the number of active $k$-cliques containing $x u_{\ell}$ will not decrease later.
\end{proof}

A vertex of $A$ is called \emph{modern} if it is born later than the end of round $m$,
and is called \emph{traditional} otherwise.
In other words, vertices of $A_1$ are traditional and vertices of $A-A_1$ are modern.
We say edge $uv\in E(A)$ is \emph{fast} if at least one of the following is true:
$\deg(u)\le \tau$, or $\deg(v)\le \tau$, or $u$ and $v$ have a common neighbor $w$ with $\deg(w)\le \tau$.
For an edge $uv \in E(F)$, let $p_S(uv)$ denote the probability that $uv$ is not fast,
and let $p_S$ denote the maximum of $p_S$ over all edges of $F$.

\begin{lemma}\label{FastEdges2}
We have $p_S  = o ( 1 / (f(n) \log n))$.
\end{lemma}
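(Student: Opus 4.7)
The plan is to mirror the proof of Lemma~\ref{FastEdges}, using Lemmas~\ref{lem:counting_new2} and~\ref{lem:largedegree2} in place of their $k$-tree counterparts. Fix $xy \in E(F)$ and assume by symmetry that $x$ is born later than $y$. By Lemma~\ref{lem:counting_new2}, either (Case~1) $x$ has fewer than $2k-1$ neighbours in $A_1$, or (Case~2) $\na(xy) \ge (k-1)^2/2$. I will show that the probability $xy$ fails to be fast is $o(1/(f(n)\log n))$ in each case, which immediately yields the claim.

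Case~1 is essentially a re-run of the proof of Lemma~\ref{lem:largedegree2} with $A_1$ as starting state instead of $A(j)$. In this case $x$ lies in at most a constant (depending only on $k$) number $B$ of active $k$-cliques of $A_1$, because any vertex with at most $2k-2$ neighbours lies in at most $\binom{2k-2}{k-1}$ $k$-cliques. An argument analogous to Proposition~\ref{pro:degs} shows that the degree of $x$ in $A(n)$, conditional on $A_1$, is distributed as $k + (X - B)/(k-2)$, where $X$ is the generalized P\'olya urn of Lemma~\ref{lem:largedegree2} with initial state $(B, (k-1)m+1-B)$ run for $n-m$ steps. Applying Proposition~\ref{pro:moments} with $r=q=\lceil 4\log\log n\rceil$ to bound $\Ex{X^q}$ and then Markov's inequality, I obtain $\Pr{\deg(x) > \tau} = \Oh(q^{C+1/2}e^{-q})$ for some constant $C=C(k)$, which is $o(1/(f(n)\log n))$ since $f(n) = o(\log\log n)$.

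In Case~2, the edge $xy$ is fast provided some common neighbour of $x$ and $y$ has degree at most $\tau$, so I split into two subevents: (a) no modern common neighbour of $x$ and $y$ exists, and (b) at least one exists but the earliest-born such neighbour $w$ has $\deg(w) > \tau$. For~(a), observe that in the $k$-RAN process a modern common neighbour of $x$ and $y$ appears exactly when some active $k$-clique containing the edge $xy$ is chosen in some round $t > m$. Crucially, conditional on no such clique ever being chosen, the number of active $k$-cliques containing $xy$ stays equal to $\na(xy)$ throughout rounds $m+1,\dots,n$, while the total number of active $k$-cliques grows deterministically by $k-1$ per round. Hence the probability of~(a) equals $\Pr{\polya(\na(xy), m(k-1)+1 - \na(xy), k-1, n-m) = \na(xy)}$, which by Proposition~\ref{pro:polyasimple} is at most $((m+1)/(n+1))^{(k-1)/2}$; plugging in the chosen $m$ this is $\Oh((\log n)^{-1} f(n)^{-(k-1)^2/(k(k-2))})$, and since $(k-1)^2 > k(k-2)$ this is $o(1/(f(n)\log n))$. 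For~(b), since $w$ is born strictly after round $m$, Lemma~\ref{lem:largedegree2} immediately yields $\Pr{\deg(w) > \tau} = \Oh(q\sqrt{q}e^{-q}) = o(1/(f(n)\log n))$.

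The main conceptual obstacle sits in Case~2: in the $k$-RAN model active $k$-cliques get deactivated as the process runs, so a priori the count of active $k$-cliques containing $xy$ is harder to track than the corresponding count of all $k$-cliques in the $k$-tree setting of Lemma~\ref{FastEdges}. What rescues the argument is that I only need to bound the probability of the event ``no such active clique is ever chosen'': conditional on this event the count is pinned at $\na(xy)$, reducing~(a) to a standard P\'olya urn. The exponent $2/(k-1)$ of $\log n$ and the exponent $(2k-2)/(k^2-2k)$ of $f(n)$ in the choice of $m$ are both calibrated exactly so that the resulting bound $((m+1)/(n+1))^{(k-1)/2}$ just clears the threshold $o(1/(f(n)\log n))$.
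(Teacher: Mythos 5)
Your proof is correct and follows essentially the same route as the paper's: the same two-case split via Lemma~\ref{lem:counting_new2}, the same moment-plus-Markov bound in Case~1, and the same P\'olya-urn computation for the no-modern-common-neighbour event in Case~2, including the key observation that conditioning on that event pins the active-clique count at $\na(xy)$ while the total grows by $k-1$ per round. (One small nit: in Case~1 the degree of $x$ in $A(n)$ is $D + (X-B)/(k-2)$ with $D\le 2k-2$ the number of neighbours of $x$ in $A_1$, not $k + (X-B)/(k-2)$; this does not affect the tail bound since $\tau = 2k + q(n/m)^{(k-2)/(k-1)}$ absorbs the constant.)
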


\begin{proof}
The proof is similar to that of Lemma~\ref{FastEdges}.
Let $xy \in E(F)$ be arbitrary.
By symmetry we may assume that $x$ is born later than $y$.
By Lemma~\ref{lem:counting_new2}, at least one of the
following is true: vertex $x$ has less than $2k-1$ neighbors in $A_1$,
or 
{$\na(xy) \ge (k-1)^2/2$}.
So we may consider two cases.
  \begin{itemize}
\item 
Case 1:  vertex $x$  has less than $2k-1$ neighbors in $A_1$.
In this case  $x$ lies in at most $k+(k-2)^2$ many active $k$-cliques of $A_1$.
Suppose  that $x$ has $D$ neighbors in $A_1$ and lies in $B$ many active $k$-cliques in $A_1$.
Let
$$X = \polya\left(B,(k-1)m+1-B,\begin{bmatrix}k-1&0\\ 1&k-2 \end{bmatrix},n-m\right)\:.$$
Then by an argument similar to the proof of Proposition~\ref{pro:degs}, the degree of $x$ is distributed as
$D + \left( X - B\right)/(k-2)$.
By Proposition~\ref{pro:moments},
\begin{align*}
\Ex{X^q} & \le
 (1+o(1))
 \left( \frac{(k-1) (n-m)}{(k-1)m+1} \right)^{\frac{q(k-2)}{k-1}}
 \prod_{i=0}^{q-1} \left( B + i (k-2) \right) \\
&\le\Oh\left( \left( \frac{n}{m} \right)^{\frac{q(k-2)}{k-1}} (k-2)^q (k+q)!   \right) \:,
\end{align*}
where we have used $B \le k(k-2)$.
Therefore,
\begingroup
\addtolength{\jot}{1em}
\begin{align*}
& \Pr{\deg(x) > 2k + q (n/m)^{\frac{k-2}{k-1}}}  \le \Pr{X \ge (k-2)q (n/m)^{\frac{k-2}{k-1}}} \\
& \le \frac{\Ex{X^q}}{(k-2)^qq^q (n/m)^{\frac{q(k-2)}{k-1}}} = \Oh \left( \frac{(k+q)!}{q^q} \right) = o \left( \frac{1}{f(n) \log n}\right) \:.
\end{align*}
\endgroup

\item Case 2: $\na(xy) \ge {(k-1)^2/2}$.
In this case we bound from below the probability that there exists a modern vertex $w$
that is adjacent to $x$ and $y$ and has degree at most  $\tau$.
We first bound from above the probability that $x$ and $y$ have no modern common neighbors.
For this to happen, none of the $k$-cliques containing $x$ and $y$ must be chosen in rounds $m+1,\dots,n$.
This probability equals $$
p := \Pr{\polya(\na(xy),m(k-1)+1-\na(xy),k-1,n-m)={\na(xy)}} \:.$$
Since $\na(xy) \ge {(k-1)^2/2}$, by Proposition~\ref{pro:polyasimple} we have
\begin{equation*}
p \le \left( \frac{m+1}{n}\right)^{{{(k-1)}}/{2}}
= o \left( \frac{1}{f(n) \log n}\right) \:.
\end{equation*}

Now, assume that $x$ and $y$ have a modern common neighbor $w$.
If there are multiple such vertices, choose the one that is born first.
Since $w$ appears later than round $m$, by
Lemma~\ref{lem:largedegree2},
$$\Pr{\deg(w) > k + q (n/m)^{(k-2)/(k-1)}} =\Oh\left( q\sqrt{q}\exp(-q) \right) =o \left( \frac{1}{f(n) \log n}\right) \:.\qedhere
$$
\end{itemize}
\end{proof}

Enumerate the $k$-cliques of $A_1$ as $C_1$, $C_2,$ $\dots,$ and $C_{m(k-1)+1}$.
Then choose $r_1\in C_1,\dots,$ $r_{m(k-1)+1}\in C_{m(k-1)+1}$ arbitrarily, and call them the \emph{representative vertices}.
Starting from $A_1$, when modern vertices are born in rounds $m+1,\dots,n$ until $A$ is formed,
every clique $C_i$ `grows' to a $k$-RAN with a random number of vertices, which is a subgraph of $A$.
Enumerate these subgraphs as $H_1,\dots,H_{m(k-1)+1}$, and call them the \emph{pieces}.
More formally, $H_1,\dots,H_{m(k-1)+1}$ are induced subgraphs of $A$ such that
a vertex $v$ is in $V(H_j)$ if and only if every path connecting $v$ to a traditional vertex intersects $V(C_j)$.
%In particular, $V(C_j)\subseteq V(H_j)$ for all $j\in\{1,\dots,mk+1\}$.
%Note that the $H_j$'s may intersect as a traditional vertex may lie in more than one $C_j$,however every modern vertex lies in a unique piece.

A traditional vertex is called \emph{nice} if it is connected to some vertex in $A(0)$ via a path of fast edges.
Since $F$ has height $\Oh(\log n)$ and each edge of $F$ is fast with probability at least $1-p_S$,
the probability that a given traditional vertex is not nice is $\Oh(p_S \log n)$ by the union bound.
A piece $H_j$ is called \emph{nice} if all its modern vertices have degrees at most $\tau$,
and the vertex $r_j$ is nice.
A modern vertex is called \emph{nice} if it lies in a nice piece.
A vertex/piece is called \emph{bad} if it is not nice.

\begin{lemma}
\label{lem:nonnice2}
The expected number of bad vertices is $o(n)$.
\end{lemma}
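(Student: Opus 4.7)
The plan is to mirror the proof of Lemma~\ref{lem:nonnice} with adjustments specific to the $k$-Apollonian process. Since the total number of traditional vertices is $k+m = o(n)$, I will ignore them at the outset and concentrate on counting bad modern vertices. I set $\eta = n f(n)/m$, which is polylogarithmic in $n$ for the value of $m$ chosen in this section, and declare a piece \emph{sparse} if it contains at most $\eta + k$ vertices. The argument will split into two bounds: modern vertices in non-sparse pieces (small by a second-moment argument) and modern vertices in sparse bad pieces (bounded by $\eta$ times the expected number of bad pieces).

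For the first bound, the key structural observation I will establish is that inside a single piece $H_j$ the number of active $k$-cliques evolves as a P\'{o}lya--Eggenberger urn. At the end of round $m$ the piece contains exactly one active $k$-clique (namely $C_j$) among a total of $(k-1)m+1$ active cliques, and whenever an active clique inside $H_j$ is chosen in a later round, the piece gains one new vertex while its count of active cliques grows by $k-1$ (one is deactivated and $k$ new ones are created); selecting a clique outside $H_j$ leaves the piece untouched. Hence the number of modern vertices in $H_j$ is distributed as $(\polya(1,(k-1)m,k-1,n-m) - 1)/(k-1)$, whose second moment Proposition~\ref{pro:polyasimple} bounds by $\Oh(n^2/m^2)$. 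A dyadic tail argument identical in form to the one used in Lemma~\ref{lem:nonnice} will then yield that the expected number of modern vertices lying in non-sparse pieces is $\Oh(n^2/(m\eta)) = \Oh(n/f(n)) = o(n)$.

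For the second bound, I will estimate the expected number of bad sparse pieces by a union bound over the two failure modes and then multiply by $\eta$. First, the representative $r_j$ may itself be a bad traditional vertex; since $F$ has height $\Oh(\log n)$ under $\mathsf{LOG}$ and each edge of $F$ is fast with probability at least $1 - p_S$, each representative is bad with probability $\Oh(p_S \log n)$, which is $o(1/f(n))$ by Lemma~\ref{FastEdges2}, contributing $o(m/f(n)) = o(n/\eta)$ bad pieces in expectation over the $\Oh(m)$ pieces. Second, some modern vertex in $H_j$ may have degree greater than $\tau$; applying Lemma~\ref{lem:largedegree2} with $q = \lceil 4 \log \log n \rceil$ and the choice of $\tau$ in~(\ref{tau_def2}) gives per-vertex probability $\Oh(q\sqrt{q}\exp(-q)) = o(1/\log^3 n)$, and since each modern vertex lies in a unique piece this contributes at most $o(n/\log^3 n) = o(n/\eta)$ bad pieces in expectation. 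Multiplying the resulting $o(n/\eta)$ bad pieces by $\eta$ gives $o(n)$ bad modern vertices in sparse pieces.

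The step I expect to require the most care is the first: correctly identifying the urn that governs the growth of a single piece. In the $k$-tree setting of Lemma~\ref{lem:nonnice} one obtains a $\polya(1, k m, k, \cdot)$ urn because all $k$-cliques are treated symmetrically under selection; in the Apollonian setting one must restrict to \emph{active} $k$-cliques and verify that the deactivation-then-creation step still yields genuine P\'{o}lya dynamics, now with parameter $k-1$ in place of $k$. Once this identification is in place, the remainder of the proof is essentially a line-by-line reprise of the $k$-tree case, with the revised exponent $(k-2)/(k-1)$ in $\tau$ absorbed by Lemma~\ref{lem:largedegree2} in place of Lemma~\ref{lem:largedegree}, and with the replacement of $mk+1$ pieces by $m(k-1)+1$ pieces being immaterial to the asymptotics.
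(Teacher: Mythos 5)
Your proposal is correct and follows exactly the route the paper intends: the paper's own proof of this lemma is a one-line reference back to Lemma~\ref{lem:nonnice}, substituting Lemmas~\ref{lem:largedegree2} and~\ref{FastEdges2} for Lemmas~\ref{lem:largedegree} and~\ref{FastEdges}. You have simply written out the details, and your identification of the piece-growth urn as $\polya(1,(k-1)m,k-1,n-m)$ (active cliques only, with increment $k-1$ from the deactivate-then-create step) is precisely the adjustment the paper leaves implicit.
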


\begin{proof}
The proof is very similar to that of Lemma~\ref{lem:nonnice},
except we use 
Lemmas~\ref{lem:largedegree2}
and~\ref{FastEdges2}
instead of
Lemmas~\ref{lem:largedegree}
and~\ref{FastEdges}, respectively.
\end{proof}

The proof of Lemma~\ref{thm:structural2}
is exactly the same as that of Lemma~\ref{thm:structural},
except we use
Lemmas~\ref{FastEdges2} and~\ref{lem:nonnice2}
instead of
Lemmas~\ref{FastEdges} and~\ref{lem:nonnice},
respectively.
This concludes the proof of Theorem~\ref{thm:apollonian}.

\bibliographystyle{plain}
\bibliography{rumorspreading}

\begin{thebibliography}{10}

\bibitem{BEF08}
P.~Berenbrink, R.~Els{\"a}sser, and T.~Friedetzky.
\newblock Efficient randomised broadcasting in random regular networks with
  applications in peer-to-peer systems.
\newblock In {\em Proc.~27th Symp.\ Principles of Distributed Computing
  (PODC)}, pages 155--164, 2008.

\bibitem{BBCS05}
N.~Berger, C.~Borgs, J.T. Chayes, and A.~Saberi.
\newblock On the spread of viruses on the {I}nternet.
\newblock In {\em Proc.~16th Symp.\ Discrete Algorithms (SODA)}, pages
  301--310, 2005.

\bibitem{Boyd2006}
Stephen Boyd, Arpita Ghosh, Balaji Prabhakar, and Devavrat Shah.
\newblock Randomized gossip algorithms.
\newblock {\em IEEE Transactions on Information Theory}, 52(6):2508--2530,
  2006.

\bibitem{CHKM12}
Keren Censor-Hillel, Bernhard Haeupler, Jonathan~A. Kelner, and Petar
  Maymounkov.
\newblock Global computation in a poorly connected world: fast rumor spreading
  with no dependence on conductance.
\newblock In {\em 44th Symp.\ Theory of Computing (STOC)}, pages 961--970,
  2012.

\bibitem{CLP09}
Flavio Chierichetti, Silvio Lattanzi, and Alessandro Panconesi.
\newblock Rumor spreading in social networks.
\newblock In {\em Proc.~36th Intl.\ Coll.\ Automata, Languages and Programming
  (ICALP)}, pages 375--386, 2009.

\bibitem{CLV03}
Fan R.~K. Chung, Linyuan Lu, and Van~H. Vu.
\newblock The spectra of random graphs with given expected degrees.
\newblock {\em Internet Mathematics}, 1(3):257--275, 2003.

\bibitem{CF13}
Colin Cooper and Alan Frieze.
\newblock The height of random {$k$}-trees and related branching processes.
\newblock {\em arXiv}, 1309.4342v2 [math.CO], 2013.

\bibitem{CU10}
Colin Cooper and Ryuhei Uehara.
\newblock Scale free properties of random {$k$}-trees.
\newblock {\em Mathematics in Computer Science}, 3(4):489--496, 2010.

\bibitem{DGH+87}
A.~Demers, D.~Greene, C.~Hauser, W.~Irish, J.~Larson, S.~Shenker, H.~Sturgis,
  D.~Swinehart, and D.~Terry.
\newblock Epidemic algorithms for replicated database maintenance.
\newblock In {\em Proc.~6th Symp.\ Principles of Distributed Computing (PODC)},
  pages 1--12, 1987.

\bibitem{DFF11}
Benjamin Doerr, Mahmoud Fouz, and Tobias Friedrich.
\newblock Social networks spread rumors in sublogarithmic time.
\newblock In {\em Proc.~43th Symp.\ Theory of Computing (STOC)}, pages 21--30,
  2011.

\bibitem{DFF12}
Benjamin Doerr, Mahmoud Fouz, and Tobias Friedrich.
\newblock Asynchronous rumor spreading in preferential attachment graphs.
\newblock In {\em Proc.~13th Scandinavian Workshop Algorithm Theory (SWAT)},
  pages 307--315, 2012.

\bibitem{DFF12b}
Benjamin Doerr, Mahmoud Fouz, and Tobias Friedrich.
\newblock Why rumors spread so quickly in social networks.
\newblock {\em Commun. ACM}, 55(6):70--75, 2012.

\bibitem{randomtrees}
Michael Drmota.
\newblock {\em Random trees: An interplay between combinatorics and
  probability}.
\newblock SpringerWienNewYork, Vienna, 2009.

\bibitem{comcom}
Robert Els\"{a}sser.
\newblock On the communication complexity of randomized broadcasting in
  random-like graphs.
\newblock In {\em Proceedings of the 18th ACM Symposium on Parallelism in
  Algorithms and Architectures}, SPAA'06, pages 148--157, 2006.

\bibitem{FPRU90}
Uriel Feige, David Peleg, Prabhakar Raghavan, and Eli Upfal.
\newblock Randomized broadcast in networks.
\newblock {\em Random Struct. Algorithms}, 1(4):447--460, 1990.

\bibitem{exactly_solvable}
Philippe Flajolet, Philippe Dumas, and Vincent Puyhaubert.
\newblock Some exactly solvable models of urn process theory.
\newblock In {\em 4th {C}olloquium on {M}athematics and {C}omputer {S}cience
  {A}lgorithms, {T}rees, {C}ombinatorics and {P}robabilities}, Discrete Math.
  Theor. Comput. Sci. Proc., AG, pages 59--118. Assoc. Discrete Math. Theor.
  Comput. Sci., Nancy, 2006.

\bibitem{FHP10}
Nikolaos Fountoulakis, Anna Huber, and Konstantinos Panagiotou.
\newblock Reliable broadcasting in random networks and the effect of density.
\newblock In {\em Proc.~29th IEEE Conf.\ Computer Communications (INFOCOM)},
  pages 2552--2560, 2010.

\bibitem{FP10}
Nikolaos Fountoulakis and Konstantinos Panagiotou.
\newblock Rumor spreading on random regular graphs and expanders.
\newblock In {\em Proc.~14th Intl.\ Workshop on Randomization and Comput.\
  (RANDOM)}, pages 560--573, 2010.

\bibitem{FPS12}
Nikolaos Fountoulakis, Konstantinos Panagiotou, and Thomas Sauerwald.
\newblock Ultra-fast rumor spreading in social networks.
\newblock In {\em 23th Symp.\ Discrete Algorithms (SODA)}, pages 1642--1660,
  2012.

\bibitem{Ga09}
Yong Gao.
\newblock The degree distribution of random k-trees.
\newblock {\em Theor. Comput. Sci.}, 410(8-10):688--695, 2009.

\bibitem{Ga12}
Yong Gao.
\newblock Treewidth of {E}rd{\H{o}}s-{R}\'{e}nyi random graphs, random
  intersection graphs, and scale-free random graphs.
\newblock {\em Discrete Applied Mathematics}, 160(4-5):566--578, 2012.

\bibitem{G13}
G.~Giakkoupis.
\newblock Tight bounds for rumor spreading with vertex expansion.
\newblock In {\em Proc.~25th Symp.\ Discrete Algorithms (SODA)}, pages
  801--815, 2014.

\bibitem{Gia11}
George Giakkoupis.
\newblock {Tight bounds for rumor spreading in graphs of a given conductance}.
\newblock In {\em 28th International Symposium on Theoretical Aspects of
  Computer Science (STACS 2011)}, volume~9, pages 57--68, 2011.

\bibitem{Harchol-Balter1999}
Mor Harchol-Balter, Frank~Thomson Leighton, and Daniel Lewin.
\newblock Resource discovery in distributed networks.
\newblock In {\em Proc.~18th Symp.\ Principles of Distributed Computing
  (PODC)}, pages 229--237, 1999.

\bibitem{broadcast_survey}
Sandra~M. Hedetniemi, Stephen~T. Hedetniemi, and Arthur~L. Liestman.
\newblock A survey of gossiping and broadcasting in communication networks.
\newblock {\em Networks}, 18(4):319--349, 1988.

\bibitem{urn_models_application}
Norman~L. Johnson and Samuel Kotz.
\newblock {\em Urn models and their application: An approach to modern discrete
  probability theory, Wiley Series in Probability and Mathematical Statistics}.
\newblock John Wiley \& Sons, New York-London-Sydney, 1977.

\bibitem{KSSV00}
R.~Karp, C.~Schindelhauer, S.~Shenker, and B.~V\"ocking.
\newblock Randomized {R}umor {S}preading.
\newblock In {\em 41st Symp.\ Foundations of Computer Science (FOCS)}, pages
  565--574, 2000.

\bibitem{KDG03}
David Kempe, Alin Dobra, and Johannes Gehrke.
\newblock Gossip-based computation of aggregate information.
\newblock In {\em 44th Symp.\ Foundations of Computer Science (FOCS)}, pages
  482--491, 2003.

\bibitem{Kl94}
T.~Kloks.
\newblock {\em Treewidth: Computations and Approximations}.
\newblock Springer-Verlag, 1994.

\bibitem{gamma_inequalities}
A.~Laforgia.
\newblock Further inequalities for the gamma function.
\newblock {\em Mathematics of Computation}, 42(166):597--600, 1984.

\bibitem{urns_mahmoud}
Hosam~M. Mahmoud.
\newblock P\'{o}lya urn models and connections to random trees: A review.
\newblock {\em Journal of the Iranian Statistical Society}, 2(1):53--114, 2003.

\bibitem{MPS03}
Milena Mihail, Christos~H. Papadimitriou, and Amin Saberi.
\newblock On certain connectivity properties of the internet topology.
\newblock In {\em Proc.~44th Symp.\ Foundations of Computer Science (FOCS)},
  pages 28--35, 2003.

\bibitem{rand_algs}
R.~Motwani and P.~Raghavan.
\newblock {\em Randomized algorithms}.
\newblock Cambridge University Press, Cambridge, 1995.

\bibitem{RANs_powerlaw}
M.~Mungan.
\newblock Comment on ``apollonian networks: Simultaneously scale-free, small
  world, {E}uclidean, space filling, and with matching graphs''.
\newblock {\em Phys. Rev. Lett.}, 106:029802, Jan 2011.

\bibitem{finest}
K.~Panagiotou, X.~P\'erez-Gim\'enez, T.~Sauerwald, and H.~Sun.
\newblock Randomized rumour spreading: The effect of the network topology.
\newblock {\em Combinatorics, Probability and Computing}, FirstView:1--23, 5
  2014.

\bibitem{WS98}
D.~J. Watts and D.~H. Strogatz.
\newblock Collective dynamics of `small-world' networks.
\newblock {\em Nature}, 393:440--442, 1998.

\bibitem{high_RANs}
Z.~Zhang, F.~Comellas, G.~Fertin, and L.~Rong.
\newblock High-dimensional {A}pollonian networks.
\newblock {\em J. Phys. A}, 39(8):1811--1818, 2006.

\bibitem{define_RANs}
T.~Zhou, G.~Yan, and B.-H. Wang.
\newblock Maximal planar networks with large clustering coefficient and
  power-law degree distribution.
\newblock {\em Phys. Rev. E}, 71:046141, Apr 2005.

\end{thebibliography}

\end{document}